\theoremstyle{plain}
\newtheorem{theorem}{Theorem}[section]
\newtheorem{lemma}[theorem]{Lemma}
\newtheorem{proposition}[theorem]{Proposition}
\newtheorem{corollary}[theorem]{Corollary}
\newtheorem{claim}[theorem]{Claim}
\theoremstyle{definition}
\newtheorem{definition}[theorem]{Definition}
\theoremstyle{remark}
\newtheorem{remark}[theorem]{Remark}
\newcommand{\yes}{\textsf{YES}}
\newcommand{\no}{\textsf{NO}}
\newcommand{\PSPACE}{\mathsf{PSPACE}}
\newcommand{\NP}{\mathsf{NP}}
\newcommand{\Z}{\mathbb{Z}}
\newcommand{\R}{\mathbb{R}}
\newcommand{\ang}[1]{\langle#1\rangle}
\newcommand{\ab}{\mathrm{ab}}
\newcommand{\DPR}{\textsc{Disjoint Paths Reconfiguration}\xspace}
\newcommand{\stDPR}{\textsc{Disjoint $s$-$t$ Paths Reconfiguration}\xspace}
\newcommand{\bd}{\partial}
\title{%
Rerouting Planar Curves and Disjoint Paths\thanks{%
This work was supported by 
JSPS KAKENHI Grant Numbers JP18H05291, JP20K11692, JP20K14317, JP20H05793, JP20H05795, JP20K23323, JP21H03397, JP22K17854.}
}
\author{Takehiro Ito%
\thanks{Graduate School of Information Sciences, Tohoku University, Japan, {\tt takehiro@tohoku.ac.jp}}
\and 
Yuni Iwamasa%
\thanks{Graduate School of Informatics, Kyoto University, Japan, {\tt iwamasa@i.kyoto-u.ac.jp}}
\and 
Naonori Kakimura%
\thanks{Faculty of Science and Technology, Keio University, Japan, {\tt kakimura@math.keio.ac.jp}}
\and 
Yusuke Kobayashi%
\thanks{Research Institute for Mathematical Sciences, Kyoto University, Japan, {\tt yusuke@kurims.kyoto-u.ac.jp}}
\and 
Shun-ichi Maezawa%
\thanks{Department of Mathematics, Tokyo University of Science, Japan, {\tt maezawa.mw@gmail.com}}
\and
Yuta Nozaki%
\thanks{Graduate School of Advanced Science and Engineering, Hiroshima University, Japan, {\tt nozakiy@hiroshima-u.ac.jp}}
\and 
Yoshio Okamoto%
\thanks{Graduate School of Informatics and Engineering, The University of Electro-Communications, Japan, {\tt okamotoy@uec.ac.jp}}
\and
Kenta Ozeki%
\thanks{Faculty of Environment and Information Sciences, Yokohama National University, Japan, {\tt ozeki-kenta-xr@ynu.ac.jp}}
}
\date{}
\begin{document}

\maketitle

\begin{abstract}
In this paper, we consider a transformation of $k$ disjoint paths in a graph. For a graph and a pair of $k$ disjoint paths $\mathcal{P}$ and $\mathcal{Q}$ connecting the same set of terminal pairs, we aim to determine whether $\mathcal{P}$ can be transformed to $\mathcal{Q}$ by repeatedly replacing one path with another path so that the intermediates are also $k$ disjoint paths. The problem is called \textsc{Disjoint Paths Reconfiguration}. We first show that \textsc{Disjoint Paths Reconfiguration} is $\PSPACE$-complete even when $k=2$. On the other hand, we prove that, when the graph is embedded on a plane and all paths in $\mathcal{P}$ and $\mathcal{Q}$ connect the boundaries of two faces, \textsc{Disjoint Paths Reconfiguration} can be solved in polynomial time. The algorithm is based on a topological characterization for rerouting curves on a plane using the algebraic intersection number. We also consider a transformation of  disjoint $s$-$t$ paths as a variant. We show that the disjoint $s$-$t$ paths reconfiguration problem in planar graphs can be determined in polynomial time, while the problem is $\PSPACE$-complete in general. 
\end{abstract}

\section{Introduction}

\subsection{Disjoint Paths and Reconfiguration}

The \emph{disjoint paths problem} is a classical and important problem in algorithmic graph theory and combinatorial optimization. 
In the problem, the input consists of a graph $G=(V, E)$ and $2k$ distinct vertices $s_1, \dots , s_k, t_1, \dots , t_k$, called {\em terminals}, and the 
task is to find $k$ vertex-disjoint paths $P_1, \dots , P_k$ such that $P_i$ connects $s_i$ and $t_i$ for $i=1, \dots , k$ if they exist. 
A tuple $\mathcal{P}=(P_1, \dots , P_k)$ of paths satisfying this condition is called a \emph{linkage}. 
The disjoint paths problem has attracted attention since 1980s because of its practical applications to transportation networks, network routing~\cite{srinivas2005finding}, and VLSI-layout~\cite{frank,kramer1984complexity}. 
When the number $k$ of terminal pairs is part of the input, the disjoint paths problem was shown to be $\mathrm{NP}$-hard by Karp~\cite{karp1975computational}, 
and it remains $\mathrm{NP}$-hard even for planar graphs~\cite{lynch1975equivalence}. 
For the case of $k=2$, polynomial-time algorithms were presented in~\cite{Seymour,Shiloach,Thomassen},
while the directed variant was shown to be $\NP$-hard~\cite{fortune1980directed}. 
Later, for the case when the graph is undirected and $k$ is a fixed constant,
Robertson and Seymour~\cite{robertson1995graph} gave a polynomial-time algorithm based on the graph minor theory, 
which is one of the biggest achievements in this area. 
Although the setting of the disjoint paths problem is quite simple and easy to understand, 
a deep theory in discrete mathematics is required to solve the problem, 
which is a reason why this problem has attracted attention in the theoretical study of algorithms.

An interesting special case is the problem in planar graphs. 
In the early stages of the study of the disjoint paths problem, 
for the case when $G$ is embedded on a plane and all the terminals are on one face or two faces, 
polynomial-time algorithms were given in~\cite{robertson1986graph,suzuki1989algorithms,suzuki1990finding}. 
In the series of graph minor papers, 
the disjoint paths problem on a plane or on a fixed surface was solved for fixed $k$~\cite{robertson1988graph}. 
For the planar case, faster algorithms were presented in~\cite{ADLER2017815,Reed95,ReedRSS93}.  
The directed variant of the problem can be solved in polynomial time if the input digraph is planar and $k$ is a fixed constant~\cite{CyganMPP13,schrijverplanar}.

In this paper, we consider a transformation of linkages in a graph. 
Roughly, in a transformation, we pick up one path among the $k$ paths in a linkage, and replace it with another path to obtain a new linkage. 
To give a formal definition,  
suppose that $G$ is a graph and $s_1, \dots , s_k, t_1, \dots , t_k$ are distinct terminals. 
For two linkages $\mathcal{P} = (P_1, \dots , P_k)$ and $\mathcal{Q} = (Q_1, \dots , Q_k)$, 
we say that $\mathcal{P}$ is \emph{adjacent} to $\mathcal{Q}$ if 
there exists $i \in \{1, \dots , k\}$ such that 
$P_j = Q_j$ for $j \in \{1, \dots , k\} \setminus \{i\}$ and $P_i \not= Q_i$. 
We say that a sequence $\langle \mathcal{P}_1, \mathcal{P}_2, \dots, \mathcal{P}_\ell \rangle$ of linkages is a \emph{reconfiguration sequence} from $\mathcal{P}_1$ to $\mathcal{P}_\ell$
if 
$\mathcal{P}_i$ and $\mathcal{P}_{i+1}$ are adjacent 
for $i = 1, \dots , \ell -1$.
If such a sequence exists, we say that $\mathcal{P}_1$ is \emph{reconfigurable} to $\mathcal{P}_\ell$. 
In this paper, we focus on the following reconfiguration problem, which we call \DPR. 

\vspace{-5pt}

\begin{framed}
\noindent
\DPR

\noindent
\textbf{Input.} A graph $G=(V, E)$, distinct terminals $s_1, \dots , s_k, t_1, \dots , t_k$,  
and two linkages $\cal P$ and $\cal Q$. 

\noindent
\textbf{Question.}
Is $\cal P$ reconfigurable to $\cal Q$?
\end{framed}

\vspace{-5pt}

The problem can be regarded as the problem of deciding the reachability between linkages via rerouting paths.
Such a problem falls in the area of \textit{combinatorial reconfiguration};
see Section~\ref{sec:related} for prior work on combinatorial reconfiguration.
Note that \DPR is a decision problem that just returns ``\yes'' or ``\no'' and does not necessarily find a reconfiguration sequence when the answer is \yes.

Although our study is motivated by theoretical interest in the literature of combinatorial reconfiguration,
the problem can model rerouting problem in a telecommunication network as follows.
Suppose that a linkage represents routing in a telecommunication network, and we want to modify linkage $\mathcal{P}$ to another linkage $\mathcal{Q}$ which is better than $\mathcal{P}$ in some sense.  
If we can change only one path in a step in the network for some technical reasons, and we have to keep a linkage in the modification process, 
then this situation is modeled as \DPR.

We also study a special case of the disjoint paths problem when $s_1 = \dots = s_k$ and $t_1 = \dots = t_k$, which we call the \emph{disjoint $s$-$t$ paths problem}. 
In the problem, for a graph and two terminals $s$ and $t$, we seek for $k$ internally vertex-disjoint (or edge-disjoint) paths connecting $s$ and $t$. 
It is well-known that the disjoint $s$-$t$ paths problem can be solved in polynomial time. 
The study of disjoint $s$-$t$ paths was originated from Menger's min-max theorem~\cite{Menger27} and the max-flow algorithm by Ford and Fulkerson~\cite{FordF56}. 
Faster algorithms for finding maximum disjoint $s$-$t$ paths or a maximum $s$-$t$ flow have been actively studied in particular for planar graphs; see e.g.~\cite{DBLP:conf/isaac/EnochFMM21,DBLP:journals/algorithmica/KaplanN11,DBLP:journals/algorithmica/KhullerN94,10.1145/3504032}.

In the same way as \DPR, we consider a reconfiguration of internally vertex-disjoint $s$-$t$ paths. 
Let $G=(V, E)$ be a graph with two distinct terminals $s$ and $t$. 
We say that a set $\mathcal{P}=\{P_1, \dots , P_k\}$ of $k$ paths in $G$ is an \emph{$s$-$t$ linkage} if 
$P_1, \dots , P_k$ are internally vertex-disjoint $s$-$t$ paths. 
Note that $\mathcal{P}$ is not a tuple but a set, that is, we ignore the ordering of the paths in $\mathcal{P}$.  
We say that $s$-$t$ linkages $\mathcal{P}$ and $\mathcal{Q}$ are \emph{adjacent} if $\mathcal{Q} = (\mathcal{P} \setminus P) \cup \{Q\}$ for some $s$-$t$ paths $P$ and $Q$ with $P\neq Q$. 
We define the reconfigurability of $s$-$t$ linkages in the same way as linkages. 
We consider the following problem.

\vspace{-5pt}

\begin{framed}
\noindent
\stDPR 

\noindent
\textbf{Input.} A graph $G=(V, E)$, distinct terminals $s$ and $t$, and two $s$-$t$ linkages $\cal P$ and $\cal Q$.

\noindent
\textbf{Question.}
Is $\cal P$ reconfigurable to $\cal Q$?
\end{framed}

\vspace{-5pt}

\subsection{Our Contributions}
\label{sec:contribution}

Since finding disjoint $s$-$t$ paths is an easy combinatorial optimization problem,  
one may expect that \stDPR is also tractable. However, this is not indeed the case. 
We show that \stDPR is $\PSPACE$-hard even when $k=2$. 

\begin{restatable}{theorem}{hardness}
\label{thm:pspacecompl-twopaths}
The \stDPR is $\PSPACE$-complete even when $k=2$ and the maximum degree of $G$ is four.
\end{restatable}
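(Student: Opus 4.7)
My plan is to establish both halves of the claim: membership in $\PSPACE$ and $\PSPACE$-hardness under the stated restrictions. Membership is routine: a configuration is a pair of internally vertex-disjoint $s$-$t$ paths, which is encodable in polynomial space, and adjacency between configurations is checkable in polynomial time. Nondeterministically guessing successive $s$-$t$ linkages and verifying adjacency gives a nondeterministic polynomial-space algorithm, so by Savitch's theorem $\stDPR \in \PSPACE$.

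For hardness, I would reduce from Nondeterministic Constraint Logic (NCL) of Hearn and Demaine, which is $\PSPACE$-complete even on planar, maximum-degree-$3$ graphs with only AND and OR vertex types. An NCL instance is a graph $H$ with weighted oriented edges together with a source orientation $c_0$ and a target orientation $c_f$; each vertex requires its in-edges to have total weight at least $2$, and a single move flips the orientation of one edge subject to this constraint. Given such an instance, I would construct a graph $G$ of maximum degree $4$, terminals $s$ and $t$, and two $s$-$t$ linkages $\mathcal{P}_0, \mathcal{P}_f$ of size $2$, such that reachability in NCL between $c_0$ and $c_f$ is equivalent to reconfigurability between $\mathcal{P}_0$ and $\mathcal{P}_f$ in $G$. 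The encoding idea is that each NCL edge has a small local subgraph containing a binary ``selector'', and its orientation is recorded by which of the two paths traverses the selector and in which of two alternative ways. Each AND/OR vertex of $H$ becomes a gadget whose only feasible internal routings of $P_1, P_2$ (given the disjointness and degree constraints) correspond exactly to NCL-legal local configurations of its incident edges.

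The main obstacle will be designing AND and OR gadgets that enforce the correct constraint with only $k=2$ paths and maximum degree $4$. With so few paths, a gadget cannot rely on many independent ``witness paths''; instead, it must use long forcing corridors through which the routings of $P_1$ and $P_2$ are essentially unique, leaving only a constant number of binary choices at exactly the edge selectors, with the coupling between those choices matching the NCL constraint. The easy direction of correctness---lifting a legal NCL move sequence to a reconfiguration sequence in $G$---should follow directly from the gadget design, while the converse---showing that any reconfiguration sequence in $G$ projects cleanly onto an NCL sequence and never transiently violates a vertex constraint---is where I expect the bulk of the technical effort, since the tight degree-$4$ budget makes it delicate to rule out unintended local reroutings that would flip several NCL bits at once or traverse a vertex gadget in an off-spec manner.
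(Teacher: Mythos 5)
Your plan matches the paper's in essentials: reduce from NCL reconfiguration (Hearn--Demaine), encode each edge orientation as a binary routing choice inside a small edge gadget, enforce AND/OR in-weight constraints by disjointness at vertex gadgets, and get membership from Savitch's theorem. What you describe only abstractly---``long forcing corridors through which the routings of $P_1$ and $P_2$ are essentially unique''---is precisely the missing ingredient, and the paper makes it concrete in a way you should not leave implicit: it designates a small set of \emph{red} control vertices and a small set of \emph{blue} control vertices among the gadgets, then strings all red control vertices serially into a single $s$-$t$ chain and all blue control vertices into another. Once one path is known to cover every blue vertex, each red vertex becomes an $s$-$t$ cut vertex in what remains, so the other path is forced through every red vertex (and symmetrically). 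This realizes the ``essentially unique routing'' with only two paths and maximum degree four, and it leaves exactly one binary freedom per edge gadget: which of two shared ``white'' intermediates the designated path visits. A small correction to your encoding idea: the orientation is \emph{not} recorded by which of the two paths traverses the selector (that is fixed by weight---the red path owns all weight-$2$ edge gadgets, the blue path owns the weight-$1$ ones and the OR gadgets), only by which white vertex is used.

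Your anticipated ``bulk of technical effort'' around transient violations is real but tractable once the chain invariant is in place. Because any valid $s$-$t$ linkage must cover all red vertices with one path and all blue vertices with the other (this is preserved by every reconfiguration step), a single step can only change white/black occupancies locally, and one shows that it projects to a bounded number of flips in $H$ each of which preserves the NCL constraint; conversely, each legal NCL flip lifts to $O(1)$ reconfiguration steps in $G$ (sometimes requiring one or two preparatory reroutes inside an adjacent vertex gadget to vacate a white vertex before the edge gadget flips). The correspondence is not one-to-one but bidirectionally polynomially bounded, which suffices. Without the serial-chain mechanism, it is unclear how to rule out ``off-spec'' routings that skip a gadget entirely, so that is the concrete idea your sketch still needs.
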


Note that \stDPR can be easily reduced to \DPR by splitting each of $s$ and $t$ into $k$ terminals. 
Thus, this theorem implies the $\PSPACE$-hardness of \DPR with $k =2$.

In this paper, we mainly focus on the problems in planar graphs. 
To better understand \DPR in planar graphs, we show a topological necessary condition. 

Topological conditions play important roles in the disjoint paths problem. 
If there exist disjoint paths connecting terminal pairs in a graph embedded on a surface $\Sigma$, then 
obviously there must exist disjoint curves on $\Sigma$ connecting them.  
For example, when terminals $s_1, s_2, t_1$ and $t_2$ lie on the outer face $F$ in a plane graph $G$ in this order, 
there exist no disjoint curves connecting the terminal pairs in the disk $\Sigma = \R^2 \setminus F$, 
and hence we can conclude that $G$ contains no disjoint paths. 
Such a topological condition is used to design polynomial-time algorithms for the disjoint paths problem with $k=2$~\cite{Seymour,Shiloach,Thomassen}, 
and to deal with the problem on a disk or a cylinder~\cite{robertson1986graph}. 
When $\Sigma$ is a plane (or a sphere), we can always connect terminal pairs by disjoint curves on $\Sigma$, and hence 
nothing is derived from the above argument. 
Indeed, Robertson and Seymour~\cite{robertson1988graph} showed that 
if the input graph is embedded on a surface and the terminals are mutually ``far apart,'' then
desired disjoint paths always exist. 

In contrast, as we will show below in Theorem~\ref{thm:genus0}, there exists a topological necessary condition for the reconfigurability of disjoint paths. 
Thus, even when the terminals are mutually far apart, the reconfiguration of disjoint paths is not always possible. 
This shows a difference between the disjoint paths problem and \DPR.

In order to formally discuss the topological necessary condition, we consider the reconfiguration of curves on a surface.  
Suppose that $\Sigma$ is a surface and let $s_1, \dots , s_k, t_1, \dots , t_k$ be distinct points on $\Sigma$. 
By abuse of notation, we say that $\mathcal{P} = (P_1, \dots , P_k)$ is a \emph{linkage} if it is a collection of disjoint simple curves on $\Sigma$ such that $P_i$ connects $s_i$ and $t_i$. 
We also define the adjacency and reconfiguration sequences for linkages on $\Sigma$ in the same way as linkages in a graph.  
Then, the reconfigurability between two linkages on a plane can be characterized  
with a word $w_j$ associated to $Q_j$ which is an element of the free group $F_k$ generated by $x_1,\dots,x_k$
as follows; see Section~\ref{sec:surface} for the definition of $w_j$.

\begin{restatable}{theorem}{reconfcurve}
\label{thm:genus0}
Let $\mathcal{P} = (P_1, \dots , P_k)$ and $\mathcal{Q} = (Q_1, \dots , Q_k)$ be linkages on a plane \textup{(}or a sphere\textup{)}.
Then, $\mathcal{P}$ is reconfigurable to $\mathcal{Q}$ if and only if $w_j \in \ang{x_j}$ for any $j \in \{1, \dots , k\}$, where $\ang{x_j}$ denotes the subgroup generated by $x_j$.
\end{restatable}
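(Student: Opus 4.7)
I would prove the two implications separately.

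For the necessity direction, the plan is to induct on the length of a reconfiguration sequence $\mathcal{P} = \mathcal{P}_0, \mathcal{P}_1, \ldots, \mathcal{P}_\ell = \mathcal{Q}$. In the base case $\ell = 0$ we have $\mathcal{P} = \mathcal{Q}$, and each $w_j$ is (conjugate to) $x_j$, which lies trivially in $\ang{x_j}$. For the inductive step I would analyze how a single adjacency step, replacing one $P_i$ by some $P_i'$, transforms the words $w_j$ when the reference linkage changes. The crucial geometric fact is that on the plane (or sphere) the closed curve $P_i \cup P_i'$ bounds a disjoint union of disks whose interiors avoid all common paths $P_\ell$ ($\ell \neq i$), since both $P_i$ and $P_i'$ are disjoint from them. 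Hence $P_i$ and $P_i'$ are isotopic rel endpoints in the complement of $\bigcup_{\ell \neq i} P_\ell$, and the change of reference induces a well-controlled substitution on $F_k$ that preserves the collection of subgroups $\{\ang{x_j}\}_j$.

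For the sufficiency direction, assume $w_j \in \ang{x_j}$ for every $j$. I would construct a reconfiguration sequence from $\mathcal{P}$ to $\mathcal{Q}$ by induction on the total number of transversal intersections
\[
N := \sum_{i \neq j} |P_i \cap Q_j|,
\]
after placing the two linkages in general position. When $N = 0$, every $Q_j$ is disjoint from each $P_i$ ($i \neq j$), and the sequence
\[
(P_1, \ldots, P_k) \;\to\; (Q_1, P_2, \ldots, P_k) \;\to\; \cdots \;\to\; (Q_1, \ldots, Q_k)
\]
is a valid reconfiguration: pairwise disjointness of the $Q_i$'s together with the hypothesis yields the linkage property at each step. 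When $N > 0$, the condition $w_i \in \ang{x_i}$ forces the algebraic intersection number between $Q_i$ and each $P_j$ ($j \neq i$) to vanish, so intersections come in cancelable pairs. A standard innermost-bigon argument then yields a disk $D$ bounded by an arc of some $P_i$ and an arc of some $Q_j$ whose interior meets no other path. Rerouting $P_i$ across $D$ produces a linkage $\mathcal{P}'$ adjacent to $\mathcal{P}$ with strictly fewer intersections with $\mathcal{Q}$; by the necessity direction the condition $w_j \in \ang{x_j}$ persists with respect to $\mathcal{P}'$, so induction closes the argument.

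The step I expect to be the main obstacle is the innermost-bigon construction: one must produce a bigon whose interior avoids \emph{all} other paths of the current linkage, and then verify that rerouting across it gives a genuinely adjacent linkage. This relies essentially on planarity, since without linking phenomena nested intersections always admit an innermost one bounding an empty disk. A secondary obstacle is to make precise how a reference change affects the subgroup condition, i.e., to exhibit the induced substitution on $F_k$ explicitly enough to track that the conjunction $w_j \in \ang{x_j}$ for all $j$ is preserved under each adjacency.
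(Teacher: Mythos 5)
Your plan matches the paper's proof in structure — prove that the condition ``$w_j \in \ang{x_j}$ for all $j$'' is invariant under single reconfiguration steps, then build a reconfiguration sequence by intersection-reducing moves — but both directions contain a genuine gap.

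For necessity, the geometric claim that ``the closed curve $P_i \cup P_i'$ bounds a disjoint union of disks whose interiors avoid all common paths $P_\ell$'' is false. A path $P_\ell$ ($\ell \neq i$) is disjoint from the boundary $P_i \cup P_i'$, but since it is a compact arc carrying its own endpoints $s_\ell, t_\ell$, it can lie entirely inside one of those bounded regions. This is precisely the situation for which the paper introduces the slide move (Lemma~\ref{lem:slide}): when an isotopy of $P_i$ inside $\R^2 \setminus \bigcup_{\ell \neq i} P_\ell$ wraps around a terminal $s_j$ trapped between $P_i$ and $P_i'$, the induced change on the words is the automorphism $f_{ij}$ of $F_k$ sending $x_j \mapsto x_i x_j x_i^{-1}$, together with an extra conjugation on $w_j$. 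Crucially, $f_{ij}$ does \emph{not} ``preserve the collection of subgroups $\{\ang{x_j}\}_j$'' — it sends $\ang{x_j}$ to $\ang{x_i x_j x_i^{-1}}$. Lemma~\ref{lem:slide} preserves the condition only by first assuming $w_j = x_j^{e_j}$, so that $x_i^{-1} f_{ij}(w_j)\, x_i = x_i^{-1}(x_i x_j x_i^{-1})^{e_j} x_i = x_j^{e_j}$. Your phrasing glosses over exactly the step that requires a nontrivial computation using the inductive hypothesis.

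For sufficiency, inducting on $N = \sum_{i\neq j}\lvert P_i \cap Q_j\rvert$ need not terminate. An innermost bigon in the arrangement $\mathcal{P}\cup\mathcal{Q}$ may well be bounded by arcs of $P_a$ and $Q_a$ for the \emph{same} index $a$; rerouting $P_a$ across such a bigon decreases $\lvert P_a \cap Q_a\rvert$ but leaves $N$ unchanged. The paper instead inducts on the total intersection count $n$ (including the diagonal terms $\lvert P_i\cap Q_i\rvert$), which strictly decreases under every such elimination, and treats ``$P_i\cap Q_j=\emptyset$ for all $i\neq j$'' as the effective base case for the direct reconfiguration you describe. Switching your induction variable to $n$ repairs this. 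It is also worth noting that the paper avoids the innermost-bigon apparatus entirely: since the word $w_{j^*}$ reduces to a power of $x_{j^*}$ yet contains other letters, it must contain an \emph{adjacent} cancellable pair $x_m^{\pm 1}x_m^{\mp 1}$, and the corresponding sub-arc of $Q_{j^*}$ between two consecutive intersection points is automatically disjoint from all of $\bigcup_\ell P_\ell$ — no empty-interior disk is required to justify the reroute.
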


See Figure~\ref{fig:noinstance01} (left) for an example. 
It is worth noting that, if $k=2$ and $\Sigma$ is a connected orientable closed surface of genus $g \ge 1$, then 
such a topological necessary condition does not exist, i.e., the reconfiguration is always possible; 
see Appendix~\ref{sec:surfaceappendix}.

\begin{figure}[t]
    \centering
    \includegraphics[width=10cm]{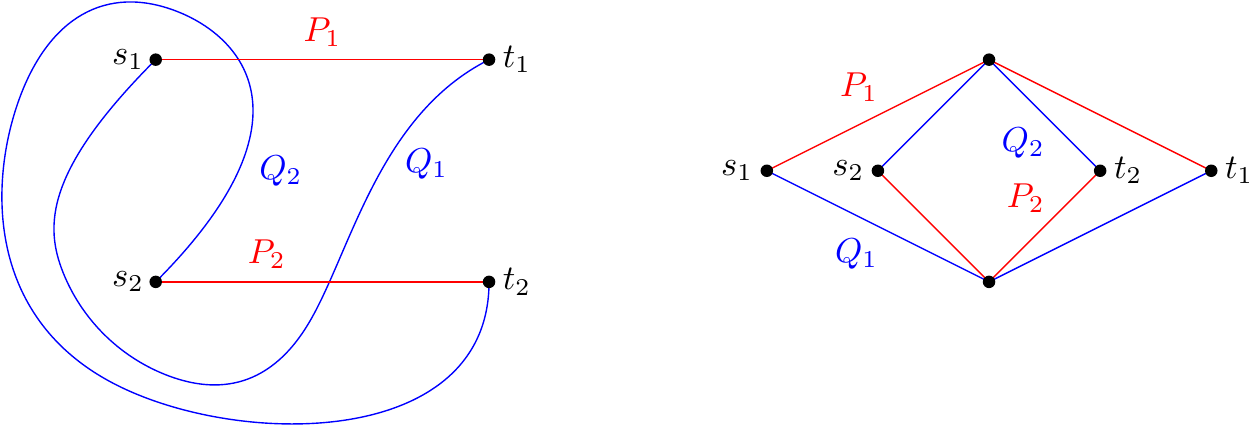}
    \caption{(Left) An example on the plane where $(P_1, P_2)$ is not reconfigurable to $(Q_1, Q_2)$.
    (Right) An example in a graph where the condition in Theorem~\ref{thm:genus0} holds but $(P_1, P_2)$ is not reconfigurable to $(Q_1, Q_2)$.}
    \label{fig:noinstance01}
\end{figure}

For a graph embedded on a plane, we can identify paths and curves. 
Then, Theorem~\ref{thm:genus0} gives a topological necessary condition for \DPR in planar graphs. 
However, the converse does not necessarily hold: even when the condition in Theorem~\ref{thm:genus0} holds, an instance of \DPR may have no reconfiguration sequence.
See \figurename~\ref{fig:noinstance01} (right) for a simple example.
The polynomial solvability of \DPR in planar graphs is open even for the case of $k=2$.

With the aid of the topological necessary condition,
we design polynomial-time algorithms for special cases, in which all the terminals are on a single face (called \emph{one-face instances}), or $s_1, \dots , s_k$ are on some face and $t_1, \dots , t_k$ are on another face (called \emph{two-face instances}). 
Note that one/two-face instances have attracted attention in the disjoint paths problem~\cite{robertson1986graph,suzuki1989algorithms,suzuki1990finding}, 
in the multicommodity flow problem~\cite{OKAMURA198355,okamura1981multicommodity}, and
in the shortest disjoint paths problem~\cite{DBLP:conf/esa/BorradaileNZ15,verdiere2011shortest,datta2018shortest,kobayashi2010shortest}.
We show that any one-face instance of \DPR has a reconfiguration sequence (Proposition~\ref{prop:1face}). 
Moreover, we prove a topological characterization for two-face instances of \DPR~(Theorem~\ref{thm:2facechara}), which leads to a polynomial-time algorithm in this case.

\begin{restatable}{theorem}{twoface}
\label{thm:2facecalgo}
When the instances are restricted to two-face instances,
\DPR can be solved in polynomial time.  
\end{restatable}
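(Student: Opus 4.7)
The plan is to reduce to Theorem~\ref{thm:2facechara}, which (as already stated in the excerpt) provides a topological characterization of reconfigurability for two-face instances. Once such a characterization is in hand, the algorithmic task collapses to verifying its conditions in polynomial time. Since the abstract advertises the algebraic intersection number as the key invariant, I expect the characterization of Theorem~\ref{thm:2facechara} to say, in essence, that $\mathcal{P}$ is reconfigurable to $\mathcal{Q}$ exactly when certain signed intersection counts computed from $\mathcal{P}$ and $\mathcal{Q}$ agree.

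First, I would pin down the computable data. Let $F_s$ be the face whose boundary contains $s_1,\dots,s_k$ and $F_t$ the face whose boundary contains $t_1,\dots,t_k$. I would construct an auxiliary reference arc $\alpha$ in the plane from a non-terminal point of $\bd F_s$ to a non-terminal point of $\bd F_t$, transversal to $G$; concretely, $\alpha$ can be obtained from a shortest path in the planar dual between $F_s$ and $F_t$, then perturbed into the plane to cross only edges, all in polynomial time. The cyclic orderings of the terminals along $\bd F_s$ and $\bd F_t$ are the same for $\mathcal{P}$ and $\mathcal{Q}$ (both being legitimate linkages on the same terminal pairs), so these orderings together with $\alpha$ furnish all the combinatorial data required.

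Next, for each $i$ the algorithm would compute the algebraic intersection numbers of $P_i$ and $Q_i$ with $\alpha$ by walking along the path and summing $\pm 1$ at each transversal crossing with $\alpha$; each such number has absolute value at most $|E(G)|$ and is computed in linear time. The algorithm then evaluates the condition of Theorem~\ref{thm:2facechara} on these $2k$ integers together with the cyclic boundary data and outputs \yes or \no accordingly. The total running time is polynomial in the size of $G$.

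The main obstacle will be matching the concrete intersection invariants computed above with whatever abstract condition Theorem~\ref{thm:2facechara} is phrased in. One subtlety worth addressing is independence from the choice of reference arc: two admissible arcs from $\bd F_s$ to $\bd F_t$ are homotopic in the closed annulus $\R^2\setminus(F_s\cup F_t)$ relative to their endpoints up to a power of the generator of $\pi_1$, so they differ in intersection with each $P_i$ by a global shift that is the same for all $i$ and identical for $\mathcal{P}$ and $\mathcal{Q}$; hence the comparison condition is invariant under the choice of $\alpha$. Once this bookkeeping is settled, correctness of the polynomial-time algorithm follows immediately from Theorem~\ref{thm:2facechara}.
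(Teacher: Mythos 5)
There is a genuine gap, and it is not just bookkeeping: you have misread the scope of Theorem~\ref{thm:2facechara}. That theorem does \emph{not} characterize reconfigurability for all two-face instances. It carries the crucial hypothesis that the instance has \emph{no terminal separator of size $k$}; only under this extra connectivity assumption is $\mu(\mathcal{P},\mathcal{Q})=0$ equivalent to reconfigurability. When a terminal separator of size exactly $k$ is present, the vanishing of the algebraic intersection number is still necessary (this follows from Corollary~\ref{cor:algintnum}), but it is no longer sufficient---the two linkages must in addition pass through the separator in the \emph{same} way, coordinate by coordinate, since the set $V(P_i)\cap U$ is an invariant of each reconfiguration step. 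Figure~\ref{fig:noinstance01} (right) is precisely the sort of counterexample that your algorithm would misclassify as \yes.

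The missing ingredient is an explicit reduction step: the algorithm must first test (by a minimum-cut computation) whether a terminal separator $U$ of size $k$ exists. If not, Theorem~\ref{thm:2facechara} applies verbatim and your intersection-number test is correct. If such a $U$ exists, one first checks that $V(P_i)\cap U = V(Q_i)\cap U$ for all $i$ (answering \no otherwise), then splits $G$ along $U$ into two smaller plane graphs $G_1$ and $G_2$, each of which yields a one-face or two-face subinstance, and recurses. Your argument handles only the base case of this recursion; the recursive decomposition around size-$k$ separators is the part of the paper's proof that your proposal omits, and without it the claimed algorithm is incorrect, not merely incomplete.

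A secondary, minor point: your invariant (intersections of each path with a fixed reference arc $\alpha$) is not literally the paper's $\mu(\mathcal{P},\mathcal{Q})$, which is $\mu(P_i,Q_j)$ for distinct $i,j$ (well-defined by Lemma~\ref{lem:2facesamevalue}). The two are interchangeable up to the relabeling you sketch, so this is not a gap, but the comparison in your final paragraph would be cleaner if cast directly in terms of $\mu(P_i,Q_j)$.
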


Based on this theorem, we give a polynomial-time algorithm for \stDPR in planar graphs. 

\begin{restatable}{theorem}{stalgorithm}
\label{thm:stplanaralgo}
There is a polynomial-time algorithm for \stDPR in planar graphs. 
\end{restatable}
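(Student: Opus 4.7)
The plan is to reduce \stDPR in a planar graph to the two-face variant of \DPR, for which Theorem~\ref{thm:2facecalgo} provides a polynomial-time algorithm.

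The first step is to construct an auxiliary planar graph $G'$ from $G$ by attaching, in a planar way, $k$ new pendant vertices $s_1, \dots, s_k$ to $s$ and $k$ new pendant vertices $t_1, \dots, t_k$ to $t$, arranged so that $s_1, \dots, s_k$ lie on the boundary of a single face $F_s$ and $t_1, \dots, t_k$ lie on a single face $F_t$. Each internally vertex-disjoint $s$-$t$ linkage in $G$ can then be extended to a family of $k$ pairwise vertex-disjoint paths in $G'$ with endpoints on $F_s$ and $F_t$, producing an instance of the two-face form required by Theorem~\ref{thm:2facecalgo}.

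The second step is to deal with the mismatch between \stDPR, where linkages are unordered sets, and \DPR, where terminal pairs are fixed. The plan is to enumerate labelings (bijections from paths to $\{1, \dots, k\}$) for $\mathcal{P}$ and $\mathcal{Q}$ and, for each compatible pair of labelings inducing the same terminal pair sequence $(s_i, t_{\sigma(i)})$ in $G'$, invoke Theorem~\ref{thm:2facecalgo} on the resulting two-face \DPR instance. By exploiting the cyclic order on the boundary of $F_s$ (and of $F_t$) imposed by the planar embedding, I would argue that only a polynomial number of labelings need to be considered: each linkage admits a natural cyclic labeling up to rotation, reducing the enumeration from $k!$ to roughly $O(k^2)$ pairs. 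The algorithm returns \yes iff at least one of these calls returns \yes.

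The main obstacle is establishing the correctness of this reduction, i.e., that \stDPR-reconfigurability in $G$ is equivalent to the existence of one such labeling pair for which the two-face \DPR instance in $G'$ is reconfigurable. The easy direction is that any \DPR reconfiguration in $G'$ projects to an \stDPR reconfiguration in $G$ by contracting pendant edges and forgetting labels. The reverse direction is more delicate: an \stDPR reconfiguration sequence may pass through intermediate linkages whose edges at $s$ and $t$ differ substantially from those of $\mathcal{P}$ and $\mathcal{Q}$, so labels must be propagated through the whole sequence in a way consistent with the fixed planar embedding of the pendants. I would handle this by a careful combinatorial analysis of how a single path replacement affects the cyclic orders of paths at $s$ and $t$, combined with the symmetry of the pendant construction, to show that some initial labeling among the enumerated $O(k^2)$ candidates can be maintained throughout the sequence. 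Once this correspondence is established, the polynomial time bound follows immediately from invoking Theorem~\ref{thm:2facecalgo} on each candidate.
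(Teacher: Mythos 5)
Your proposal has two serious flaws, one structural and one conceptual, and the paper takes a substantially different route.

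\textbf{The pendant construction is broken.} If you attach pendant vertices $s_1,\dots,s_k$ to $s$ (keeping $s$ in $G'$), then every $s_i$--$t_j$ path in $G'$ must pass through $s$ (and through $t$), so no two of the $k$ paths can be vertex-disjoint as \DPR demands. Even if you instead \emph{split} $s$ into $k$ copies and distribute its incident edges among them, the reduction is no longer faithful: an \stDPR path can start with \emph{any} edge at $s$, but a path from $s_i$ in the split graph can only use the edges assigned to $s_i$, so legitimate \stDPR moves become unavailable. The paper's proof of Theorem~\ref{thm:stplanarchara} avoids both problems by replacing $s$ and $t$ with large ``grids'' of concentric cycles, which give paths enough room to slide between any two incident edges of $s$ without crossing each other.

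\textbf{The labeling argument does not address the winding-number obstruction.} You acknowledge that the reverse direction is delicate and propose to show that some labeling among $O(k^2)$ candidates ``can be maintained throughout the sequence,'' but this is precisely where the approach fails. An \stDPR reconfiguration sequence can rotate the cyclic position of a path around $s$ (and around $t$) arbitrarily, and Corollary~\ref{cor:algintnum} shows $\mu(\mathcal P',\mathcal Q')=0$ is a \emph{necessary} condition for reconfigurability in a two-face instance. Once the terminals and embedding of $G'$ are fixed, no choice of labeling changes the winding number, so a \yes{} \stDPR instance whose underlying curves wind relative to each other would be mapped to a \no{} two-face instance for every labeling. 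The paper handles this by taking the grids large enough that they can absorb up to $|V|^2$ units of winding and then explicitly rerouting the linkages inside the grids so that $\mu(\mathcal P',\mathcal Q')=0$, a step that has no analogue in the pendant construction.

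\textbf{The paper's actual proof is a decomposition argument, not a single reduction.} It first checks whether $G$ has an $s$-$t$ separator of size $k$. If not, Theorem~\ref{thm:stplanarchara} (which is where the grid construction lives) says the answer is always \yes. If such a separator exists, the minimal separators $U=N(X)$ and $W=N(Y)$ around $s$ and $t$ split $G$ into $G_1$, $G_2$, $G_3$. In $G_1$ and $G_3$ reconfiguration is always possible (again by the Theorem~\ref{thm:stplanarchara} argument), while the middle piece $G_2$ is a genuine one-face or two-face \DPR instance whose labelings are \emph{forced} by the invariant intersections $V(P_i)\cap U$ and $V(P_i)\cap W$. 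This neatly sidesteps your $k!$-to-$O(k^2)$ enumeration: when the answer is \yes{} unconditionally, no labeling matters; when it could be \no, the separator pins the labeling down to one candidate (up to index relabeling). If you want to salvage a reduction-based proof you would need at least the grid construction and the explicit winding-number correction; the bare pendant idea cannot work.
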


Note that the number $k$ of paths in Theorems~\ref{thm:2facecalgo} and~\ref{thm:stplanaralgo} can be part of the input.

It is well known that $G$ has an $s$-$t$ linkage of size $k$ if and only if $G$ has no $s$-$t$ separator of size $k-1$~(Menger's theorem).
The characterization for two-face instances (Theorem~\ref{thm:2facechara}) implies 
the following theorem, which is interesting in the sense that one extra $s$-$t$ connectivity is sufficient to guarantee the existence of a reconfiguration sequence.

\begin{restatable}{theorem}{stcharacterization}
\label{thm:stplanarchara}
Let $G=(V, E)$ be a planar graph with distinct vertices $s$ and $t$, and 
let $\mathcal{P}$ and $\mathcal{Q}$ be $s$-$t$ linkages. 
If there is no $s$-$t$ separator of size $k$, then $\mathcal{P}$ is reconfigurable to $\mathcal{Q}$. 
\end{restatable}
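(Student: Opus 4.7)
The plan is to reduce the statement to a two-face instance of \DPR and then apply the characterization (Theorem~\ref{thm:2facechara}) that underlies Theorem~\ref{thm:2facecalgo}. Fix a planar embedding of $G$. Each $s$-$t$ linkage $\mathcal{P}$ and $\mathcal{Q}$ induces, via the cyclic order of edges around $s$ and around $t$, a cyclic ordering on its $k$ paths, and the crucial feature of the $s$-$t$ version is that these linkages are unordered sets, so we have freedom in matching up paths.

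First, I would convert $\mathcal{P}$ into an ordered two-face linkage in a modified graph by splitting $s$ into $k$ copies $s_1,\ldots,s_k$ placed on a new small inner face, arranged in the cyclic order that $\mathcal{P}$ assigns to its paths around $s$; likewise split $t$ into $t_1,\ldots,t_k$ on another face. After this surgery, $\mathcal{P}$ becomes the canonical linkage $(P_1,\ldots,P_k)$ on a two-face instance. Doing the same for $\mathcal{Q}$ requires choosing a bijection assigning each path of $\mathcal{Q}$ to a pair $(s_i,t_{\sigma(i)})$; any such choice yields a two-face \DPR instance whose solvability is decided by Theorem~\ref{thm:2facechara}.

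The core of the argument is to use the hypothesis that there is no $s$-$t$ separator of size $k$. By Menger's theorem, $G$ contains $k+1$ internally vertex-disjoint $s$-$t$ paths, so $\mathcal{P}$ (and $\mathcal{Q}$) is \emph{not} a maximum $s$-$t$ linkage: there is an auxiliary $s$-$t$ path $R$ that shares no internal vertices with the paths in $\mathcal{P}$. Using $R$ as a free ``slot,'' I can implement an elementary transposition of two cyclically adjacent paths of $\mathcal{P}$: reroute one of them through $R$, removing its edge incident to $s$ (or $t$); swap the roles; then reroute back. Each intermediate configuration is an $s$-$t$ linkage of size $k$, so the move is legal in \stDPR. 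Iterating such transpositions realizes every cyclic permutation of the labels of $\mathcal{P}$ around $s$ relative to $t$. This flexibility lets me choose the bijection matching $\mathcal{Q}$ with $(s_1,\ldots,s_k,t_1,\ldots,t_k)$ so that the word-theoretic condition of Theorem~\ref{thm:2facechara} is satisfied; intuitively, the extra capacity granted by Menger kills the topological obstruction that blocks reconfiguration in the general planar case.

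The main obstacle is making the rotation argument fully rigorous in a graph-theoretic (rather than purely topological) setting: I must show that a single extra vertex-disjoint path, which need not sit conveniently in the plane, can actually be used to execute each swap by single-path replacements that keep every intermediate configuration a valid $s$-$t$ linkage. I expect to handle this either by a careful case analysis on the cyclic position of $R$ relative to $\mathcal{P}$, or more elegantly by reconfiguring both $\mathcal{P}$ and $\mathcal{Q}$ to a common canonical form (for instance, an ``innermost'' or lex-leftmost $s$-$t$ linkage in the embedding) and then concatenating the two reconfiguration sequences; the characterization of Theorem~\ref{thm:2facechara} should verify that such a canonicalization is always possible under the hypothesis.
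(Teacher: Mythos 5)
Your high-level framing is the right one---fix a planar embedding, split $s$ and $t$ into $k$ terminals on two new faces, and invoke Theorem~\ref{thm:2facechara}---and that is indeed how the paper proceeds. But the mechanism you use to clear the winding-number obstruction has a genuine gap. You assert that because $G$ has no $s$-$t$ separator of size $k$, Menger's theorem yields an auxiliary $s$-$t$ path $R$ internally vertex-disjoint from all paths of the \emph{given} linkage $\mathcal{P}$. That inference is false: Menger gives \emph{some} set of $k+1$ internally disjoint $s$-$t$ paths, but an arbitrary $s$-$t$ linkage of size $k$ need not extend to one. A flow-augmenting path may reroute existing paths and is in general not vertex-disjoint from $\mathcal{P}$. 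Concretely, take $V=\{s,t,a_1,a_2,a_3,b_1,b_2,b_3\}$ with $s$ adjacent to $a_1,a_2,a_3$; $t$ adjacent to $b_1,b_2,b_3$; and $a_1b_1,\,a_2b_1,\,a_2b_2,\,a_3b_2,\,a_3b_3$ as the remaining edges. This graph is planar and has no $s$-$t$ separator of size $2$, yet the $s$-$t$ linkage $\mathcal{P}=\{s\,a_2\,b_1\,t,\ s\,a_3\,b_2\,t\}$ admits no $s$-$t$ path internally disjoint from it: the only unused internal vertices are $a_1$ and $b_3$, and each has all other internal neighbors already occupied. So the transposition gadget you build from $R$ has nothing to stand on, and the remainder of the argument collapses.

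The paper avoids this entirely by not looking for extra disjoint structure inside $G$. It replaces $s$ (and symmetrically $t$) with a large ``grid'': each edge incident to $s$ is subdivided $p$ times with $p\ge |V|^2$, and the $j$th subdivision vertices are joined into a concentric cycle around the hole left by $s$. The fact that $G$ has no $s$-$t$ separator of size $k$ (hence $\deg(s)\ge k+1$) is used only to guarantee the grid has enough rays and that the modified graph $G'$ still has no terminal separator of size $k$, so Theorem~\ref{thm:2facechara} applies. The condition to verify is $\mu(\mathcal{P}',\mathcal{Q}')=0$ (the algebraic intersection number, not the free-group condition of Theorem~\ref{thm:genus0}, which you conflate with it). The lifted linkages satisfy $|\mu(\mathcal{P}',\mathcal{Q}')|\le|V|$ at the outset, and each concentric ring can be used to shift $\mu$ by $\pm1$, so with $p\ge|V|^2$ rings there is room to reroute $\mathcal{P}'$ entirely inside the grid and bring $\mu$ to $0$. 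No auxiliary disjoint path of $G$ is ever required, which is why the hypothesis is stated as a separator bound rather than as the existence of any particular extra path.
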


As mentioned above, the polynomial solvability of \DPR in planar graphs is open even for the case of $k=2$. 
On the other hand, when $k$ is not bounded, \DPR is $\PSPACE$-complete as the next theorem shows.

\begin{restatable}{theorem}{planarhardness}
\label{thm:pspacecompl-planar}
The \DPR is $\PSPACE$-complete when the graph $G$ is planar and of bounded bandwidth.
\end{restatable}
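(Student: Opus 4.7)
The plan is to establish membership in PSPACE by the routine argument common to reconfiguration problems: a linkage has polynomial size, so one can nondeterministically guess a reconfiguration sequence step by step in polynomial space, invoking Savitch's theorem to derandomize. The substantive content is the PSPACE-hardness, for which I would reduce from Nondeterministic Constraint Logic (NCL), known (Hearn--Demaine; sharpened by van der Zanden) to be PSPACE-complete already on planar graphs of bounded bandwidth. A configuration of an NCL instance assigns an orientation to each weighted edge subject to the constraint that every vertex has incoming weight at least two (AND-vertices of weights $1,1,2$ and OR-vertices of weights $2,2,2$), and one reconfiguration step flips a single edge orientation.

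The encoding is as follows. For each NCL edge $e=uv$, I introduce a fresh terminal pair $(s_e,t_e)$ and an \emph{edge gadget} that admits exactly two internally vertex-disjoint $s_e$-$t_e$ routes, one through the ``$u$-side'' and one through the ``$v$-side''; the currently chosen route encodes the current orientation of $e$. For each NCL vertex $w$, I attach a \emph{vertex gadget} that glues together the three incident edge gadgets and contains a few ``blocker'' vertices that are shared between the routes of different edge-paths. The gadgets will be designed so that whenever a path $P_e$ tries to switch to its other route, that alternative route is available (free of vertices currently used by other paths) precisely when the other two incident orientations at $w$ and at the far endpoint of $e$ already provide incoming weight at least two there, i.e., precisely when the NCL flip is legal. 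Choosing initial and target linkages to correspond to the given initial and target NCL configurations then yields an equivalence between the two reconfiguration instances.

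The construction must preserve both planarity and bounded bandwidth. Planarity is handled by giving each gadget a planar drawing with all ``ports'' on its outer face and placing the gadgets along a planar embedding of the NCL graph. For bandwidth, I would use the fact that the reduction inherits the linear layout of the bounded-bandwidth NCL instance: the NCL graph lies in a strip of constant width, each gadget has constant size and can be drawn inside a constant number of strip columns, and the terminal pairs $(s_e,t_e)$ are placed locally at $e$, so no long edge is introduced. Hence the resulting graph has bandwidth bounded by a constant times that of the NCL instance.

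The main obstacle is designing the vertex gadgets so that vertex-disjointness alone, the only structural mechanism available in \DPR, faithfully enforces the AND and OR constraints without admitting unintended ``shortcut'' reroutings through the gadget. Concretely, for each of the seven feasible orientation patterns at an AND or OR vertex I must verify that exactly the legal single-orientation flips are realizable by a single path swap, and that no spurious route through an adjacent gadget lets a path bypass a blocker vertex. This rigidity analysis, essentially a finite case check once the gadget is fixed, is the technically delicate part; once it is in place, correctness of the reduction, together with the bandwidth and planarity bookkeeping above, completes the proof.
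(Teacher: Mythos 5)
Your outline follows the same high-level strategy as the paper: $\PSPACE$ membership via Savitch, then a reduction from NCL reconfiguration (already $\PSPACE$-complete on planar, bounded-bandwidth AND/OR graphs by Hearn--Demaine and van der Zanden), with one fresh terminal pair per NCL edge, an edge gadget admitting exactly two internally disjoint routes that encode the orientation, vertex gadgets for the AND/OR constraints, and planarity/bandwidth preserved by drawing constant-size gadgets in place and reusing the linear layout of $H$. That much is correct and matches the paper.

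The gap is in the vertex gadgets, which you defer as the ``technically delicate part'' and sketch only as ``a few blocker \emph{vertices} shared between the routes of different edge-paths.'' In the paper's construction this is not how the constraints are enforced: the vertex gadgets come with \emph{additional terminal pairs} $(s_v,t_v)$ for every NCL vertex $v$ (and a second auxiliary pair $(s^{\mathrm{o}}_v, t^{\mathrm{o}}_v)$ for every OR vertex), so the linkage contains auxiliary paths that compete for the shared port vertices $w_{e,v}$. For an OR vertex the purple path $P_v$ must occupy at least one of $w_{e,v}, w_{f,v}, w_{g,v}$, and the blue path $P^{\mathrm{o}}_v$ blocks one of the internal ``parking'' vertices, which together force at least one port to be free for the corresponding edge-path; for an AND vertex the green path must take either the single weight-$2$ port or both weight-$1$ ports. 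Vertex-disjointness between edge-paths alone is a pairwise constraint and does not obviously encode ``at most two of three ports are taken,'' so without the auxiliary paths your proposed mechanism is not known to work. You should make the auxiliary terminal pairs and paths explicit, and note that this makes $k$ unbounded ($k = |E(H)| + |V(H)| + |V^{\mathrm{o}}(H)|$), which is exactly why this theorem is stated for unbounded $k$ while the two-path planar case remains open.
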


Here, we recall the definition of the bandwidth of a graph.
Let $G=(V, E)$ be an undirected graph.
Consider an injective map $\pi\colon V \to \Z$.
Then, the \emph{bandwidth} of $\pi$ is defined as $\max \{|\pi(u)-\pi(v)| \mid \{u, v\} \in E\}$.
The \emph{bandwidth} of $G$ is defined as the minimum bandwidth of all injective maps $\pi\colon V \to \Z$.

\subsection{Related Work}
\label{sec:related}

\emph{Combinatorial reconfiguration} is an emerging field in discrete mathematics and theoretical computer science.
In typical problems of combinatorial reconfiguration, we consider two discrete structures, and ask whether one can be transformed to the other by a sequence of local changes.
See surveys of Nishimura~\cite{DBLP:journals/algorithms/Nishimura18} and van den Heuvel \cite{DBLP:books/cu/p/Heuvel13}.

Path reconfiguration problems have been studied in this framework.
The apparently first problem is the shortest path reconfiguration, introduced by
Kaminski et al.~\cite{DBLP:journals/tcs/KaminskiMM11}.
In this problem, we are given an undirected graph with two designated vertices $s$, $t$ and two $s$-$t$ shortest paths $P$ and $Q$.
Then, we want to decide whether $P$ can be transformed to $Q$ by a sequence of one-vertex changes in such a way that all the intermediate $s$-$t$ paths remain the shortest.
Bonsma~\cite{DBLP:journals/tcs/Bonsma13} proved that the shortest path reconfiguration is $\PSPACE$-complete, but polynomial-time solvable when the input graph is chordal or claw-free.
Bonsma~\cite{DBLP:journals/dam/Bonsma17} further proved that the problem is polynomial-time solvable for planar graphs.
Wrochna \cite{DBLP:journals/jcss/Wrochna18} proved that the problem is $\PSPACE$-complete even for graphs of bounded bandwidth.
Gajjar et al.~\cite{DBLP:journals/corr/abs-2112-07499} proved that the problem is polynomial-time solvable for circle graphs, circular-arc graphs, permutation graphs and hypercubes.
They also considered a variant where a change can involve $k$ successive vertices; in this variant they proved that the problem is $\PSPACE$-complete even for line graphs.
Properties of the adjacency relation in the shortest path reconfiguration have also been studied~\cite{DBLP:journals/dm/AsplundEHHNW18, DBLP:journals/dm/AsplundW20}.

Another path reconfiguration problem has been introduced by Amiri et al.~\cite{DBLP:conf/icalp/AmiriDSW18} who were motivated by a problem in software defined networks.
In their setup, we are given a directed graph with edge capacity and two designated vertices $s,t$.
We are also given $k$ pairs of $s$-$t$ paths $(P_i, Q_i)$, $i=1,2,\dots,k$, where the number of paths among $P_1, P_2, \dots, P_k$ (and among $Q_1, Q_2, \dots, Q_k$ respectively) traversing an edge is at most the capacity of the edge.
The problem is to determine whether one set of paths can be transformed to the other set of paths by a sequence of the following type of changes: specify one vertex $v$ and then switch the usable outgoing edges at $v$ from those in the $P_i$ to those in the $Q_i$.
In each of the intermediate situations, there must be a unique path through usable edges in $P_i \cup Q_i$ for each $i$.
See \cite{DBLP:conf/icalp/AmiriDSW18} for the precise problem specification.
Amiri et al.~\cite{DBLP:conf/icalp/AmiriDSW18} proved that the problem is $\NP$-hard even when $k=2$.
For directed acyclic graphs, they also proved that the problem is $\NP$-hard~(for unbounded $k$) but fixed-parameter tractable with respect to $k$.
A subsequent work~\cite{DBLP:conf/networking/AmiriDP0W19} studied an optimization variant in which the number of steps is to be minimized when a set of ``disjoint'' changes can be performed simultaneously.

Matching reconfiguration in bipartite graphs can be seen as a certain type of disjoint paths reconfiguration problems.
In matching reconfiguration, we are given two matchings~(with extra properties) and want to determine whether one matching can be transformed to the other matching by a sequence of local changes.
There are several choices for local changes.
One of the most studied local change rules is the token jumping rule, where we remove one edge and add one edge at the same time.
Ito et al.~\cite{DBLP:journals/tcs/ItoDHPSUU11} proved that the matching reconfiguration~ (under the token jumping rule) can be solved in polynomial time.\footnote{The theorem by Ito et al.~\cite{DBLP:journals/tcs/ItoDHPSUU11} only gave a polynomial-time algorithm for a different local change, 
the so-called token addition and removal rule. However, their result can easily be adapted to the token jumping rule, too. See~\cite{DBLP:journals/jco/ItoKKKO19}.} 

To see a connection of matching reconfiguration with disjoint paths reconfiguration, consider the matching reconfiguration problem in bipartite graphs $G$ under the token jumping rule, where we are given two matchings $M, M'$ of $G$.
Then, we add two extra vertices $s, t$ to $G$, and for each edge $e \in M$ (and $M'$) we construct a unique $s$-$t$ path of length three that passes through $e$.
This way, we obtain two $s$-$t$ linkages $\mathcal{P}$ and $\mathcal{P}'$ from $M$ and $M'$, respectively.
It is easy to observe that $\mathcal{P}$ can be reconfigured to $\mathcal{P}'$ in \stDPR if and only if $M$ can be reconfigured to $M'$ in the matching reconfiguration problem in $G$.

There are a lot of studies on the disjoint paths problem and its variants. 
A natural variant is to maximize the number of vertex-disjoint paths connecting terminal pairs, which is called the \emph{maximum disjoint paths problem}. 
Since this problem is $\NP$-hard when $k$ is part of the input, it has been studied from the viewpoint of approximation algorithms. 
It is known that a simple greedy algorithm achieves an approximation of factor $O(\sqrt{|V|})$~\cite{DBLP:journals/mp/KolliopoulosS04}, 
which is the current best approximation ratio for general graphs. 
Chuzhoy and Kim~\cite{DBLP:conf/approx/ChuzhoyK15} improved this factor to $\tilde O(n^{1/4})$ for grid graphs, and 
Chuzhoy et al.~\cite{DBLP:conf/stoc/ChuzhoyKL16} gave an $\tilde O(|V|^{9/19})$-approximation algorithm for planar graphs.  
On the negative side, 
the maximum disjoint paths problem is $2^{\Omega(\sqrt{\log |V|})}$-hard to approximate under some complexity assumption~\cite{DBLP:journals/siamcomp/ChuzhoyKN22}, 
which is the current best hardness result.

Another variant is the \emph{shortest non-crossing walks problem}. 
In a graph embedded on a plane, we say that walks are \emph{non-crossing} if 
they do not cross each other or themselves, while they may share edges or vertices; see~\cite{DBLP:conf/soda/EricksonN11} for a formal definition. 
We can see that this concept is positioned in between disjoint paths and disjoint curves. 
In the shortest non-crossing walks problem, we are given a plane graph with non-negative edge lengths and $k$ terminal pairs that 
lie on the boundary of $h$ polygonal obstacles. 
The objective is to find $k$ non-crossing walks that connect terminal pairs in $G$ of minimum total length, if they exist. 
For the case of $h=2$, Takahashi et al.~\cite{DBLP:conf/isaac/TakahashiSN92} gave an $O(|V| \log |V|)$-time algorithm, 
and Papadopoulou~\cite{DBLP:journals/ijcga/Papadopoulou99} proposed a linear-time algorithm.  
For general $h$, Erickson and Nayyeri~\cite{DBLP:conf/soda/EricksonN11} gave a $2^{O(h^2)} |V| \log k$-time algorithm, 
which runs in polynomial time when $h$ is a fixed constant. 
They also showed that the existence of a feasible solution can be determined in linear time.


\subsection{Organization}

In Section~\ref{sec:preliminaries}, we introduce some notation and 
basic concepts in topology. 
Section~\ref{sec:surface} deals with rerouting disjoint curves, giving the proof of Theorem~\ref{thm:genus0}.
In Sections~\ref{sec:planar_graph} and~\ref{sec:proof2face}, we prove Theorems~\ref{thm:2facecalgo}, \ref{thm:stplanaralgo}, and \ref{thm:stplanarchara}.
Hardness results~(Theorems~\ref{thm:pspacecompl-twopaths} and \ref{thm:pspacecompl-planar}) are then proven in Section~\ref{sec:pspacecompl}.

\section{Preliminaries}
\label{sec:preliminaries}

For a positive integer $k$, let $[k] = \{1, 2, \dots , k\}$. 

Let $G=(V, E)$ be a graph. For a subgraph $H$ of $G$, the vertex set of $H$ is denoted by $V(H)$. 
Similarly, for a path $P$, let $V(P)$ denote the set of vertices in $P$.  
For $X \subseteq V$, let $N(X)$ be the set of vertices in $V \setminus X$ that are adjacent to the vertices in $X$. 
For a vertex set $U \subseteq V$, let $G \setminus U$ denote the graph obtained from $G$ by removing all the vertices in $U$ and the incident edges. 
For a path $P$ in $G$, we denote $G \setminus V(P)$ by $G \setminus P$ to simplify the notation. 
For disjoint vertex sets $X, Y \subseteq V$, 
we say that a vertex subset $U \subseteq V \setminus (X \cup Y)$ \emph{separates} $X$ and $Y$ if 
$G \setminus U$ contains no path between $X$ and $Y$. 
For distinct vertices $s, t \in V$, $U \subseteq V \setminus \{s, t\}$ is called an \emph{$s$-$t$ separator} if $U$ separates $\{s\}$ and $\{t\}$.

For \DPR (resp.~\stDPR), an instance is denoted by a triplet $(G, \mathcal{P}, \mathcal{Q})$, 
where $G$ is a graph and $\mathcal{P}$ and $\mathcal{Q}$ are linkages (resp.~$s$-$t$ linkages). 
Note that we omit the terminals, because they are determined by $\mathcal{P}$ and $\mathcal{Q}$. 
Since any instance has a trivial reconfiguration sequence when $k=1$, we may assume that $k\ge 2$.  
For linkages (resp.~$s$-$t$ linkages) $\mathcal{P}$ and $\mathcal{Q}$, 
we denote $\mathcal{P} \leftrightarrow \mathcal{Q}$ if $\mathcal{P}$ and $\mathcal{Q}$ are adjacent. 
Recall that $\mathcal{P} = (P_1, \dots , P_k)$ is \emph{adjacent} to $\mathcal{Q}= (Q_1, \dots , Q_k)$ if 
there exists $i \in [k]$ such that 
$P_j = Q_j$ for $j \in [k] \setminus \{i\}$ and $P_i \not= Q_i$.

For a graph $G$ embedded on a surface $\Sigma$, each connected region of $\Sigma \setminus G$ is called a \emph{face} of $G$. 
For a face $F$, its boundary is denoted by $\bd{F}$. 
When a graph $G$ is embedded on a surface $\Sigma$, 
a path in $G$ is sometimes identified with the corresponding curve in $\Sigma$. 
A graph embedded on a plane is called a \emph{plane graph}. 
A graph is said to be \emph{planar} if it has a planar embedding.

The following notion is well known in topology.
See \cite[Section~1.2.3]{FaMa12} for instance.

\begin{definition}
\label{def:local_int}
Let $C_1$ and $C_2$ be piecewise smooth oriented curves on an oriented surface and let $p \in C_1\cap C_2$ be a transverse double point.
The \emph{local intersection number} $\varepsilon_p(C_1,C_2)$ of $C_1$ and $C_2$ at $p$ is defined by $\varepsilon_p(C_1,C_2)=1$ if $C_1$ crosses $C_2$ from left to right and $\varepsilon_p(C_1,C_2)=-1$ if $C_1$ crosses $C_2$ from right to left (see Figure~\ref{fig:loc_int_num}).
When $\partial C_1\cap C_2 = C_1\cap\partial C_2 = \emptyset$, the \emph{algebraic intersection number} $\mu(C_1,C_2) \in \Z$ is defined to be the sum of $\varepsilon_p(C_1,C_2)$ over all $p \in C_1\cap C_2$ (after a small perturbation if necessary).
Note that $\partial C_i$ denotes the set of endpoints of $C_i$. 
\end{definition}

\begin{figure}[t]
    \centering
    \includegraphics{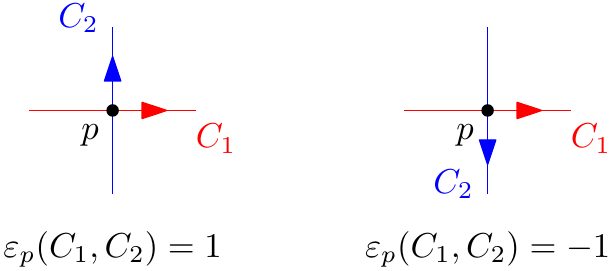}
    \caption{Local intersection numbers of curves $C_1$ and $C_2$ at $p$.}
    \label{fig:loc_int_num}
\end{figure}

\begin{figure}[t]
    \centering
    \includegraphics[width=10cm]{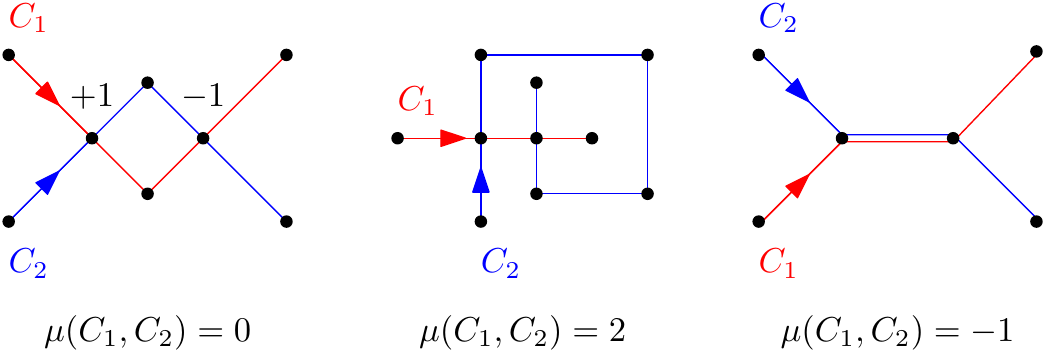}
    \caption{Algebraic intersection numbers of paths $C_1$ and $C_2$ on a graph.}
    \label{fig:alg_int_num}
\end{figure}


When a graph is embedded on an oriented surface, paths in the graph are piecewise smooth curves, and hence 
we can define the algebraic intersection number for a pair of paths (see Figure~\ref{fig:alg_int_num}).

\section{Curves on a Plane}
\label{sec:surface}

In this section, we consider the reconfiguration of curves on a plane and prove Theorem~\ref{thm:genus0}.
Suppose that we are given distinct points $s_1, \dots , s_k, t_1, \dots , t_k$ on a plane
and linkages $\mathcal{P}$ and $\mathcal{Q}$ that consist of curves on the plane connecting $s_i$ and $t_i$. 

Throughout this section, all intersections of curves are assumed to be transverse double points.
Fix $j\in[k]$ and let $\bigcup_{i\in[k]}P_i\cap Q_j = \{s_j,p_1,\dots,p_n,t_j\}$, where the $n+2$ points are aligned on $Q_j$ in this order.
We now define $w_j \in F_k$ by
\[
w_j=\prod_{\ell\in[n]} x_{i_\ell}^{\varepsilon_{p_\ell}(P_{i_\ell}, Q_j)},
\]
where $i_\ell\in[k]$ satisfies $p_\ell \in P_{i_\ell}\cap Q_j$.
Recall that $F_k$ denotes the free group generated by $x_1,\dots,x_k$.
We give an example in Figure~\ref{fig:counterexample}.

\begin{figure}[t]
    \centering
    \includegraphics{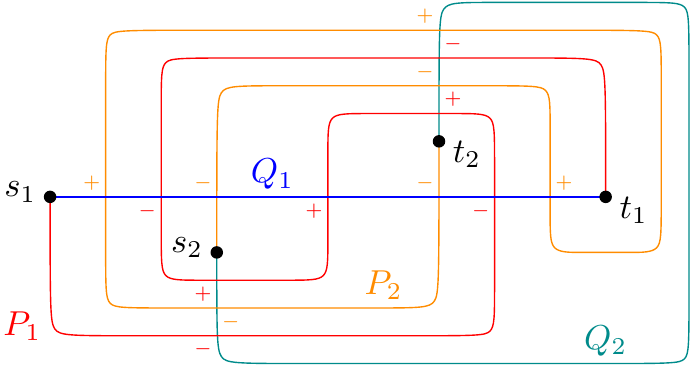}
    \caption{An example of linkages with $w_1=x_2x_1^{-1}x_2^{-1}x_1x_2^{-1}x_1^{-1}x_2$ and $w_2=x_1x_2^{-1}x_1^{-1}x_2x_1^{-1}x_2^{-1}x_1$.}
    \label{fig:counterexample}
\end{figure}

\begin{remark}
\label{rem:algintnum}
Let $\ab\colon F_k \to \Z^k$ denote the abelianization, that is, the $\ell$th entry of $\ab(w)$ is the sum of the exponents of $x_\ell$'s in $w$.
For distinct $i,j \in [k]$, the $i$th entry of $\ab(w_j)$ is equal to the algebraic intersection number $\mu(P_i,Q_j)\in \Z$ of $P_i$ and $Q_j$. 
Thus, $w_j \in \ang{x_j}$ implies that $\mu(P_i,Q_j)=0$ for any $i \in [k] \setminus \{j\}$.
\end{remark}

In the following two lemmas, we observe the behavior of $w_j$ under certain moves of curves.
For $j \in [k]$, let $w'_j$ denote the word defined by a linkage $\mathcal{P}'$ and the curve $Q_j$. 

\begin{lemma}
\label{lem:isotopy}
Let $i\in[k]$ and let $\mathcal{P}' = (P'_1, \dots , P'_k)$ be a linkage such that $P'_\ell=P_\ell$ if $\ell\neq i$, and $P'_i$ is isotopic to $P_i$ relative to $\{s_i,t_i\}$ in $\R^2\setminus \bigcup_{\ell\neq i}P_\ell$.
Then, $w'_j = w_j$ for $j \in [k] \setminus \{i\}$, and $w'_i = x_i^{e_1}w_i x_i^{e_2}$ for some $e_1,e_2\in\Z$.
\end{lemma}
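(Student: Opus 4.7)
The plan is to put the given isotopy in general position and then track the effect on $w_j$ of each elementary move that appears in the resulting decomposition.

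After a small perturbation, the family $\{P_i(t)\}_{t \in [0,1]}$ with $P_i(0) = P_i$ and $P_i(1) = P'_i$ consists of simple arcs disjoint from $\bigcup_{\ell\neq i} P_\ell$, each transverse to every $Q_j$ at interior points, except at finitely many critical times $0<t_1<\dots<t_N<1$. At each critical time exactly one of two codimension-one events occurs: (A) an \emph{interior tangency} between $P_i(t)$ and some $Q_j$, which creates or destroys a pair of intersections that are adjacent along $Q_j$ and carry opposite local intersection numbers; or (B) a \emph{boundary sweep} at $s_i$ or $t_i$, in which a single intersection of $P_i(t)$ with $Q_i$ slides into or out of that endpoint along $Q_i$. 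No triple-point event is possible because $P_i(t)$ stays simple, avoids every $P_\ell$ with $\ell\neq i$, and the $Q_j$'s are pairwise disjoint simple arcs. Moreover, a boundary sweep at $s_i$ (resp.\ $t_i$) can only involve $Q_i$, since among the $2k$ curves only $Q_i$ passes through $s_i$ (resp.\ $t_i$).

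The effect on $w_j$ of each elementary move is then immediate. A type (A) event involving $Q_j$ inserts or deletes a subword $x_i x_i^{-1}$ or $x_i^{-1} x_i$ in $w_j$, which is trivial in $F_k$; this holds for every $j$. A type (B) event at $s_i$ is invisible to $w_j$ for $j \neq i$ (since $s_i \notin Q_j$), while it multiplies $w_i$ on the left by some $x_i^{\pm 1}$, corresponding to the first letter of the word read along $Q_i$ starting at $s_i$. A type (B) event at $t_i$ similarly multiplies $w_i$ on the right by some $x_i^{\pm 1}$. Accumulating the contributions across all critical times yields $w'_j = w_j$ for $j\neq i$ and $w'_i = x_i^{e_1} w_i x_i^{e_2}$ for some $e_1,e_2\in\Z$.

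The main obstacle is the general-position step: one must verify that a generic one-parameter family of simple arcs rel fixed endpoints in $\R^2\setminus\bigcup_{\ell\neq i}P_\ell$ decomposes into exactly the events (A) and (B), ruling out self-tangencies, triple points, and simultaneous events by transversality, and carefully describing the boundary sweep near $s_i$ and $t_i$, since the standard Reidemeister-type moves do not cover the case of arcs sharing endpoints. It is precisely this asymmetry at the endpoints of $Q_i$ (which coincide with the fixed endpoints of $P_i$) that accounts for the additional prefix/suffix powers of $x_i$ in $w_i$ and for the invariance of $w_j$ when $j \neq i$.
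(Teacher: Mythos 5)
Your proof is correct and matches the paper's argument: the paper also decomposes the isotopy into the same two elementary move types (your type (A) is the paper's ``(II),'' a bigon move that inserts or deletes $x_i^{\pm 1}x_i^{\mp 1}$ and hence leaves every $w_j$ unchanged; your type (B) is the paper's ``(I),'' an endpoint sweep at $s_i$ or $t_i$ that multiplies $w_i$ on the left or right by $x_i^{\pm 1}$). The only cosmetic difference is that the paper obtains this decomposition by citing Farb--Margalit, whereas you justify it directly via general position for the one-parameter family of arcs.
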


\begin{proof}
By the definition of an isotopy (see \cite[Section~1.2.5]{FaMa12}), $P'_i$ is obtained from $P_i$ by a finite sequence of the moves illustrated in Figure~\ref{fig:isotopy}.
By (I), one intersection of $P_i$ and $Q_i$ is created or eliminated, and thus (I) changes $w_i$ to $w_ix_i^{\pm 1}$ or $x_i^{\pm 1}w_i$.
In (II), two intersections of $P_i$ and $Q_\ell$ are created or eliminated for some $\ell \in [k]$.
Since $x_i^{\pm 1}x_i^{\mp 1}=1$, $w_j$ is unchanged under (II) for any $j \in [k]$.
\end{proof}

\begin{figure}[t]
    \centering
    \includegraphics{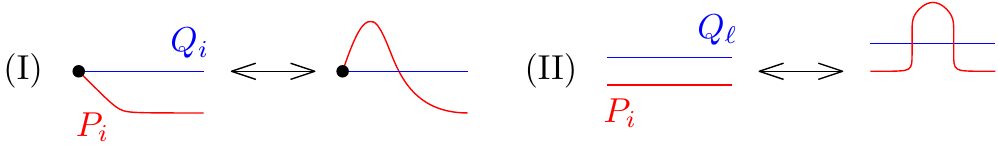}
    \caption{Local pictures of isotopies of $P_i$.}
    \label{fig:isotopy}
\end{figure}

Recall here that $\ang{x_\ell}$ denotes the subgroup of $F_k$ generated by $x_\ell$.

\begin{lemma}
\label{lem:slide}
Let $\gamma$ be a simple curve connecting $P_i$ and $s_j$ $(i\neq j)$ whose interior is disjoint from $\bigcup_{\ell \in [k]} P_\ell$, and define $P'_i$ as illustrated in Figure~\ref{fig:slide_gamma}.
Let $\mathcal{P}'$ be the linkage obtained from $\mathcal{P}$ by replacing $P_i$ with $P'_i$. 
For $\ell \in [k]$, if $w_\ell \in \ang{x_\ell}$, then $w'_\ell=w_\ell$.
\end{lemma}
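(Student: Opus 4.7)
The plan is to compare, for each $\ell$, the ordered list of intersections along $Q_\ell$ that defines $w_\ell$ with the corresponding list for $w'_\ell$, and to show that the latter reduces to the former in $F_k$. Since $P_i$ and $P'_i$ agree outside a thin tubular neighborhood $N$ of $\gamma$ together with a small disk $D$ around $s_j$, only crossings inside $N \cup D$ can be affected. Inside $N \cup D$, Figure~\ref{fig:slide_gamma} presents $P'_i$ as two near-parallel copies of $\gamma$ joined by a short arc through $D$ that runs close to $s_j$. By transversality and Lemma~\ref{lem:isotopy}, we may assume that no preexisting crossing of $P_i$ with any $Q_\ell$ lies inside $N \cup D$, so every new crossing of $P'_i$ with the $Q_\ell$'s is of a purely local nature.

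First I would treat all indices $\ell \ne j$. Shrinking $D$ if necessary so that $Q_\ell \cap D = \emptyset$, the only new intersections of $P'_i$ with $Q_\ell$ lie on the two parallel copies of $\gamma$ inside $N$. Each transverse crossing of $\gamma$'s interior with $Q_\ell$ produces exactly two nearby intersections of $P'_i$ with $Q_\ell$, one on each parallel copy; these are adjacent along $Q_\ell$ and carry opposite local intersection numbers, because the two parallels traverse $\gamma$ in opposite directions. Consequently these pairs contribute adjacent factors $x_i^{+1}x_i^{-1}$ (or $x_i^{-1}x_i^{+1}$) to $w'_\ell$, which reduce to the identity in $F_k$. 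This gives $w'_\ell = w_\ell$ for every $\ell \ne j$ without invoking the hypothesis.

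For $\ell = j$, the interior crossings of $\gamma$ with $Q_j$ are handled by the same cancellation argument; the only remaining contribution is the short arc of $P'_i$ inside $D$. Inside $D$, both $P_j$ and $Q_j$ emanate from $s_j$ and split $D$ into two sectors, and the disjointness requirement $P'_i \cap P_j = \emptyset$ (necessary for $\mathcal{P}'$ to be a linkage) forces the short arc into the sector not meeting $P_j$. As drawn in Figure~\ref{fig:slide_gamma}, the arc is routed inside that sector without crossing $Q_j$, producing no extra letter in $w'_j$, and hence $w'_j = w_j$. The main obstacle I expect is a rigorous justification that the short arc can always be drawn disjoint from $Q_j$, and this is precisely where the hypothesis $w_j \in \ang{x_j}$ enters: by Remark~\ref{rem:algintnum} the hypothesis forces $\mu(P_i, Q_j) = 0$, which, together with planarity and a preliminary isotopy of $P_i$ via Lemma~\ref{lem:isotopy}, lets me arrange the two parallels of $\gamma$ to lie on the same side of $Q_j$ near $s_j$, so that the short arc connecting them can indeed be taken disjoint from $Q_j$.
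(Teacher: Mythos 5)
Your proposal rests on a misreading of the slide move, and this misreading causes the rest of the argument to come out wrong even though it reaches the stated conclusion. You assume $P'_i$ differs from $P_i$ only inside a thin tube $N$ around $\gamma$ together with a small disk $D$ around $s_j$, with the new portion of $P'_i$ being two parallels of $\gamma$ joined by a short arc near $s_j$. But such a finger is isotopic to $P_i$ rel endpoints \emph{inside} $\R^2 \setminus \bigcup_{\ell\neq i} P_\ell$ (a short arc near $s_j$ that avoids $P_j$ does not wrap around anything), so this would make Lemma~\ref{lem:slide} a vacuous special case of Lemma~\ref{lem:isotopy}. The intended $P'_i$ is the non-isotopic one: since $s_j$ is an endpoint of $P_j$ and $P'_i$ must stay disjoint from $P_j$, the finger is forced to run parallel to the \emph{entire} path $P_j$, wrap around $t_j$, and come back on the other side. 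That is what makes the move a genuine twist around the generator of $\pi_1(\R^2\setminus\bigcup_{\ell\neq i} P_\ell)$ associated to $P_j$.

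With the geometry corrected, the two crossings you cancel for $\ell\neq j$ are no longer always adjacent along $Q_\ell$: whenever $Q_\ell$ crosses $P_j$ at an interior point, it now enters and exits the thin tube $P'_i$ draws around $P_j$, so the factor $x_j^{\pm1}$ in $w_\ell$ becomes $x_i\,x_j^{\pm1}\,x_i^{-1}$ in $w'_\ell$. This is the identity $w'_\ell = f_{ij}(w_\ell)$ from the paper (with $f_{ij}(x_j)=x_ix_jx_i^{-1}$ and $f_{ij}(x_m)=x_m$ otherwise), and it equals $w_\ell$ only when $w_\ell$ contains no $x_j$ letters — which is exactly what the hypothesis $w_\ell\in\ang{x_\ell}$ supplies. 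Your claim that the hypothesis is unnecessary when $\ell\neq j$ is therefore false. Similarly, for $\ell=j$ the short arc near $s_j$ cannot be routed to miss $Q_j$: $Q_j$ starts (and ends) inside the tube that $P'_i$ traces around $P_j$, so $P'_i$ unavoidably crosses $Q_j$ once near $s_j$ and once near $t_j$, giving $w'_j = x_i^{-1}f_{ij}(w_j)x_i$. The hypothesis is then used \emph{algebraically}: writing $w_j=x_j^{e_j}$ gives $x_i^{-1}(x_ix_jx_i^{-1})^{e_j}x_i=x_j^{e_j}=w_j$. The role you assign to $\mu(P_i,Q_j)=0$ — letting you place the two parallels of $\gamma$ on the same side of $Q_j$ near $s_j$ — does not arise; the extra crossings are not avoidable, only cancellable.
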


\begin{proof}
Define a group homomorphism $f_{ij}\colon F_k \to F_k$ by $f_{ij}(x_\ell)=x_\ell$ if $\ell\neq j$, and $f_{ij}(x_j)=x_ix_jx_i^{-1}$.
Then, one can check that $w'_\ell=f_{ij}(w_\ell)$ if $\ell\neq j$, and $w'_j=x_i^{-1}f_{ij}(w_j)x_i$ (see Figure~\ref{fig:slide_gamma}).
Since $w_\ell=x_\ell^{e_\ell}$ for some $e_\ell \in \Z$ by the assumption, we have $w'_\ell=w_\ell$ if $\ell\neq j$.
Also, one has
\[
w'_j=x_i^{-1}f_{ij}(w_j)x_i=x_i^{-1}(x_ix_jx_i^{-1})^{e_j}x_i=w_j.
\]
This completes the proof.
\end{proof}

\begin{figure}[t]
    \centering
    \includegraphics{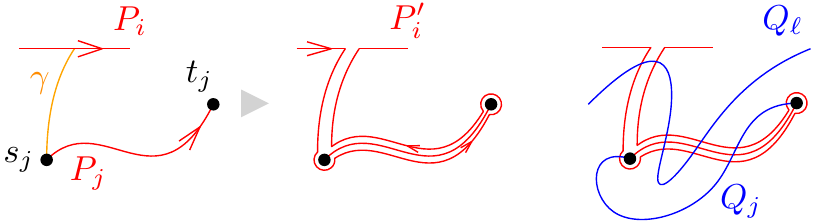}
    \caption{(Left) A move of $P_i$ along $\gamma$. (Right) Intersections of $P'_i$ and $\bigcup_\ell Q_\ell$.}
    \label{fig:slide_gamma}
\end{figure}

As a consequence of Lemmas~\ref{lem:isotopy} and \ref{lem:slide}, we obtain the following key lemma.

\begin{lemma}
\label{lem:invariant}
Suppose that $\mathcal{P}$ is reconfigurable to $\mathcal{P}'$. 
For $j \in [k]$, if $w_j \in \ang{x_j}$, then $w'_j \in \ang{x_j}$. 
\end{lemma}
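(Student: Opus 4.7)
The plan is to show that the property $w_j \in \ang{x_j}$ is preserved under any single adjacent-step move, and then to conclude by induction on the length of the reconfiguration sequence. Reducing to adjacent $\mathcal{P}, \mathcal{P}'$ that differ in exactly one index $i$, we have $P'_\ell = P_\ell$ for $\ell \neq i$, while $P_i$ and $P'_i$ are both simple arcs from $s_i$ to $t_i$ lying in the punctured surface $\Sigma_i := \R^2 \setminus \bigcup_{\ell \neq i} P_\ell$. The topological core of the argument is to show that this single replacement can itself be decomposed into a finite sequence of (a) ambient isotopies of $P_i$ inside $\Sigma_i$ relative to $\{s_i, t_i\}$, and (b) slides of $P_i$ along a simple curve ending near some terminal $s_\ell$ with $\ell \neq i$, exactly as in Lemma~\ref{lem:slide}.

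To obtain such a decomposition, one observes that $\Sigma_i$ is homotopy equivalent to a planar surface with $k-1$ punctures, so $\pi_1(\Sigma_i, s_i)$ is a free group of rank $k-1$, generated by small loops $\lambda_\ell$ each encircling one removed arc $P_\ell$. A slide around $s_\ell$ changes the homotopy class of $P_i$ (viewed as an arc relative to its endpoints) by a conjugate of $\lambda_\ell^{\pm 1}$. Since the $\lambda_\ell$ generate $\pi_1(\Sigma_i)$, a suitable composition of slides brings $P_i$ to the same homotopy class as $P'_i$ in $\Sigma_i$; a final ambient isotopy then matches the two arcs exactly, using the classical fact that homotopic simple arcs in a surface are isotopic relative to their endpoints.

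Given this decomposition, the conclusion is immediate from the previous two lemmas. For $j \neq i$, an isotopy preserves $w_j$ exactly by Lemma~\ref{lem:isotopy}, and a slide preserves $w_j$ under the hypothesis $w_j \in \ang{x_j}$ by Lemma~\ref{lem:slide}; for $j = i$, an isotopy replaces $w_i$ by $x_i^{e_1} w_i x_i^{e_2}$, which stays in $\ang{x_i}$, and a slide again leaves $w_i$ unchanged. Hence $w_j \in \ang{x_j}$ survives the entire reconfiguration. The main obstacle will be the decomposition step itself: one must carefully arrange each intermediate modification so that the resulting curve remains a simple arc disjoint from the other $P_\ell$'s, and verify that the sliding construction genuinely realises the claimed action on $\pi_1(\Sigma_i)$; these facts are standard in surface topology but require care, since $\Sigma_i$ is an open planar surface with slit-type punctures rather than simple punctures.
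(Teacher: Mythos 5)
Your proof is correct and follows the same route as the paper: reduce to a single adjacent step, decompose the replacement $P_i \to P'_i$ into isotopies rel endpoints inside $\R^2\setminus\bigcup_{\ell\neq i}P_\ell$ and slides around terminals, then invoke Lemmas~\ref{lem:isotopy} and~\ref{lem:slide}. The paper asserts this decomposition with almost no justification (noting only that $P_i$ and $P'_i$ are isotopic in $\R^2$), whereas you supply the standard topological explanation via $\pi_1$ of the punctured plane and the fact that homotopic simple arcs are ambient-isotopic rel endpoints, which is a reasonable expansion of what the paper leaves implicit.
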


\begin{proof}
It suffices to consider the case when there is $i\in[k]$ such that $P'_\ell = P_\ell$ if $\ell\neq i$, and $P'_i \neq P_i$.
Since $P_i$ is isotopic to $P'_i$ (relative to $\{s_i,t_i\}$) in $\R^2$, the curve $P'_i$ is obtained from $P_i$ by the moves in Lemmas~\ref{lem:isotopy} and \ref{lem:slide}.
Therefore, these lemmas imply that if $w_j \in \ang{x_j}$ then $w'_j \in \ang{x_j}$.
\end{proof}

With this key lemma, we can prove Theorem~\ref{thm:genus0} stating that 
$\mathcal{P}$ is reconfigurable to $\mathcal{Q}$ if and only if $w_j \in \ang{x_j}$ for any $j \in [k]$.

\begin{proof}[Proof of Theorem~\ref{thm:genus0}]
First suppose that $\mathcal{P}$ is reconfigurable to $\mathcal{Q}$, namely $\mathcal{P}$ is reconfigurable to $\mathcal{P}'$ such that $P'_i\cap Q_i=\{s_i,t_i\}$ and $P'_i\cap Q_j=\emptyset$ for $j \in [k] \setminus \{i\}$.
Then, $w'_j=1$ for any $j \in [k]$.
Since $\mathcal{P}'$ is reconfigurable to $\mathcal{P}$, Lemma~\ref{lem:invariant} implies that $w_j \in \ang{x_j}$ for any $j \in [k]$.

The converse is shown by induction on the number, say $n$, of intersections of $\mathcal{P}$ and $\mathcal{Q}$ except their endpoints.
The case $n=0$ is obvious.
Let us consider the case $n\geq 1$. 
If $P_i \cap Q_j = \emptyset$ for any pair of distinct $i, j \in [k]$, then the reconfiguration is obviously possible. 
Otherwise, there exists $x_ix_i^{-1}$ or $x_i^{-1}x_i$ in the product of the definition of $w_{j^*}$ 
for some $i,j^* \in [k]$ (possibly $i=j^*$).
This means that $P_i$ can be reconfigured to a curve $P'_i$ as illustrated in Figure~\ref{fig:adjacent}.
This process eliminates at least two intersections and we have $w'_j \in \ang{x_j}$ for any $j \in [k]$ by Lemma~\ref{lem:invariant}.
Thus, the induction hypothesis concludes that $\mathcal{P}'$ is reconfigurable to $\mathcal{Q}$.
\end{proof}

\begin{figure}[t]
    \centering
    \includegraphics{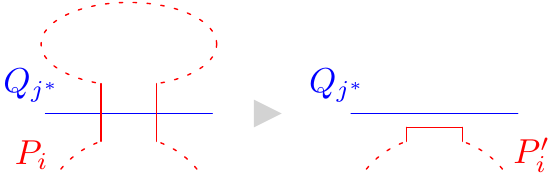}
    \caption{A reconfiguration of $P_i$ to $P'_i$.}
    \label{fig:adjacent}
\end{figure}

By Theorem~\ref{thm:genus0} and Remark~\ref{rem:algintnum}, we obtain the following corollary. 

\begin{corollary}
\label{cor:algintnum}
Let $\mathcal{P} = (P_1, \dots , P_k)$ and $\mathcal{Q} = (Q_1, \dots , Q_k)$ be linkages on a plane \textup{(}or a sphere\textup{)}.
If $\mathcal{P}$ is reconfigurable to $\mathcal{Q}$, then $\mu(P_i, Q_j) = 0$ for any distinct $i, j \in [k]$. 
\end{corollary}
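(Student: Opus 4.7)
The plan is to derive the corollary immediately by chaining together Theorem~\ref{thm:genus0} and Remark~\ref{rem:algintnum}, so the argument will be essentially one paragraph with no real obstacle.

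First I would invoke Theorem~\ref{thm:genus0}: since $\mathcal{P}$ is reconfigurable to $\mathcal{Q}$, for every $j \in [k]$ the word $w_j \in F_k$ lies in the cyclic subgroup $\ang{x_j}$, so $w_j = x_j^{e_j}$ for some integer $e_j$. Next I would apply the abelianization map $\ab\colon F_k \to \Z^k$ to obtain $\ab(w_j) = e_j \cdot \mathbf{e}_j$, where $\mathbf{e}_j$ denotes the $j$th standard basis vector of $\Z^k$; in particular, the $i$th entry of $\ab(w_j)$ vanishes whenever $i \neq j$.

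Finally I would invoke Remark~\ref{rem:algintnum}, which identifies the $i$th entry of $\ab(w_j)$ with the algebraic intersection number $\mu(P_i, Q_j)$ for distinct $i, j \in [k]$. Combining this with the previous step yields $\mu(P_i, Q_j) = 0$ for all distinct $i, j$, as required. There is no hard step here; the corollary is simply the abelianized shadow of the stronger statement in Theorem~\ref{thm:genus0}, and the only subtlety (that $\ab(w_j)$ records the signed intersection counts with the other curves) has already been recorded in the remark.
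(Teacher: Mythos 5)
Your proof is correct and follows exactly the paper's intended route: the paper states the corollary with the remark "By Theorem~\ref{thm:genus0} and Remark~\ref{rem:algintnum}, we obtain the following corollary," and you have simply spelled out that one-line chain. Nothing to add.
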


It is worth mentioning that the converse is not necessarily true as illustrated in Figure~\ref{fig:counterexample}.
This means that a ``non-commutative'' tool such as the free group $F_k$ is essential to describe the complexity of the reconfiguration of curves on a plane.

\section{Algorithms for Planar Graphs}
\label{sec:planar_graph}

In this section, we consider the reconfiguration in planar graphs and prove 
Theorems~\ref{thm:2facecalgo}, \ref{thm:stplanaralgo}, and \ref{thm:stplanarchara}. 
We deal with one-face instances and two-face instances of \DPR in Sections~\ref{sec:1face} and \ref{sec:2face}, respectively. 
Then, we discuss \stDPR in Section~\ref{sec:stalgorithm}. 
A proof of a key theorem (Theorem~\ref{thm:2facechara}) is postponed to Section~\ref{sec:proof2face}.

\subsection{One-Face Instance}
\label{sec:1face}

We say that an instance $(G, \mathcal{P}, \mathcal{Q})$ of \DPR is a \emph{one-face instance}
if $G$ is a plane graph and all the terminals are on the boundary of some face. 
We show that $\mathcal{P}$ is always reconfigurable to $\mathcal{Q}$ in a one-face instance. 

\begin{proposition}
\label{prop:1face}
For any one-face instance $(G, \mathcal{P}, \mathcal{Q})$ of \DPR, 
$\mathcal{P}$ is reconfigurable to $\mathcal{Q}$. 
\end{proposition}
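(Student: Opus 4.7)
The plan is to prove the proposition by induction on $k$, the number of terminal pairs. The base case $k=1$ is immediate, since the sequence $\langle (P_1),(Q_1)\rangle$ is already a valid reconfiguration: any $s_1$-$t_1$ path in $G$ is a legitimate $1$-linkage.

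For the inductive step $k\geq 2$, let $F$ be the common face carrying all $2k$ terminals on $\partial F$. Since $\mathcal{P}$ is a linkage, the pairing $\{(s_i,t_i)\}$ induces a non-crossing matching on the cyclic sequence of terminals along $\partial F$, so some pair is \emph{innermost}: there exist $i$ and an arc $A\subseteq\partial F$ from $s_i$ to $t_i$ containing no other terminal. After relabeling I take $i=k$. The overall strategy has two phases: (i) reconfigure $\mathcal{P}$ to a linkage $\mathcal{P}^{*}$ whose $k$-th component equals $Q_k$, and (ii) delete the internal vertices of $Q_k$ from $G$ to obtain a plane graph $G'$. Removing these vertices and the incident edges of $Q_k$ merges the old face $F$ with the faces that were enclosed by $Q_k\cup A$, so the remaining $2(k-1)$ terminals still lie on the boundary of a single face of $G'$. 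Both $(P^{*}_1,\dots,P^{*}_{k-1})$ and $(Q_1,\dots,Q_{k-1})$ are $(k-1)$-linkages in $G'$, so the inductive hypothesis furnishes a reconfiguration in $G'$ between them, which lifts verbatim to a reconfiguration in $G$ that keeps the $k$-th component fixed at $Q_k$ throughout.

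The main obstacle is phase (i). The innermost property gives the key topological leverage: the disk $D$ bounded by $Q_k\cup A$ contains no terminal other than $s_k,t_k$, so every $Q_j$ with $j\neq k$ lies entirely outside $D$ and is disjoint from $V(Q_k)$. Consequently $(Q_1,\dots,Q_{k-1},Q_k)$ is a legitimate target linkage with $Q_k$ as its $k$-th path, but one must actually \emph{reach} it from $\mathcal{P}$ via valid one-path changes. My plan is to iteratively identify any $P_j$ ($j\neq k$) that uses a vertex of $V(Q_k)$, replace it in a single step by an alternative $s_j$-$t_j$ path $P'_j$ that avoids $V(Q_k)$ while remaining disjoint from the other current paths, and finally swap $P_k$ for $Q_k$ once every $P_j$ with $j\neq k$ avoids $V(Q_k)$.

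The technically delicate point is justifying the existence of each such detour $P'_j$. I expect this to follow from a Menger-type connectivity argument inside the subgraph $G\setminus\bigl(V(Q_k)\cup\bigcup_{\ell\neq j}V(P_\ell)\bigr)$, using planarity together with the innermost disk structure: since $Q_j$ itself is one $s_j$-$t_j$ path disjoint from $V(Q_k)$, there is enough ``room'' in the graph to route $P'_j$ around $V(Q_k)$, and planarity lets us arrange disjointness from the other current $P_\ell$'s by routing the detour on the appropriate side of the disk $D$. Handling the interaction of successive reroutings, so that the process terminates with all $P_j$ ($j\neq k$) avoiding $V(Q_k)$, is where most of the care is needed.
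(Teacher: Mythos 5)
Your high-level plan (locate an innermost terminal pair on $\partial F$, peel off its path, recurse on the remaining $(k-1)$-linkage) matches the paper's, and your base case and phase~(ii) are fine: removing the chosen $s_k$-$t_k$ path merges faces so that the remaining terminals stay on one face, and the inductive hypothesis applies. The genuine gap is in phase~(i), and you have already put your finger on it yourself.

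You anchor phase~(i) on $Q_k$, the \emph{target} path, and then need to clear every $P_j$ ($j\neq k$) off $V(Q_k)$ one reconfiguration step at a time. The Menger-type existence argument you sketch does not deliver this: what you need is not just \emph{some} $s_j$-$t_j$ path avoiding $V(Q_k)$, but one avoiding $V(Q_k)$ \emph{and} all current $P_\ell$ ($\ell\neq j$) simultaneously. The fact that $Q_j$ avoids $V(Q_k)$ is no help, because $Q_j$ may be blocked by other current $P_\ell$'s, and those $P_\ell$'s may in turn need to be moved first, possibly re-entering $V(Q_k)$. Nothing in the sketch rules out a cyclic obstruction, and ``the appropriate side of the disk $D$'' does not resolve it because $D$ is defined by $Q_k$, whereas the current paths $P_\ell$ respect the disk defined by $P_k$, which can be a different region. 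As written, phase~(i) is essentially a restatement of the problem for a single coordinate rather than a reduction of it.

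The paper avoids this entirely by a sharper choice of anchor: instead of $Q_k$, it takes $R_i$, the \emph{boundary arc} of $\partial F$ from $s_i$ to $t_i$ through no other terminal (your arc $A$, viewed as a path in $G$). By a Jordan-curve argument, $R_i$ is automatically vertex-disjoint from \emph{every} $P_j$ and $Q_j$ with $j\neq i$, because such a path would have to cross $P_i$ (resp.\ $Q_i$) or enter the face $F$ to reach an interior vertex of $R_i$. Hence $\mathcal{P}\leftrightarrow\mathcal{P}'$ and $\mathcal{Q}\leftrightarrow\mathcal{Q}'$, each in a \emph{single} step, where the $i$-th component of both $\mathcal{P}'$ and $\mathcal{Q}'$ is $R_i$; one then deletes $R_i$ and recurses (the paper inducts on $|V(G)|$ and first reduces to the $2$-connected case, but that is a minor bookkeeping difference from your induction on $k$). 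This is precisely the ingredient missing from your phase~(i): replacing $Q_k$ by $R_k$ as the anchor collapses the entire rerouting phase to one move on each side and removes the need for any detour argument.
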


\begin{proof}
We prove the proposition by induction on the number of vertices in $G$. 
Let $I = (G, \mathcal{P}, \mathcal{Q})$ be a one-face instance of \DPR. 

If $G$ is not connected, then we can consider each connected component separately. 
If $G$ is connected but not $2$-connected, then there exist subgraphs $G_1$ and $G_2$ such that 
$G = G_1 \cup G_2$ and $V(G_1) \cap V(G_2) = \{v\}$ for some $v \in V$. 
For $i=1, 2$, define $I_i$ as the restriction of $I$ to $G_i$ if some path in $\mathcal{P} \cup \mathcal{Q}$ uses an edge in $G_i$ incident to $v$, 
and define $I_i$ as the restriction of $I$ to $G_i \setminus \{v\}$ otherwise. 
Since $I_1$ and $I_2$ are one-face instances (or trivial instances with at most one terminal pair), the induction hypothesis shows that 
there exist reconfiguration sequences for $I_1$ and $I_2$. 
By combining them, we obtain a reconfiguration sequence from $\mathcal{P}$ to $\mathcal{Q}$.

In what follows, suppose that $G$ is $2$-connected.  
Let $F$ be a face of $G$ whose boundary contains all the terminals. 
Note that the boundary of $F$ forms a cycle, because $G$ is $2$-connected. 
Since there is a linkage, for some $i \in [k]$, there exists an $s_i$-$t_i$ path $R_i$ along $\bd{F}$  
that contains no terminals other than $s_i$ and $t_i$. 
Since $\mathcal{P}$ and $\mathcal{Q}$ are linkages, $P_j$ and $Q_j$ are disjoint from $R_i$ for $j \in [k] \setminus \{i\}$. 
Define $\mathcal{P}'$ (resp.~$\mathcal{Q}'$) as the linkage that is obtained from $\mathcal{P}$ (resp.~$\mathcal{Q}$) by replacing $P_i$ (resp.~$Q_i$) with $R_i$. 
Then, $\mathcal{P} \leftrightarrow \mathcal{P}'$ and $\mathcal{Q} \leftrightarrow \mathcal{Q}'$. 

Since $(G \setminus R_i, \mathcal{P}' \setminus \{R_i\}, \mathcal{Q}' \setminus \{R_i\})$ is a one-face instance, 
$\mathcal{P}' \setminus \{R_i\}$ is reconfigurable to $\mathcal{Q}' \setminus \{R_i\}$ in $G \setminus R_i$ by the induction hypothesis. 
This implies that $\mathcal{P}'$ is reconfigurable to $\mathcal{Q}'$ in $G$. 
Therefore, $\mathcal{P}$ is reconfigurable to $\mathcal{Q}$ in $G$. 
\end{proof}

\subsection{Two-Face Instance}
\label{sec:2face}

Let $k \ge 2$. 
We say that an instance $(G, \mathcal{P}, \mathcal{Q})$ of \DPR is a \emph{two-face instance}
if $G=(V, E)$ is a plane graph, $s_1, \dots , s_k$ are on the boundary of some face $S$, and 
$t_1, \dots , t_k$ are on the boundary of another face $T$.
The objective of this subsection is to present a polynomial-time algorithm for two-face instances.

It suffices to consider the case when the graph is $2$-connected, 
since otherwise we can easily reduce to the $2$-connected case. 
Hence, we may assume that the boundary of each face forms a cycle. 
For ease of explanation, without loss of generality, we assume that $G$ is embedded on $\R^2$ so that $S$ is an inner face and $T$ is the outer face. 
Furthermore, we may assume that
$s_1, \dots , s_k$ lie on the boundary of $S$ clockwise in this order and 
$t_1, \dots , t_k$ lie on the boundary of $T$ clockwise in this order, because 
there is a linkage. 

A vertex set $U \subseteq V$ is called 
a \emph{terminal separator} if 
$U$ separates $\{s_1, \dots , s_k\}$ and $\{t_1, \dots , t_k\}$.  
For two curves (or paths) $P$ and $Q$ between $\bd{S}$ and $\bd{T}$ that share no endpoints, 
define $\mu(P, Q)$ as in Definition~\ref{def:local_int}. 
That is, $\mu(P, Q)$ is the number of times $P$ crosses $Q$ from left to right
minus the number of times $P$ crosses $Q$ from right to left, 
where we suppose that $P$ and $Q$ are oriented from $\bd{S}$ to $\bd{T}$. 
Since $\mu(P_i, Q_j)$ takes the same value for distinct $i, j \in [k]$ (see Appendix~\ref{sec:2facesamevalue}), this value is denoted by $\mu(\mathcal{P}, \mathcal{Q})$. 
Roughly, $\mu(\mathcal{P}, \mathcal{Q})$ indicates the difference of the numbers of rotations around $S$ of the linkages.

The existence of a linkage shows that the graph has no terminal separator of size less than $k$. 
If the graph has no terminal separator of size $k$, then we can characterize the reconfigurability by using $\mu(\mathcal{P}, \mathcal{Q})$. 
The following is a key theorem in our algorithm, whose proof is given in Section~\ref{sec:proof2face}. 

\begin{theorem}
\label{thm:2facechara}
Let $k \ge 2$. 
Suppose that a two-face instance $(G, \mathcal{P}, \mathcal{Q})$ of \DPR has no terminal separator of size $k$. 
Then, $\mathcal{P}$ is reconfigurable to $\mathcal{Q}$ if and only if 
$\mu(\mathcal{P}, \mathcal{Q}) = 0$. 
\end{theorem}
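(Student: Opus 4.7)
The plan is to prove the two directions of the equivalence separately, with necessity following from a topological invariance argument in the annulus, and sufficiency --- the substantive half --- using the no-$k$-separator hypothesis to provide extra room for rerouting.

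For necessity, regard $G$ as embedded in a closed annulus $A$ with boundary circles $\bd S$ and $\bd T$, so that each path in $\mathcal{P}$ or $\mathcal{Q}$ becomes an arc connecting the two boundary components of $A$. The claim is that $\mu(\mathcal{P},\mathcal{Q})$ is invariant under a single reconfiguration move. If such a move replaces $P_i$ by $P_i'$, then the loop $\gamma := P_i \cdot (P_i')^{-1}$ is disjoint from every $P_\ell$ with $\ell \neq i$. Since $P_\ell$ is an arc joining the two boundary components of $A$, cutting along $P_\ell$ turns $A$ into a disk, so $A \setminus P_\ell$ is simply connected and $\gamma$ is null-homotopic in $A$. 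Additivity of the algebraic intersection number then yields $\mu(P_i, Q_j) = \mu(P_i', Q_j)$ for every $j \neq i$. Iterating along a reconfiguration sequence from $\mathcal{P}$ to $\mathcal{Q}$, and using $\mu(Q_i, Q_j) = 0$ for distinct $i, j$ (disjoint arcs carry no transverse intersections), forces $\mu(\mathcal{P}, \mathcal{Q}) = 0$.

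For sufficiency, assume $\mu(\mathcal{P}, \mathcal{Q}) = 0$ and that $G$ has no terminal separator of size $k$. By the set version of Menger's theorem, the latter hypothesis is equivalent to the existence of $k+1$ internally vertex-disjoint paths from $\{s_1, \dots, s_k\}$ to $\{t_1, \dots, t_k\}$; that is, $\mathcal{P}$ comes with one ``extra pipe'' beyond the linkage itself. I plan an induction on $|V(G)|$: in the inductive step, reconfigure $\mathcal{P}$ until some $P_i$ coincides with $Q_i$, then delete that terminal pair and recurse on the smaller instance in $G \setminus V(P_i)$. To force $P_i = Q_i$, I would choose an ``extremal'' index $i$ --- for instance, one whose winding numbers under $\mathcal{P}$ and $\mathcal{Q}$ around $S$ differ minimally --- and use the extra pipe as temporary workspace for local reroutes that unwind $P_i$ one turn at a time toward $Q_i$. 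The hypothesis $\mu(\mathcal{P},\mathcal{Q}) = 0$ bounds the total unwinding required.

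The main obstacle will be controlling the reduction: neither $\mu = 0$ nor the no-$k$-separator condition is automatically preserved by vertex deletion, so the index $i$ must be chosen so that, after making $P_i = Q_i$ and removing $V(P_i)$, the residual graph still has no terminal separator of size $k-1$ for the remaining terminal pairs, and the restricted linkages still satisfy the $\mu$ condition. I expect to handle this by arguing that the extra pipe in $G$ can be ``traded'' against the connectivity used by $P_i$, so that enough connectivity survives the deletion. A complementary subtlety is to produce a complexity measure --- for example, the total number of transverse intersections of $\bigcup_i P_i$ with $\bigcup_i Q_i$ in $A$, combined with the absolute winding of $\mathcal{P}$ relative to $\mathcal{Q}$ around $S$ --- that strictly decreases under each unwinding move, guaranteeing termination.
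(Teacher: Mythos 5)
Your necessity argument is sound and essentially matches the paper's: both exploit the homological invariance of the algebraic intersection number under a single rerouting move (the paper routes this through its curve-rerouting theorem and its Corollary~\ref{cor:algintnum}; you argue directly that the loop $P_i\cdot\overline{P_i'}$ is null-homologous in the annulus because it misses some other $P_\ell$, which cuts the annulus into a disk). Either formulation works.

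The sufficiency direction, however, takes a genuinely different route from the paper, and as written it has a real gap. The paper lifts $\mathcal{P}$ and $\mathcal{Q}$ to the universal cover of the annulus (an infinite strip obtained by unrolling along a reference arc $C$), defines a partial order $\preceq$ by comparing ``left regions'' of the lifted paths, constructs the join linkage $\mathcal{P}\lor\mathcal{Q}$ there, and proves the key step: if $\mathcal{P}\prec\mathcal{Q}$ and no adjacent $\mathcal{P}'$ satisfies $\mathcal{P}\prec\mathcal{P}'\preceq\mathcal{Q}$, then a ``blocked'' sequence of vertices, one per $P_i$, can be closed up into a simple cycle around $S$ of exactly $k$ vertices, yielding a terminal separator of size $k$ — contradiction. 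That is where the no-$k$-separator hypothesis does its work, and the $\preceq$-order gives a monotone measure for termination.

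Your plan instead inducts on $|V(G)|$ by first forcing some $P_i=Q_i$ and then deleting $V(P_i)$. There are two problems. First, the worry you raise about preserving $\mu=0$ and the separator condition in the residual instance is actually moot, because once $P_i=Q_i$ is removed the annulus becomes a disk and the remaining terminals all lie on one face — a one-face instance, which the paper shows is \emph{always} reconfigurable (Proposition~\ref{prop:1face}) with no hypothesis at all. You did not notice this simplification, and so you aimed at the wrong target. Second, and more seriously, the step that carries all the weight — ``use the extra pipe as temporary workspace for local reroutes that unwind $P_i$ one turn at a time toward $Q_i$'' — is not an argument. Menger gives you $k+1$ disjoint paths abstractly, but they need not be aligned with $\mathcal{P}$ or $\mathcal{Q}$, the other $k-1$ current paths $P_j$ obstruct the rerouting, and you give no mechanism showing that a legal single-path reroute decreasing your proposed potential function always exists under the no-$k$-separator hypothesis. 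This is exactly the content of the paper's key lemma (the blocked-cycle/separator argument), and your proposal does not supply a replacement for it; you explicitly flag it as an expectation rather than a proof. As it stands, the sufficiency half is a sketch of intent rather than a proof, and the central mechanism is missing.
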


By using this theorem, we can design a polynomial-time algorithm for two-face instances of \DPR and prove Theorem~\ref{thm:2facecalgo}.

\begin{proof}[Proof of Theorem~\ref{thm:2facecalgo}]
Suppose that we are given a two-face instance  $I = (G, \mathcal{P}, \mathcal{Q})$ of \DPR. 

We first test whether $I$ has a terminal separator of size $k$, 
which can be done in polynomial time by a standard minimum cut algorithm.
If there is no terminal separator of size $k$, then Theorem~\ref{thm:2facechara} shows that
we can easily solve \DPR by checking whether $\mu(\mathcal{P}, \mathcal{Q}) = 0$ or not. 

Suppose that we obtain a terminal separator $U$ of size $k$. 
Then, we obtain subgraphs $G_1$ and $G_2$ of $G$ such that $G=G_1 \cup G_2$, $V(G_1) \cap V(G_2) = U$, 
$\{s_1, \dots , s_k\} \subseteq V(G_1)$, and $\{t_1, \dots , t_k\} \subseteq V(G_2)$. 
We test whether $V(P_i) \cap U = V(Q_i) \cap U$ holds for any $i \in [k]$ or not, where we note that each of $V(P_i) \cap U$ and $V(Q_i) \cap U$ consists of a single vertex. 
If this does not hold, then we can immediately conclude that $\mathcal{P}$ is not reconfigurable to $\mathcal{Q}$, 
because $V(P_i) \cap U$ does not change in the reconfiguration. 
If $V(P_i) \cap U = V(Q_i) \cap U$ for $i \in [k]$, then 
we consider the instance $I_i =(G_i, \mathcal{P}_i, \mathcal{Q}_i)$ for $i=1, 2$, 
where $\mathcal{P}_i$ and $\mathcal{Q}_i$ are the restrictions of $\mathcal{P}$ and $\mathcal{Q}$ to $G_i$. 
That is, $I_i$ is the restriction of $I$ to $G_i$. 
Then, we see that $\mathcal{P}$ is reconfigurable to $\mathcal{Q}$ if and only if 
$\mathcal{P}_i$ is reconfigurable to $\mathcal{Q}_i$ for $i=1, 2$. 
Since $I_1$ and $I_2$ are one-face or two-face instances, by solving them recursively, 
we can solve the original instance $I$ in polynomial time. 
See Algorithm~\ref{alg:2face} for a pseudocode of the algorithm. 
\end{proof}

\begin{algorithm}[h]
    \KwInput{A two-face instance $I = (G, \mathcal{P}, \mathcal{Q})$ of DPR.} 
    \KwOutput{Is $\mathcal{P}$ reconfigurable to $\mathcal{Q}$?}
    Compute a terminal separator $U$ of size $k$\; 
    \If{such $U$ does not exist}{
    	\If{$\mu(\mathcal{P}, \mathcal{Q}) = 0$}{\Return{\yes}}
    	\Else{\Return{\no}}
    	}
    \Else{
        \If{$V(P_i) \cap U \neq V(Q_i) \cap U$ for some $i \in [k]$}{\Return{\no}}
        \Else{
        Construct $G_1$ and $G_2$\;
        Solve the restriction $I_i$ of $I$ to $G_i$ for $i=1, 2$, recursively\;
            \If{$I_i$ is a {\yes}-instance for $i=1, 2$}{\Return{\yes}}
            \Else{\Return{\no}}
        }
    }
    \caption{Algorithm for two-face instances of DPR}\label{alg:2face}
\end{algorithm}

\subsection{Reconfiguration of $s$-$t$ Paths}
\label{sec:stalgorithm}

In this subsection, for \stDPR in planar graphs, 
we show results that are analogous to Theorems~\ref{thm:2facechara} and~\ref{thm:2facecalgo}, 
which have been already stated in Section~\ref{sec:contribution}. 

\stcharacterization*

\begin{proof}
Suppose that $G$, $s$, $t$, $\mathcal{P}$, and $\mathcal{Q}$ are as in the statement, and 
assume that there is no $s$-$t$ separator of size $k$. 
We fix an embedding of $G$ on the plane. 
If there is an edge connecting $s$ and $t$, then 
$s$ and $t$ are on the boundary of some face, and hence $\mathcal{P}$ is reconfigurable to $\mathcal{Q}$ in the same way as Proposition~\ref{prop:1face}. 
Thus, it suffices to consider the case when there is no edge connecting $s$ and $t$. 

We now construct an instance of \DPR by replacing $s$ and $t$ with large ``grids'' as follows. 
Let $e_1, e_2, \dots , e_{\ell}$ be the edges incident to $s$ clockwise in this order. 
Note that $\ell \ge k+1$ holds, because $G$ has no $s$-$t$ separator of size $k$. 
For $i \in [\ell]$, we subdivide $e_i$ by introducing $p$ new vertices $v^1_i, v^2_i, \dots , v^{p}_i$ such that 
they are aligned in this order and $v^1_i$ is closest to $s$, where $p$ is a sufficiently large integer (e.g., $p \ge |V|^2)$. 
For $i \in [\ell]$ and for $j \in [p]$, we introduce a new edge connecting $v^j_i$ and $v^j_{i+1}$, where $v^j_{\ell+1} = v^j_{1}$. 
Define $s_i = v^1_i$ for $i \in [k]$ and remove $s$. 
Then, the graph is embedded on the plane and $s_1, \dots , s_k$ are on the boundary of some face clockwise in this order; see \figurename~\ref{fig:410}. 
By applying a similar procedure to $t$, 
we modify the graph around $t$ and define $t_1, \dots , t_k$ that are on the boundary of some face counter-clockwise in this order. 
Let $G'$ be the obtained graph. 
Observe that $G'$ contains no terminal separator of size $k$, because
$G$ has no $s$-$t$ separator of size $k$. 

\begin{figure}[t]
    \centering
    \includegraphics{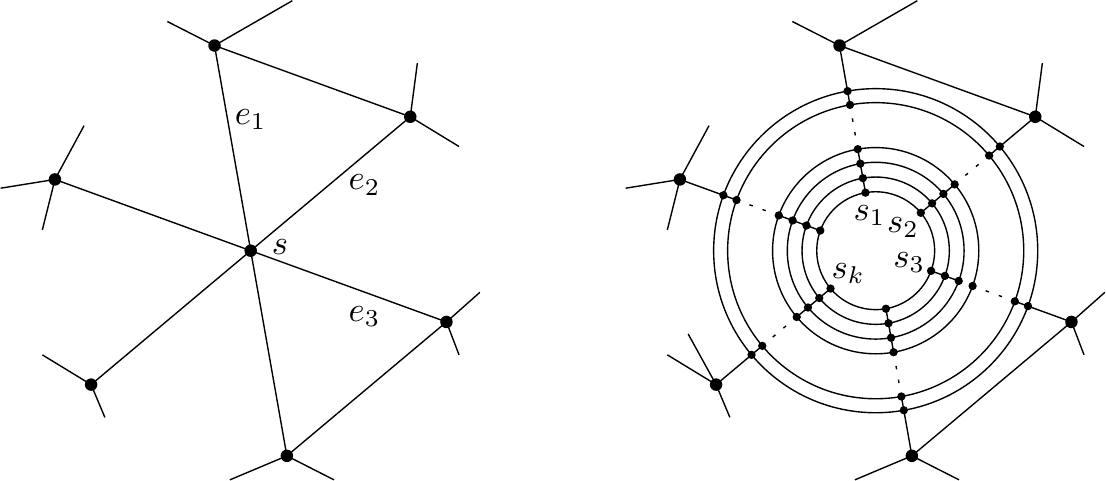}
    \caption{(Left) Original graph $G$.
    (Right) Modification around $s$.}
    \label{fig:410}
\end{figure}

By rerouting the given $s$-$t$ linkages $\mathcal{P}$ and $\mathcal{Q}$ around $s$ and $t$, 
we obtain linkages $\mathcal{P}'$ and $\mathcal{Q}'$ from $\{s_1, \dots , s_k\}$ to $\{t_1, \dots , t_k\}$ in $G'$. 
Note that the restrictions of $\mathcal{P}$ and $\mathcal{Q}$ to $G \setminus \{s, t\}$ coincide with those of $\mathcal{P}'$ and $\mathcal{Q}'$, respectively. Then, we can take $\mathcal{P}'$ and $\mathcal{Q}'$ so that $|\mu(\mathcal{P}', \mathcal{Q}')| \le |V|$. 
Furthermore, by using at most $|V|$ concentric cycles around $s$ and $t$, we can reroute the linkages so that the value $\mu(\mathcal{P}', \mathcal{Q}')$ decreases or increases by one. Therefore, by using $p \ge |V|^2$ concentric cycles, we can reroute $\mathcal{P}'$ and $\mathcal{Q}'$ so that $\mu(\mathcal{P}', \mathcal{Q}')$ becomes zero. 

By Theorem~\ref{thm:2facechara}, $\mathcal{P}'$ is reconfigurable to $\mathcal{Q}'$ in $G'$ (in terms of \DPR). 
Then, the reconfiguration sequence from $\mathcal{P}'$ to $\mathcal{Q}'$ corresponds to that from 
$\mathcal{P}$ to $\mathcal{Q}$ in $G$ (in terms of \stDPR). 
Therefore, $\mathcal{P}$ is reconfigurable to $\mathcal{Q}$ in $G$. 
\end{proof}

\stalgorithm*

\begin{proof}
Suppose that we are given a planar graph $G=(V, E)$ with $s, t \in V$ 
and $s$-$t$ linkages $\mathcal{P}= \{P_1, \dots , P_k\}$ and $\mathcal{Q}= \{Q_1, \dots , Q_k\}$ in $G$. 
We first test whether $G$ has an $s$-$t$ separator of size $k$. 
If there is no such a separator, then 
we can immediately conclude that $\mathcal{P}$ is reconfigurable to $\mathcal{Q}$ by Theorem~\ref{thm:stplanarchara}. 

Suppose that $G$ has an $s$-$t$ separator of size $k$. 
Let $X$ be the inclusionwise minimal vertex set 
subject to $s \in X$ and $N(X)$ is an $s$-$t$ separator of size $k$. 
Note that such $X$ is uniquely determined by the submodularity of $|N(X)|$ and 
it can be computed in polynomial time by a standard minimum cut algorithm. 
Similarly, let $Y$ be the unique inclusionwise minimal vertex set 
subject to $t \in Y$ and $N(Y)$ is an $s$-$t$ separator of size $k$. 
Let $U=N(X)$, $W = N(Y)$, $G_1 = G[X \cup U]$, $G_2 = G \setminus (X \cup Y)$, and $G_3 = G[Y \cup W]$; see \figurename~\ref{fig:413}. 
Since $V(P_i) \cap U$ and $V(P_i) \cap W$ do not change in the reconfiguration, 
we can consider the reconfiguration in $G_1$, $G_2$, and $G_3$, separately. 

\begin{figure}
    \centering
    \includegraphics{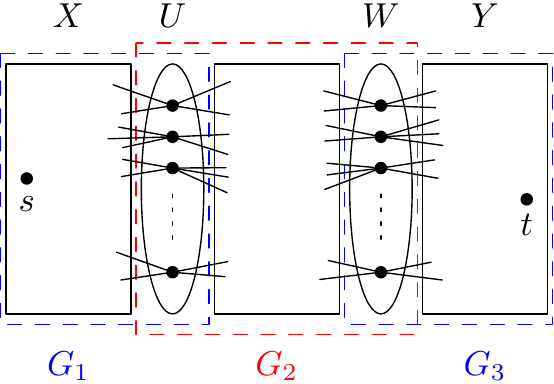}
    \caption{Construction of $G_1$, $G_2$, and $G_3$.}
    \label{fig:413}
\end{figure}

We first consider the reconfiguration in $G_1$. Observe that each path in $\mathcal{P}$ contains exactly one vertex in $U$, and  
the restriction of $\mathcal{P}$ to $G_1$ consists of $k$ paths from $s$ to $U$ 
that are vertex-disjoint except at $s$. The same for $\mathcal{Q}$. 
By the minimality of $X$, $G_1$ contains no vertex set of size $k$ that separates $\{s\}$ and $U$.  
Therefore, by the same argument as Theorem~\ref{thm:stplanarchara}, 
the restriction of $\mathcal{P}$ to $G_1$ is reconfigurable to that of $\mathcal{Q}$. 

If $U \cap W \neq \emptyset$, then $G \setminus X$ contains no vertex set of size $k$ that separates $U$ and $\{t\}$ by the minimality of $Y$.
In such a case, by shrinking $U$ to a single vertex and by applying the same argument as above, 
the restriction of $\mathcal{P}$ to $G \setminus X$ is reconfigurable to that of $\mathcal{Q}$. 
By combining the reconfiguration in $G_1$ and that in $G\setminus X$, we obtain a reconfiguration sequence from $\mathcal{P}$ to $\mathcal{Q}$. 

Therefore, it suffices to 
consider the case when $U \cap W = \emptyset$. 
In the same way as $G_1$, we see that 
the restriction of $\mathcal{P}$ to $G_3$ is reconfigurable to that of $\mathcal{Q}$. 
This shows that the reconfigurability from $\mathcal{P}$ to $\mathcal{Q}$ in $G$ is equivalent to that in $G_2$. 
By changing the indices if necessary, we may assume that $P_i \cap U = Q_i \cap U$ for $i \in [k]$. 
If $P_i \cap W \neq Q_i \cap W$ for some $i \in [k]$, then we can conclude that $\mathcal{P}$ is not reconfigurable to $\mathcal{Q}$. 
Otherwise, let $\mathcal{P}'$ and $\mathcal{Q}'$ be the restrictions of $\mathcal{P}$ and $\mathcal{Q}$ to $G_2$, respectively. 
Since $(G_2, \mathcal{P}', \mathcal{Q}')$ is a one-face or two-face instance of \DPR, we can solve it in polynomial time by Proposition~\ref{prop:1face} and Theorem~\ref{thm:2facecalgo}.  
Therefore, we can test the reconfigurability from $\mathcal{P}$ to $\mathcal{Q}$ in polynomial time;  
see Algorithm~\ref{alg:stplanar}. 
\end{proof}

\begin{algorithm}[t]
    \KwInput{A planar graph $G$ and $s$-$t$ linkages $\mathcal{P}$ and $\mathcal{Q}$.}
    \KwOutput{Is $\mathcal{P}$ reconfigurable to $\mathcal{Q}$?}
    Compute $X$ and $Y$ together with $s$-$t$ separators $U$ and $W$ of size $k$\; 
    \If{such separators do not exist or $U \cap W \neq \emptyset$}{
        \Return{\yes}
    	}
    \Else{
        Construct $G_2 = G \setminus (X\cup Y)$\;
        Change the indices so that $P_i \cap U = Q_i \cap U$ for $i \in [k]$\;
        \If{$P_i \cap W \neq Q_i \cap W$ for some $i \in [k]$}{\Return{\no}}
        \Else{
            Let $\mathcal{P}'$ and $\mathcal{Q}'$ be the restrictions of $\mathcal{P}$ and $\mathcal{Q}$ to $G_2$\;
            \If{$(G_2, \mathcal{P}', \mathcal{Q}')$ is a {\yes}-instance of DPR}{
                \Return{\yes}
            }
            \Else{
                \Return{\no}
            }
        }
    }
    \caption{Algorithm for planar $s$-$t$ paths reconfiguration}\label{alg:stplanar}
\end{algorithm}

\section{Proof of Theorem~\ref{thm:2facechara}}
\label{sec:proof2face}

The necessity (``only if'' part) in Theorem~\ref{thm:2facechara} is immediately derived from Corollary~\ref{cor:algintnum}. 

In what follows in this section, we show the sufficiency (``if'' part) in Theorem~\ref{thm:2facechara}, 
which is one of the main technical contributions in this paper. 
Assume that $\mu(\mathcal{P}, \mathcal{Q}) = 0$ and there is no terminal separator of size $k$. 
The objective is to show that $\mathcal{P}$ is reconfigurable to $\mathcal{Q}$. 
Our proof is constructive, and based on topological arguments. 
A similar technique is used in \cite{DBLP:conf/infocom/KobayashiO14,MCDIARMID1994169,10.5555/645896.758324,DBLP:journals/eor/Otsuki0M16}.

\subsection{Preliminaries for the Proof}

Let $C$ be a simple curve connecting the boundaries of $S$ and $T$ 
such that $C$ contains no vertex in $G$,
$C$ intersects the boundaries of $S$ and $T$ only at its endpoints, and 
$\mu(P_i, C) = 0$ for $i \in [k]$.  
Note that such $C$ always exists, because the last condition is satisfied if $C$ is disjoint from $\mathcal{P}$. 
Note also that $\mu(Q_i, C) = 0$ holds for $i \in [k]$, because $\mu(\mathcal{P}, \mathcal{Q}) = 0$.  

Since $T$ is the outer face, $\R^2 \setminus (S \cup T)$ forms an annulus (or a cylinder).\footnote{More precisely, the annulus is degenerated when $\bd{S} \cap \bd{T} \neq \emptyset$, but the same argument works even for this case.} 
Thus, by cutting it along $C$, we obtain a rectangle 
whose boundary consists of $\bd{S}$, $\bd{T}$, and two copies of $C$.  
We take infinite copies of this rectangle and glue them together to obtain an infinite long strip $R$. 
That is, for $j \in \Z$, let $C^j$ be a copy of $C$,  
let $R^j$ be a copy of the rectangle whose boundary contains $C^j$ and $C^{j+1}$, and 
define $R = \bigcup_{j \in \Z} R^j$; see \figurename~\ref{fig:401}. 
By taking $C$ appropriately, we may assume that the copies of $s_1, \dots , s_k$ lie on the boundary of $R^j$ in this order 
so that $s_1$ is closest to $C^j$ and $s_k$ is closest to $C^{j+1}$. The same for $t_1, \dots , t_k$. 
Note that $R$ is called the \emph{universal cover} of $\R^2 \setminus (S \cup T)$ in the terminology of topology. 

Since $G$ is embedded on $\R^2 \setminus (S \cup T)$, 
this operation naturally defines an infinite periodic graph $\hat{G} = (\hat{V}, \hat{E})$ on $R$ that consists of copies of $G$. 
A path in $\hat G$ is identified with the corresponding curve in $R$. 
For $v \in V$ and $j \in \Z$, 
let $v^j \in \hat{V}$ denote the unique vertex in $R^j$ that corresponds to $v$. 
Since $\mu(P_i, C) = 0$ for $i \in [k]$, 
each path in $\hat{G}$ corresponding to $P_i$ is from 
$s^j_i$ to $t^{j}_i$ for some $j \in \Z$, 
and we denote such a path by $P^j_i$. 
We define $Q^j_i$ in the same way. 
Since $\cal P$ and $\mathcal{Q}$ are linkages in $G$, 
$\{P^j_i \mid i \in [k],\ j \in \Z\}$ and $\{Q^j_i \mid i \in [k],\ j \in \Z\}$ are sets of vertex-disjoint paths in $\hat{G}$.

\begin{figure}[t]
    \centering
    \includegraphics{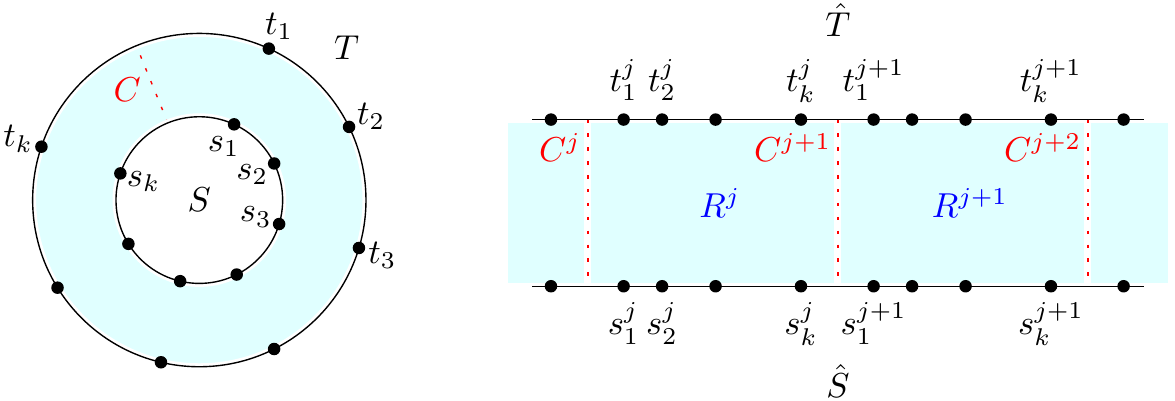}
    \caption{(Left) Curve $C$ in $\R^2 \setminus (S \cup T)$.
    (Right) Construction of $R$.}
    \label{fig:401}
\end{figure}

A path in $\hat{G}$ connecting the boundary of $R$ corresponding to $\bd{S}$ 
and that corresponding to $\bd{T}$ is called an \emph{$\hat{S}$-$\hat{T}$ path}. 
For an $\hat{S}$-$\hat{T}$ path $P$, let $L(P)$ be the region of $R \setminus P$ that is on the ``left-hand side'' of $P$. 
Formally, let $r$ be a point in $R^j$ for sufficiently small $j$, and define $L(P)$ as the set of points $x \in R \setminus P$
such that any curve in $R$ between $r$ and $x$ crosses $P$ an even number of times; see \figurename~\ref{fig:402}. 
For two $\hat{S}$-$\hat{T}$ paths $P$ and $Q$, we denote $P \preceq Q$ if $L(P) \subseteq L(Q)$, and 
denote $P \prec Q$ if $L(P) \subsetneq L(Q)$. 
For two linkages $\mathcal{P} = (P_1, \dots , P_k)$ and $\mathcal{Q} = (Q_1, \dots , Q_k)$ in $G$ with $\mu(P_i, C) = \mu(Q_i, C) = 0$ for $i \in [k]$, 
we denote $\mathcal{P} \preceq \mathcal{Q}$ if $P^j_i \preceq Q^j_i$ for any $i \in [k]$ and $j \in \Z$, and 
denote $\mathcal{P} \prec \mathcal{Q}$ if $\mathcal{P} \preceq \mathcal{Q}$ and $\mathcal{P} \neq \mathcal{Q}$. 

\begin{figure}[t]
    \centering
    \includegraphics{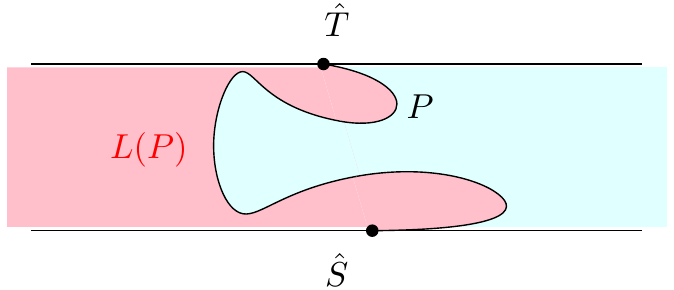}
    \caption{Definition of $L(P)$.}
    \label{fig:402}
\end{figure}

\subsection{Case When $\mathcal{P} \preceq \mathcal{Q}$}
\label{sec:specialcaseplanar}

In this subsection, 
we consider the case when $\mathcal{P} \preceq \mathcal{Q}$, and the general case will be dealt with in Section~\ref{sec:generalcaseplanar}. 
In order to show that $\mathcal{P}$ is reconfigurable to $\mathcal{Q}$, 
we show the following lemma. 

\begin{lemma}
If $\mathcal{P} \prec \mathcal{Q}$, then 
there exists a linkage $\mathcal{P}'$ such that 
$\mathcal{P} \leftrightarrow \mathcal{P}'$ and $\mathcal{P} \prec \mathcal{P}' \preceq \mathcal{Q}$. 
\end{lemma}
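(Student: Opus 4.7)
I plan to construct $\mathcal{P}'$ by rerouting exactly one path. Since $\mathcal{P}\prec\mathcal{Q}$, there exist $i\in[k]$ and $j\in\Z$ with $P_i^j\prec Q_i^j$; shifting indices I may assume $j=0$. Let $R_i\subseteq R$ be the closed topological disk bounded by $P_i^0$ and $Q_i^0$ (these share endpoints $s_i^0,t_i^0$), with $P_i^0$ as the ``left'' arc and $Q_i^0$ as the ``right'' arc. Set $N:=\bigcup_{(j',j'')\neq(i,0)} V(P_{j'}^{j''})$, and let $H'$ be the subgraph of $\hat G\cap R_i$ obtained by deleting $N$; note that $P_i^0\subseteq H'$ because $\mathcal{P}$ is a linkage in $G$, so the lifts of distinct paths are vertex-disjoint. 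My target is to find an $s_i^0$-$t_i^0$ path $\tilde P$ in $H'$ distinct from $P_i^0$: projecting such $\tilde P$ to $G$ yields an $s_i$-$t_i$ path $P'_i$ vertex-disjoint from $P_{j'}$ for $j'\neq i$, so $\mathcal{P}':=(P_1,\dots,P_{i-1},P'_i,P_{i+1},\dots,P_k)$ is a linkage adjacent to $\mathcal{P}$; since $\tilde P$ lies in the closed region between $P_i^0$ and $Q_i^0$ and is not $P_i^0$, we get $L(P_i^0)\subsetneq L(\tilde P)\subseteq L(Q_i^0)$, hence $\mathcal{P}\prec\mathcal{P}'\preceq\mathcal{Q}$, as required.

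The real work is to prove the existence of $\tilde P$, and this is where the hypothesis that $G$ has no terminal separator of size $k$ enters. I plan to argue by contradiction: assume $P_i^0$ is the only $s_i^0$-$t_i^0$ path in $H'$. Menger's theorem then produces a single vertex $v\in V(P_i^0)\setminus\{s_i^0,t_i^0\}$ whose removal from $H'$ disconnects $s_i^0$ from $t_i^0$. Restoring $N$, the set $\{v\}\cup(N\cap R_i)$ is an $s_i^0$-$t_i^0$ vertex cut inside the planar disk $R_i$; using planarity of $\hat G$ inside this disk, I can realize a minimum such cut as a Jordan chord $\gamma$ connecting the boundary arc $P_i^0$ (through $v$) to the boundary arc $Q_i^0$, passing through cut vertices and crossing each lifted path $P_{j'}^{j''}$ that enters $R_i$ in at least one vertex of $P_{j'}^{j''}$.

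To extract from $\gamma$ a terminal separator in $G$ of size $k$, I plan to select, for each $j'\neq i$ whose lifts meet $\gamma$, one vertex $v_{j'}\in V(P_{j'})$ in the projection of $\gamma\cap P_{j'}^{j''}$, and to add the projection $v^*:=\pi(v)\in V(P_i)$. The resulting set $U:=\{v^*\}\cup\{v_{j'}:j'\neq i\}$ has size at most $k$, and I claim it separates $\{s_1,\dots,s_k\}$ from $\{t_1,\dots,t_k\}$ in $G$: any $s_a$-$t_b$ path in $G$ lifts, somewhere in the strip $R$, to a curve crossing some translate of $\gamma$, and by periodicity that crossing projects to an element of $U$. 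This contradicts the hypothesis, so $\tilde P$ exists, and the lemma follows.

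I expect the main obstacle to lie in this last reduction: justifying that the projected set $U$ really is a terminal separator of size $k$. The delicate points are (i) ensuring that each index $j'\neq i$ contributes at most one representative to $U$, for which I will use that a minimum cut in the planar disk $R_i$ meets each lifted path in a controlled way; and (ii) translating ``being cut by $\gamma$ in the cover'' into ``being cut by $U$ in $G$,'' which requires exploiting the periodic structure of $\hat G$ together with the fact that $\mu(\mathcal{P},\mathcal{Q})=0$ prevents paths in $G$ from having unbounded winding. Once these topological bookkeeping issues are handled, the existence of $\tilde P$ and hence of $\mathcal{P}'$ follows, completing the inductive step toward reconfiguring $\mathcal{P}$ to $\mathcal{Q}$.
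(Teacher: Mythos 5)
Your proposal takes a genuinely different route from the paper: the paper argues indirectly by pushing a path face-by-face, tracing a chain $v_1\in P_{i}, v_2\in P_{i+1},\dots$ through faces and closing it into a simple closed curve $C^*$ that winds once around $S$, hitting each $P_j$ exactly once and hence giving a terminal separator of size $k$. You instead fix a single $i$ with $P_i\prec Q_i$, try to find a new $s_i^0$-$t_i^0$ path in the region $R_i$ between $P_i^0$ and $Q_i^0$ in the universal cover avoiding the other lifted paths, and when none exists invoke Menger and a planar minimum cut. The first half of your argument (finding $\tilde P$ and showing $\mathcal{P}\prec\mathcal{P}'\preceq\mathcal{Q}$) is sound.

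The gap is in the contradiction step: the passage from a minimum $s_i^0$-$t_i^0$ cut $\gamma$ in $\hat G\cap R_i$ to a terminal separator of size $k$ in $G$. First, your claim that $\gamma$ ``crosses each lifted path $P_{j'}^{j''}$ that enters $R_i$'' is not true for a generic min-cut: a lift $P_{j'}^{j''}$ with $(j',j'')\neq(i,0)$ is disjoint from $P_i^0$, so each of its arcs inside $R_i$ has both endpoints on $Q_i^0$; the chord $\gamma$ runs from $P_i^0$ to $Q_i^0$, and such an arc avoids $\gamma$ entirely unless $\gamma$'s endpoint on $Q_i^0$ sits between the arc's two endpoints. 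Thus $U$ need not contain a vertex from every $P_{j'}$ and hence need not be a separator. Second, $\gamma$ may contain vertices of $\hat G$ that lie on no lifted path at all; you exclude these from $U$, so a lifted $s_a$-$t_b$ path could slip through $\gamma$ only at excluded vertices. Third, the assertion that every $s_a$-$t_b$ path lifts to a curve crossing some translate of $\gamma$ is unjustified: the translates $\gamma^{j}$ are isolated chords, not a connected wall across the strip. These are not bookkeeping details but the crux; they are precisely what the paper's cyclic construction of $C^*$ is designed to resolve, and it is not clear they can be obtained from a min-cut chosen without that cyclic structure in mind.
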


\begin{proof}
To derive a contradiction, assume that such $\mathcal{P'}$ does not exist. 
Let $\hat{W} := \{\hat{v} \in \hat{V} \mid \hat{v} \in P^j_i \setminus Q^j_i \mbox{ for some } i \in [k] \mbox{ and } j \in \Z\}$ and 
let $W$ be the subset of $V$ corresponding to $\hat{W}$.  
If $W = \emptyset$, then take an index $i \in [k]$ such that $P_i \neq Q_i$ and 
let $\mathcal{P}' = (P'_1, \dots , P'_k)$ be the set of paths obtained from $\mathcal{P}$ by replacing $P_i$ with $Q_i$. 
Since $\mathcal{Q}$ is a linkage and all the vertices in $P'_{h}$ are contained in $Q_{h}$ for any $h \in [k]$, 
$\mathcal{P}'$ is a desired linkage, 
which contradicts the assumption. 

Thus, it suffices to consider the case when $W \neq \emptyset$. 
Let $u \in W$. 
Let $\hat{u} \in \hat{W}$ be a vertex corresponding to $u$ 
and let $i \in [k]$ and $j \in \Z$ be the indices such that $\hat{u} \in P^j_i \setminus Q^j_i$. 
Since $\hat{u} \in P^j_i \setminus Q^j_i$ implies $\hat{u} \in L(Q^j_i) \setminus L(P^j_i)$, 
there exists a face $\hat{F}$ of $\hat{G}$ such that $\bd{\hat{F}}$ contains an edge of $P^j_i$ incident to $\hat{u}$ and 
$\hat{F} \subseteq L(Q^j_i) \setminus L(P^j_i)$.
Define $(P^j_i)'$ as the $s^j_i$-$t^j_i$ path in $\hat{G}$ 
with maximal $L((P^j_i)')$ subject to $(P^j_i)' \subseteq P^j_i \cup \bd{\hat{F}}$; see \figurename~\ref{fig:403}. 
Note that 
such a path is uniquely determined and 
$P^j_i \prec (P^j_i)' \preceq Q^j_i$. 

\begin{figure}[t]
    \centering
    \includegraphics{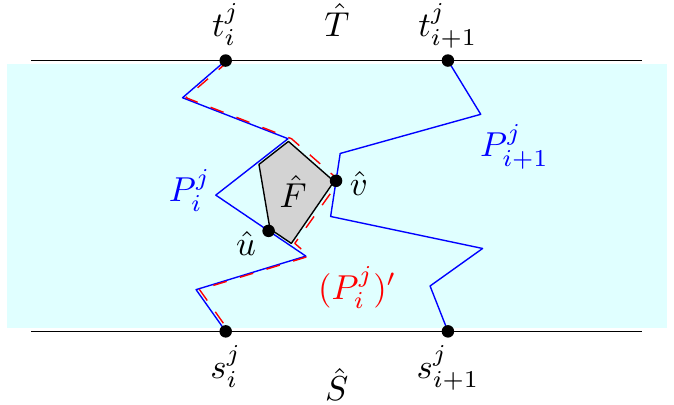}
    \caption{The blue thick paths are $P^j_i$ and $P^{j}_{i+1}$, and the red dashed path is $(P^j_i)'$. There exists a vertex $\hat{v} \in \bd{\hat{F}} \cap P^j_{i+1}$.}
    \label{fig:403}
\end{figure}

Let $P'_i$ be the $s_i$-$t_i$ path in $G$ that corresponds to $(P^j_i)'$. 
If $P'_i$ is disjoint from $P_{h}$ for any $h \in [k] \setminus \{i\}$, then 
we can obtain a desired linkage $\mathcal{P}'$ from $\mathcal{P}$ by replacing $P_i$ with $P'_i$, which contradicts the assumption. 
Therefore, $P'_i$ intersects $P_{h}$ for some $h \in [k] \setminus \{i\}$. 
This together with $P^j_i \prec (P^j_i)'$ shows that $(P^j_i)'$ intersects $P^j_{i+1}$, where $P^j_{k+1}$ means $P^{j+1}_{1}$. 
Since $P^j_{i}$ and $P^j_{i+1}$ are vertex-disjoint, the intersection of $(P^j_i)'$ and $P^j_{i+1}$ is contained in $\bd{\hat{F}}$, which implies that 
$\bd{\hat{F}} \cap P^j_{i+1}$ contains a vertex $\hat{v} \in \hat V$; see \figurename~\ref{fig:403} again. 
Since $\hat{F} \subseteq L(Q^j_i)$, we obtain $\hat{v} \in L(Q^j_i) \cup Q^j_i \subseteq L(Q^j_{i+1})$, and hence $\hat{v} \not\in Q^j_{i+1}$. 
Let $v$ and $F$ be the vertex and the face of $G$ that correspond to $\hat{v}$ and $\hat{F}$, respectively. 
Then, $\hat{v} \in P^j_{i+1} \setminus Q^j_{i+1}$ implies that $\hat{v} \in \hat{W}$ and $v \in W$. 
Let $J$ be a curve in $F$ from $u$ to $v$. 

By the above argument, for any $u \in W$ on $P_i$, there exist a vertex $v \in W$ on $P_{i+1}$ and a curve $J$ from $u$ to $v$ contained in some face of $G$. 
By repeating this argument and by shifting the indices of $P_i$ if necessary, 
we obtain $v_i$ and $J_i$ for $i=1, 2, \ldots$ such that 
$v_i \in W$ is on $P_i$ (where the index is modulo $k$) and $J_i$ is a curve from $v_i$ to $v_{i+1}$ contained in some face.  
We consider the curve $J$ obtained by concatenating $J_1, J_2, \ldots$ in this order. 
Since $|W|$ is finite, this curve visits the same point more than once, and hence it contains a simple closed curve $C^*$. 
Since $C^*$ is simple and visits vertices on $P_i, P_{i+1}, \dots$ in this order, $C^*$ surrounds $S$ exactly once in the clockwise direction; see \figurename~\ref{fig:404}. 
In particular, $C^*$ contains exactly one vertex on $P_i$ for each $i\in [k]$. 
Let $U$ be the set of vertices in $V$ contained in $C^*$. 
Then, $|U|=k$ and $G \setminus U$ has no path between $\{s_1, \dots , s_k\}$ and $\{t_1, \dots , t_k\}$ by the choice of $C^*$.  
Furthermore, $U$ contains no terminals, because $U \subseteq W$ and $W$ contains no terminals. 
Therefore, $U$ is a terminal separator of size $k$, which contradicts the assumption. 
\end{proof}

\begin{figure}[t]
    \centering
    \includegraphics{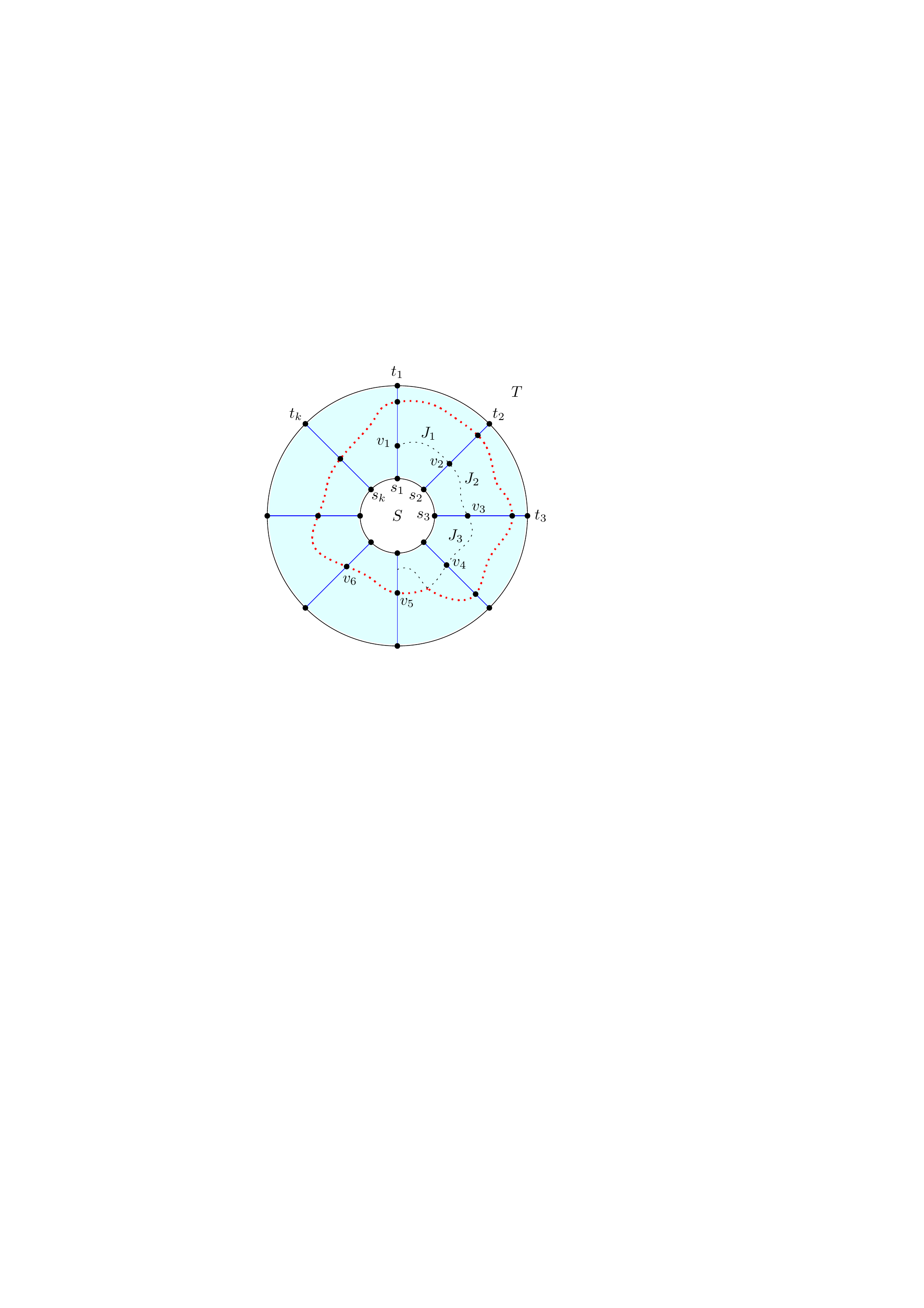}
    \caption{Each blue path represents $P_i$. The dotted curve is part of $J$ and the red dotted thick curve is $C^*$.}
    \label{fig:404}
\end{figure}

As long as $\mathcal{P} \neq \mathcal{Q}$, we apply this lemma and replace $\mathcal{P}$ with $\mathcal{P}'$, repeatedly. 
Then, this procedure terminates when $\mathcal{P} = \mathcal{Q}$, and 
gives a reconfiguration sequence from $\mathcal{P}$ to $\mathcal{Q}$. 
This completes the proof for the case when $\mathcal{P} \preceq \mathcal{Q}$.

\subsection{General Case}
\label{sec:generalcaseplanar}

In this subsection, we consider the case when $\mathcal{P} \preceq \mathcal{Q}$ does not necessarily hold. 
For $i \in [k]$ and $j \in \Z$, 
define $P^j_i \lor Q^j_i$ as the $s^j_i$-$t^j_i$ path in $\hat{G}$ 
with maximal $L(P^j_i \lor Q^j_i)$ subject to $P^j_i \lor Q^j_i \subseteq P^j_i \cup Q^j_i$; see \figurename~\ref{fig:405}. 
Note that 
such a path is uniquely determined,  
$P^j_i \preceq P^j_i \lor Q^j_i$, and $Q^j_i \preceq P^j_i \lor Q^j_i$. 
Since $\hat{G}$ is periodic, for any $j \in \Z$, $P^j_i \lor Q^j_i$ corresponds to a common $s_i$-$t_i$ walk $P_i \lor Q_i$ in $G$. 
We now show that $\mathcal{P} \lor \mathcal{Q} := (P_1 \lor Q_1, \dots , P_k \lor Q_k)$ is a linkage in $G$. 

\begin{figure}[t]
    \centering
    \includegraphics{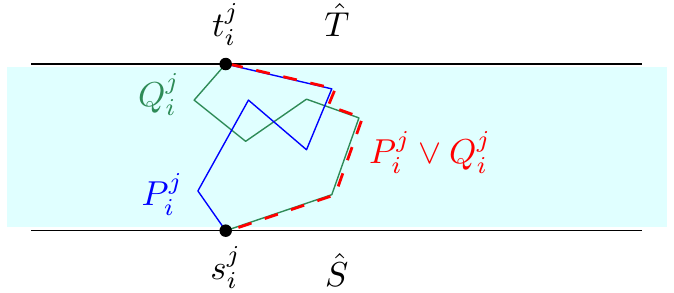}
    \caption{Construction of $P^j_i \lor Q^j_i$.}
    \label{fig:405}
\end{figure}

\begin{lemma}
$\mathcal{P} \lor \mathcal{Q}$ is a linkage in $G$. 
\end{lemma}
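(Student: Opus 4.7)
The plan is to pass to the universal cover $\hat G$ and prove the stronger statement that the lifted envelope paths $R^j_i := P^j_i \lor Q^j_i$, for $(i,j) \in [k] \times \Z$, are pairwise vertex-disjoint in $\hat G$. This suffices: each $R^j_i$ is by definition a simple $s^j_i$-$t^j_i$ path in $\hat G$ which projects to the walk $P_i \lor Q_i$ in $G$. If $P_i \lor Q_i$ visited a vertex $v \in V(G)$ twice, then $R^j_i$ would contain two distinct lifts $v^{m_1}, v^{m_2}$ of $v$; translating $R^j_i$ by $m_2-m_1$ produces another lift $R^{j+m_2-m_1}_i$ also containing $v^{m_2}$, contradicting pairwise disjointness. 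Similarly, a vertex shared by $P_i \lor Q_i$ and $P_{i'} \lor Q_{i'}$ for $i \neq i'$ in $G$ lifts to a shared vertex between two of the $R$'s.

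To prove the stronger statement, suppose for contradiction that $v \in R^{j_1}_{i_1} \cap R^{j_2}_{i_2}$ for some $(i_1,j_1) \neq (i_2,j_2)$. Since $\mathcal P$ and $\mathcal Q$ are linkages in $G$ and the terminals appear in the same cyclic order on $\bd S$ and $\bd T$, the families $\{P^j_i\}$ and $\{Q^j_i\}$ are each pairwise vertex-disjoint in $\hat G$; hence $v$ lies in $P^{j_1}_{i_1} \cap Q^{j_2}_{i_2}$ or in $Q^{j_1}_{i_1} \cap P^{j_2}_{i_2}$, and by symmetry we assume the former. Then $v \notin Q^{j_1}_{i_1} \cup P^{j_2}_{i_2}$, so the two edges of $R^{j_1}_{i_1}$ at $v$ are the two $P^{j_1}_{i_1}$-edges and the two edges of $R^{j_2}_{i_2}$ at $v$ are the two $Q^{j_2}_{i_2}$-edges.

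The remaining step turns $v$ into a terminal separator of size $k$, mimicking the separator-building argument from Section~\ref{sec:specialcaseplanar}. The $L$-maximality of both $R^{j_1}_{i_1}$ and $R^{j_2}_{i_2}$, combined with the cyclic order of the four edges at $v$ in the planar embedding of $\hat G$, forces the existence of a face of $\hat G$ adjacent to $v$ whose boundary meets a lifted $P$-path of a different index. Iterating this and shifting the index at each step produces vertices $v_1, v_2, \dots \in V(G)$ with $v_\ell \in V(P_\ell) \setminus \{s_\ell, t_\ell\}$ (indices modulo $k$), connected in succession by curves inside faces of $G$; since the set of candidate vertices is finite, the concatenation contains a simple closed subcurve $C^*$ winding around $S$ exactly once and meeting each $P_\ell$ in exactly one non-terminal vertex. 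The vertex set of $C^*$ is then a terminal separator of size $k$ that avoids all terminals, contradicting the hypothesis of the subsection. The main obstacle is making the local-to-global step precise: one must show that the shared vertex $v$ and the configuration of the four edges at $v$ genuinely furnish the starting vertex of the iteration (the analogue of the ``next vertex in $W$'' from Section~\ref{sec:specialcaseplanar}), and that $\mu(\mathcal P,\mathcal Q)=0$ ensures the iteration closes into a loop with winding one around $S$.
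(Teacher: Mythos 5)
Your setup — lift to the universal cover and show the paths $R^j_i := P^j_i\lor Q^j_i$ are pairwise vertex-disjoint in $\hat G$, with simplicity of each $P_i\lor Q_i$ following by a deck-transformation argument — is sound and parallels the paper. But the core step, pairwise disjointness of the $R^j_i$, is not actually proven: you reduce to the case where the shared vertex $v$ lies in $P^{j_1}_{i_1}\cap Q^{j_2}_{i_2}$, and then defer to a separator-building iteration that you describe only in outline and explicitly flag as an obstacle. That iteration is the wrong tool here, and in fact it is not needed at all.

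The paper's argument for disjointness is far more direct and uses nothing beyond the linear ``left-of'' order $\preceq$ on lifted paths. Suppose $R^{j}_{i}$ and $R^{j'}_{i'}$ share a vertex $v^0$, with $(j,i)$ lexicographically smaller than $(j',i')$. Because $R^j_i\subseteq P^j_i\cup Q^j_i$, we may assume (swapping the roles of $\mathcal P$ and $\mathcal Q$ if needed) that $v^0\in P^j_i$. The family $\{P^j_i\}$ is pairwise disjoint and totally ordered by $\preceq$ according to the lexicographic order on $(j,i)$, so $P^j_i\subseteq L(P^{j'}_{i'})$, giving $v^0\in L(P^{j'}_{i'})\subseteq L(R^{j'}_{i'})$; but $v^0\in R^{j'}_{i'}$ cannot simultaneously be strictly inside $L(R^{j'}_{i'})$, a contradiction. (The same idea gives simplicity of each $P_i\lor Q_i$ directly: if $v^{j_1},v^{j_2}\in R^j_i$ with $j_1<j_2$ and $v^{j_2}\in P^j_i$, then $v^{j_1}\in L(P^j_i)\subseteq L(R^j_i)$, contradiction.) Note that this shows the lemma does \emph{not} use the hypothesis that $G$ has no terminal separator of size $k$; that hypothesis is only invoked in the separate reconfiguration lemma of Section~\ref{sec:specialcaseplanar}. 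Also, your intermediate claim ``$v\notin Q^{j_1}_{i_1}\cup P^{j_2}_{i_2}$'' is not forced: $P^{j_1}_{i_1}$ and $Q^{j_1}_{i_1}$ may well cross at $v$. That slip is minor compared to the missing disjointness argument, but it is another sign that the $P$/$Q$ case analysis was leading you away from the short proof.
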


\begin{proof}
We first show that $P_i \lor Q_i$ is a path for each $i \in [k]$. 
Assume to the contrary that $P_i \lor Q_i$ visits a vertex $v \in V$ more than once. 
Then, for $j \in \Z$, there exist $j_1, j_2 \in \Z$ with $j_1 < j_2$ such that 
$P^j_i \lor Q^j_i$ contains both $v^{j_1}$ and $v^{j_2}$. 
Since the path $P^j_i \lor Q^j_i$ is contained in the subgraph $P^j_i \cup Q^j_i$, 
without loss of generality, we may assume that $P^j_i$ contains $v^{j_2}$. 
This shows that $v^{j_1} \in L(P^j_i) \subseteq L(P^j_i \lor Q^j_i)$, 
which contradicts that $v^{j_1}$ is contained in $P^j_i \lor Q^j_i$. 

We next show that $P_1 \lor Q_1, \dots , P_k \lor Q_k$ are pairwise vertex-disjoint. 
Assume to the contrary that $P_i \lor Q_i$ and $P_{i'} \lor Q_{i'}$ contain a common vertex $v \in V$ for distinct $i, i' \in [k]$. 
Since $\hat{G}$ is periodic, there exist $j, j' \in \Z$ such that 
$P^j_i \lor Q^j_i$ and $P^{j'}_{i'} \lor Q^{j'}_{i'}$ contain $v^0$ (i.e., the copy of $v$ in $R^0$). 
We may assume that $(j, i)$ is smaller than $(j', i')$ in the lexicographical ordering, that is, 
either $j < j'$ holds or $j=j'$ and $i < i'$ hold. 
Since $P^j_i \lor Q^j_i \subseteq P^j_i \cup Q^j_i$, we may also assume that $v^0 \in P^j_i$ by changing the roles of $P^j_i$ and $Q^j_i$ if necessary. 
Then, we obtain $v^0 \in P^j_i \subseteq L(P^{j'}_{i'}) \subseteq L(P^{j'}_{i'} \lor Q^{j'}_{i'})$, 
which contradicts that $v^0$ is contained in $P^{j'}_{i'} \lor Q^{j'}_{i'}$. 
\end{proof}

We also see that $\mu(P_i \lor Q_i, C) = 0$ for $i \in [k]$ by definition, and hence $\mu(\mathcal{P}, \mathcal{P} \lor \mathcal{Q})=0$. 
Since $\mathcal{P} \preceq \mathcal{P} \lor \mathcal{Q}$ and $\mu(\mathcal{P}, \mathcal{P} \lor \mathcal{Q})=0$, 
$\cal P$ is reconfigurable to $\mathcal{P} \lor \mathcal{Q}$
as described in Section~\ref{sec:specialcaseplanar}. 
Similarly, $\cal Q$ is reconfigurable to $\mathcal{P} \lor \mathcal{Q}$, 
which implies that $\mathcal{P} \lor \mathcal{Q}$ is reconfigurable to $\cal Q$.  
By combining them, we see that $\cal P$ is reconfigurable to $\cal Q$, 
which completes the proof of the sufficiency in Theorem~\ref{thm:2facechara}.


\section{$\PSPACE$-Completeness}
\label{sec:pspacecompl}

We first observe that \DPR and \stDPR are in $\PSPACE$ by using $\PSPACE = \mathsf{NPSPACE}$ (a corollary of Savitch's theorem \cite{DBLP:journals/jcss/Savitch70}). As a certificate, we receive a reconfiguration sequence. In our polynomial-space algorithm, we read linkages in the certificate one by one, and check whether the linkage is indeed a linkage and whether two consecutive linkages are obtained by a single reconfiguration step. At each step, the working memory only requires to store the graph $G$ and two consecutive linkages. Thus, the algorithm runs in polynomial space. This is a non-deterministic algorithm, but by $\PSPACE = \mathsf{NPSPACE}$, we conclude that the problems belong to $\PSPACE$.

For the proof of $\PSPACE$-hardness, we reduce the \emph{nondeterministic constraint logic reconfiguration} (NCL reconfiguration) to our problem.
In the NCL reconfiguration, we consider an undirected cubic graph, where each edge is associated with weight one or two and each vertex is incident to three weight-$2$ edges (called an \emph{OR vertex}) or incident to one weight-$2$ edge and two weight-$1$ edges (called an \emph{AND vertex}).
We call such an undirected graph an \emph{AND/OR graph}.
An \emph{NCL configuration} is an orientation of the edges of an AND/OR graph in which at each vertex the total weight of the incoming arcs is at least two.
A \emph{flip} of an arc is a process of changing the direction of the arc.
In the NCL reconfiguration, we are given two NCL configurations of an AND/OR graph, and need to determine whether we can obtain one from the other by a sequence of flips in such a way that all the intermediate orientations are NCL configurations.
It is known that the NCL reconfiguration is $\PSPACE$-complete \cite{HD}, even when the AND/OR graph is planar and of bounded bandwidth \cite{DBLP:conf/iwpec/Zanden15}.

\subsection{General Graphs and Two Paths}

\hardness*

\begin{proof}
We have already observed that the problem is in $\PSPACE$. To show the $\PSPACE$-hardness, we now give a transformation of a given AND/OR graph $H=(V(H),E(H))$ to an undirected graph $G=(V(G),E(G))$.
Each weight-$2$ edge $e = \{u, v\} \in E(H)$ of $H$ is mapped to the following \emph{weight-$2$ edge gadget} $G_e$ (see \figurename~\ref{fig:pspacecompl_edge1}):
\begin{align*}
    V(G_e) &= \{w_{e,u}, w_{e,v}, r_{e,1}, r_{e,2}\},\\
    E(G_e) &= \{\{w_{e,u}, r_{e,1}\}, \{w_{e,u}, r_{e,2}\}, \{w_{e,v}, r_{e,1}\}, \{w_{e,v}, r_{e,2}\}\}.
\end{align*}
Similarly, each weight-$1$ edge $e = \{u, v\} \in E(H)$ of $H$ is mapped to the following \emph{weight-$1$ edge gadget} $G_e$ (see \figurename~\ref{fig:pspacecompl_edge1}):
\begin{align*}
    V(G_e) &= \{w_{e,u}, w_{e,v}, b_{e,1}, b_{e,2}\},\\
    E(G_e) &= \{\{w_{e,u}, b_{e,1}\}, \{w_{e,u}, b_{e,2}\}, \{w_{e,v}, b_{e,1}\}, \{w_{e,v}, b_{e,2}\}\}.
\end{align*}

\begin{figure}[t]
    \centering
    \includegraphics{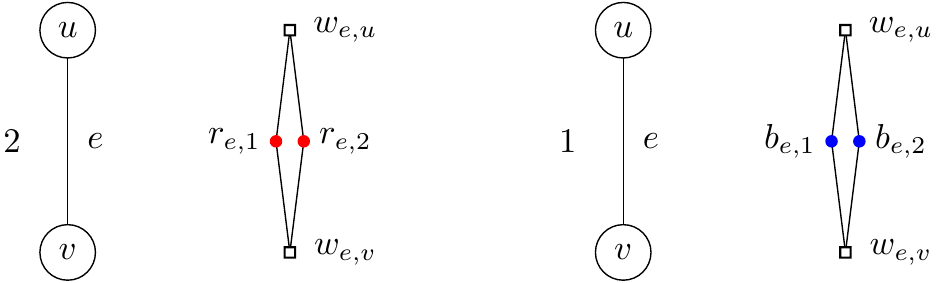}
    \caption{(Left) A weight-$2$ edge gadget.
    (Right) A weight-$1$ edge gadget.}
    \label{fig:pspacecompl_edge1}
\end{figure}

Let $v \in V(H)$ be an OR vertex, and $e,f,g \in E(H)$ be three weight-$2$ edges incident to $v$.
Then, $v$ is mapped to the following \emph{OR vertex gadget} $G_v$ (see \figurename~\ref{fig:pspacecompl_vert1}):
\begin{align*}
    V(G_v) &= \{w_{e,v}, w_{f,v}, w_{g,v}, b_{v,1}, b_{v,2}\},\\
    E(G_v) &= \{\{w_{e,v}, b_{v,1}\}, \{w_{e,v}, b_{v,2}\}, \{w_{f,v}, b_{v,1}\}, \{w_{f,v}, b_{v,2}\},\\
        &\qquad \{w_{g,v}, b_{v,1}\}, \{w_{g,v}, b_{v,2}\}\}.
\end{align*}
Note that the vertices with the same label are identified.

\begin{figure}[t]
    \centering
    \includegraphics{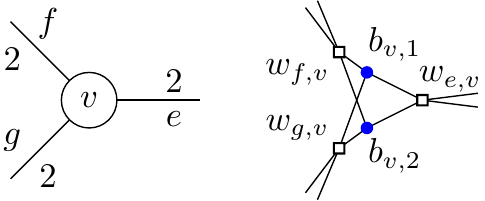}
    \qquad
    \includegraphics{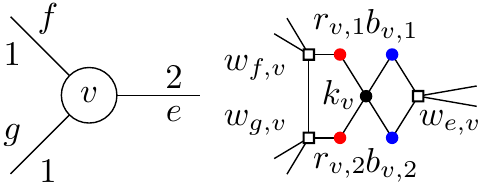}
    \caption{(Left) An OR vertex gadget. (Right) An AND vertex gadget.}
    \label{fig:pspacecompl_vert1}
\end{figure}

An AND vertex $v \in V(H)$ that is incident to one weight-$2$ edge $e$ and two weight-$1$ edges $f, g$ is mapped to the following \emph{AND vertex gadget} $G_v$  (see \figurename~\ref{fig:pspacecompl_vert1}):
\begin{align*}
    V(G_v) &= \{w_{e,v}, w_{f,v}, w_{g,v}, r_{v,1}, r_{v,2}, b_{v,1}, b_{v,2}, k_v\},\\
    E(G_v) &= \{ \{w_{e,v}, b_{v,1}\},
                 \{w_{e,v}, b_{v,2}\},
                 \{w_{f,v}, r_{v,1}\},
                 \{w_{g,v}, r_{v,2}\},\\
           & \qquad
                 \{r_{v,1}, k_v\},
                 \{r_{v,2}, k_v\},
                 \{b_{v,1}, k_v\},
                 \{b_{v,2}, k_v\}, 
                 \{w_{f,v}, w_{g,v}\}\}.
\end{align*}

Define
\[
    V(G) = \{s,t\} \cup \bigcup_{v \in V(H)} V(G_v) \cup \bigcup_{e \in E(H)} V(G_e).
\]

A vertex of $V(G)$ is \emph{red} if it is of the form $r_{x,i}$ for some vertex/edge $x \in V(H) \cup E(H)$ and $i \in \{1,2\}$;
a vertex of $V(G)$ is \emph{blue} if it is of the form $b_{x,i}$ for some vertex/edge $x \in V(H) \cup E(H)$ and $i \in \{1,2\}$;
a vertex of $V(G)$ is \emph{white} if it is of the form $w_{e,v}$ for some vertex $v \in V(H)$ and edge $e \in E(H)$;
a vertex of $V(G)$ is \emph{black} if it is $s$, $t$ or of the form $k_v$ for some vertex $v \in V(H)$.
The number of red vertices is equal to $2|V^{\mathrm{a}}(H)|+2|E_2(H)|$, where $V^{\mathrm{a}}(H)$ is the set of AND vertices and $E_2(H)$ is the set of weight-$2$ edges;
The number of blue vertices is equal to $2|V(H)|+2|E_1(H)|$, where $E_1(H)$ is the set of weight-$1$ edges;
The number of white vertices is equal to $3|V(H)|=2|E(H)|$;
The number of black vertices is equal to $2+|V^{\mathrm{a}}(H)|$.

Assume that the elements of $V^{\mathrm{a}}(H) \cup E_2(H)$ are arbitrarily ordered as $x_1, x_2, \dots, x_R$
and the elements of $V(H) \cup E_1(H)$ are arbitrarily ordered as $y_1, y_2, \dots, y_B$.
Then, define
\begin{align*}
    E(G) &= \bigcup_{v \in V(H)}E(G_v) \cup \bigcup_{e \in E(H)}E(G_e)\\
        &\qquad \cup \{ \{r_{x_i,2}, r_{x_{i+1},1}\} \mid i \in \{1,2,\dots,R-1\}\}\\ 
        &\qquad \cup \{\{b_{y_i,2}, b_{y_{i+1},1}\} \mid i \in \{1,2,\dots,B-1\}\}\\
        &\qquad \cup \{ \{s,r_{x_1,1}\}, \{s, b_{y_1,1}\} \}  \cup \{\{r_{x_R,2}, t\}, \{b_{y_B,2}, t\}\}.
\end{align*}
This completes the construction of $G$.
\figurename~\ref{fig:pspacecompl1} may help the reader understand.

We now describe the way how an NCL configuration of $H$ translates to an $s$-$t$ linkage $\{P_1, P_2\}$ of $G$.
We force that $P_1$ goes through all the red vertices and $P_2$ goes through all the blue vertices.
As a sequence of vertices, $P_1$ is constructed as follows.
\begin{enumerate}
    \item The path starts at $s$. Let $r_{x_0,2} = s$.
    \item Now we iterate over $i\in \{1,\dots,R\}$.
    As an invariant at the beginning of the iteration, we assume that we have just visited the vertex $r_{x_{i-1},2}$. Then, the path continues to $r_{x_i,1}$. We consider two separate cases.
    \begin{enumerate}
    \item Consider the case where $x_i=\{u,v\}$ is a weight-$2$ edge.
    Let the edge be oriented from $u$ to $v$ in the NCL configuration.
    Then, the path visits $w_{x_i,u}$, and continues to $r_{x_i,2}$.
    \item Consider the case where $x_i$ is an AND vertex that is incident to one weight-$2$ edge $e$ and two weight-$1$ edges $f$ and $g$.
    If $e$ is oriented toward $x_i$, then the path visits $k_{x_i}$ and continues to $r_{x_i,2}$.
    If $e$ is not oriented toward $x_i$, then the path visits $w_{f, x_i}, w_{g, x_i}$ and continues to $r_{x_i, 2}$.
    \end{enumerate}
    \item After the whole iteration, the path now visits $r_{x_R, 2}$.
    Then, the path continues to $t$ and we finish the construction of $P_1$.
\end{enumerate}
Similarly, as a sequence of vertices, $P_2$ is constructed as follows.
\begin{enumerate}
    \item The path starts at $s$. Let $b_{y_0,2} = s$.
    \item Now we iterate over $i\in \{1,\dots,B\}$.
    As an invariant at the beginning of the iteration, we assume that we have just visited the vertex $b_{y_{i-1},2}$. Then, the path continues to $b_{y_i,1}$. We consider three separate cases.
    \begin{enumerate}
    \item Consider the case where $y_i=\{u,v\}$ is a weight-$1$ edge.
    Let the edge be oriented from $u$ to $v$ in the NCL configuration.
    Then, the path visits $w_{y_i,u}$, and continues to $b_{y_i,2}$.
    \item Consider the case where $y_i$ is an OR vertex that is incident to three weight-$2$ edges $e$, $f$, and $g$.
    At least one of those edges is directed toward $y_i$.
    Let $e$ be such an edge.
    Then, the path visits $w_{e,y_i}$ and continues to $b_{y_i, 2}$.
    \item Consider the case where $y_i$ is an AND vertex that is incident to one weight-$2$ edge $e$ and two weight-$1$ edges $f$ and $g$.
    If $e$ is oriented toward $y_i$, then the path visits $w_{e,y_i}$ and continues to $b_{y_i,2}$.
    If $e$ is not oriented toward $y_i$, then the path visits $k_{y_i}$ and continues to $b_{y_i, 2}$.
    \end{enumerate}
    \item After the whole iteration, the path now visits $b_{y_B, 2}$.
    Then, the path continues to $t$ and we finish the construction of $P_2$.
\end{enumerate}

The construction is illustrated in \figurename~\ref{fig:pspacecompl1}.
The constructed $s$-$t$ linkage $\mathcal{P}=\{P_1, P_2\}$ is \emph{created} from the NCL configuration $\sigma$.
The following claim ensures that the constructed paths are indeed vertex-disjoint.

\begin{figure}[t]
    \centering
    \includegraphics{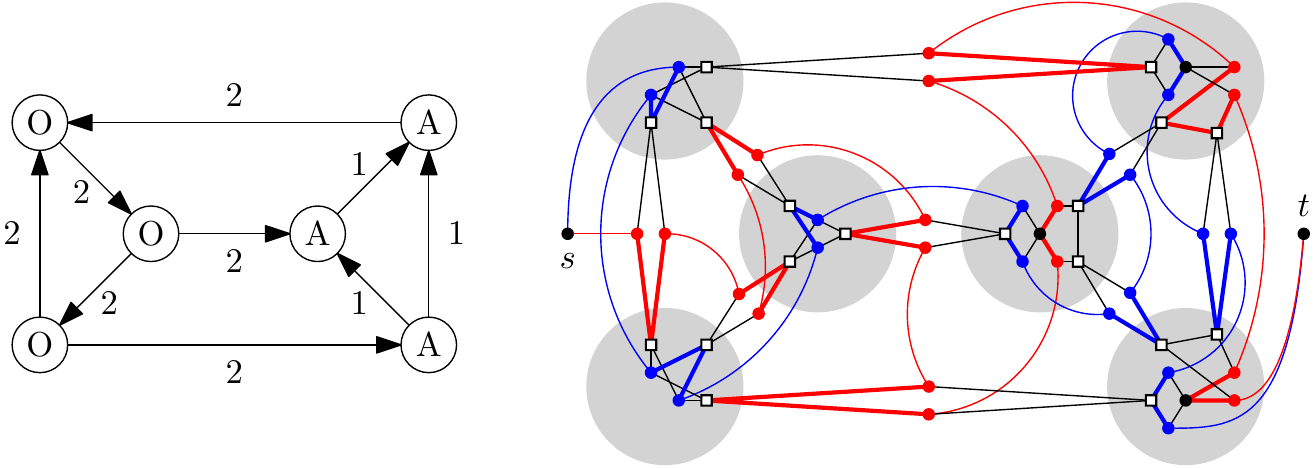}
    \caption{Mapping an NCL configuration to an $s$-$t$ linkage $\{P_1, P_2\}$. In the NCL configuration, the vertex label A stands for an AND vertex, and the vertex label O stands for an OR vertex. The path $P_1$ is colored red and the path $P_2$ is colored blue.}
    \label{fig:pspacecompl1}
\end{figure}

\begin{claim}
\label{clm:pspacecompl-twopaths-vdisj}
The paths $P_1$ and $P_2$ are internally vertex-disjoint.
\end{claim}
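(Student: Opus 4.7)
The plan is to classify the internal vertices of $G$ by color and verify case by case that no vertex is used by both $P_1$ and $P_2$, invoking the NCL constraint (in-weight $\geq 2$ at every vertex of $H$) in exactly one place. First, by construction $P_1$ contains every red vertex and no blue vertex, while $P_2$ contains every blue vertex and no red vertex, so the red and blue classes cause no conflict.

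Next I would handle the black vertices $k_v$ attached to AND vertices $v$. Let $e$ be the weight-$2$ edge at $v$. The construction shows that $P_1$ uses $k_v$ if and only if $e$ is oriented toward $v$ in the NCL configuration (so $P_1$ traverses the AND gadget as $r_{v,1}, k_v, r_{v,2}$), while $P_2$ uses $k_v$ if and only if $e$ is oriented away from $v$ (so $P_2$ traverses the gadget as $b_{v,1}, k_v, b_{v,2}$). These conditions are mutually exclusive.

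The main case is the white vertices. Each $w_{e,v}$ is shared between the edge gadget $G_e$ and the vertex gadget $G_v$, so I would fix such a vertex and split by the pair (weight of $e$, type of $v$), noting that (weight-$1$, OR) does not arise since weight-$1$ edges only join AND vertices. In every remaining subcase I would read off from the construction a condition on the orientation at $v$ characterising when $P_1$ visits $w_{e,v}$ (either in $G_e$ or in $G_v$) and the analogous condition for $P_2$, and then check that these two conditions cannot hold simultaneously. For $v$ an OR vertex, $P_1$ never enters $G_v$, so the analysis is immediate; for $e$ weight-$2$ and $v$ AND, it reduces to the observation that $e$ cannot be both incoming to and outgoing from $v$.

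I expect the delicate case to be $e$ weight-$1$, so $v$ (and the other endpoint of $e$) are AND. Here $P_2$ can visit $w_{e,v}$ only inside $G_e$, and only when $e$ has tail $v$, i.e.\ when $e$ is oriented out of $v$. On the other hand, $P_1$ reaches $w_{e,v}$ only in $G_v$ via the detour $r_{v,1}, w_{f,v}, w_{g,v}, r_{v,2}$, where $f,g$ are the two weight-$1$ edges at $v$ (one of them being $e$); this happens exactly when the weight-$2$ edge at $v$ is not incoming, and the NCL inequality at $v$ then \emph{forces} both weight-$1$ edges, including $e$, to be incoming. Hence the two conditions on $e$ are opposite, and this is the single step where the defining property of an NCL configuration is used. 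Once this subcase is handled the claim follows.
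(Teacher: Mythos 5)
Your proposal is correct and takes essentially the same approach as the paper: classify the shared vertex by color, handle red/blue trivially, handle the black vertex $k_v$ via the mutual exclusivity of the two conditions on the weight-$2$ edge at $v$, and handle white vertices by a case split that uses the orientation constraint in each case, with the NCL in-weight inequality invoked exactly once, in the weight-$1$/AND subcase. The paper organizes the white-vertex cases by the type of $v$ first and then by the weight of $e$, while you split by the pair at once, but the arguments are identical.
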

\begin{proof}
The paths $P_1$ and $P_2$ start at $s$ and ends at $t$.
By the construction, the red vertices are visited by $P_1$ only and the blue vertices are visited by $P_2$ only.
Hence, if they are not vertex-disjoint, they will share a white vertex or a black vertex.

Suppose that they share a white vertex.
Let one of the shared white vertices be of the form $w_{e,v}$ for some edge $e \in E(H)$ and an end-vertex $v$ of $e$.

As the first case, suppose that $v$ is an OR vertex.
Note that $e$ is a weight-$2$ edge in this case.
Since $w_{e,v}$ is visited by $P_1$, by Step 2(a), $e$ is oriented from $v$ to the other end-vertex of $e$.
Since $w_{e,v}$ is visited by $P_2$, by Step 2(b), $e$ is oriented toward $v$.
Thus, the orientation of $e$ is in conflict and we reach a contradiction.

As the second case, suppose that $v$ is an AND vertex.
We further distinguish two cases.
First, consider the case where $e$ is a weight-$2$ edge.
Then, since $w_{e,v}$ is visited by $P_1$, by Step 2(a), $e$ is oriented from $v$ to the other end-vertex of $e$.
On the other hand, since $w_{e,v}$ is visited by $P_2$, by Step 2(c), $e$ is oriented toward $v$.
Thus, the orientation of $e$ is in conflict and we reach a contradiction.

Second, consider the case where $e$ is a weight-$1$ edge.
Then, since $w_{e,v}$ is visited by $P_2$, by Step 2(a), $e$ is oriented from $v$.
On the other hand, since $w_{e,v}$ is visited by $P_1$, by Step 2(b), the weight-$2$ edge incident to $v$ is not oriented toward $v$.
They together mean that the total weight of incoming arc to $v$ is at most one.
This is a contradiction since the orientation does not meet the requirement to be an NCL configuration.

Next, suppose that $P_1$ and $P_2$ share a black vertex.
Such a black vertex only exists in an AND vertex gadget $G_v$, where $v$ is an AND vertex of $H$ that is incident to one weight-$2$ edge $e$ and two weight-$1$ edges $f$ and $g$.
Namely, the black vertex is of the form $k_v$.
Since $k_v$ is visited by $P_1$, by Step 2(b), $e$ is oriented toward $v$.
On the other hand, since $k_v$ is visited by $P_2$, by Step 2(c), $e$ is not oriented toward $v$.
This is a contradiction.

In all cases, we derive a contradiction. Thus, $P_1$ and $P_2$ are internally vertex-disjoint.
\end{proof}

We now study how a flip in an orientation in an NCL configuration corresponds to a constant-length sequence of reconfiguration steps of paths.
\begin{claim}
\label{clm:pspacecompl-twopaths-NCL-to-disjpaths}
Let $\mathcal{P}=\{P_1, P_2\}$ be an $s$-$t$ linkage of $G$ that is created from an NCL configuration $\sigma$.
If an NCL configuration $\sigma'$ is obtained from $\sigma$ by a flip of a single edge, then an $s$-$t$ linkage $\mathcal{P}'=\{P'_1, P'_2\}$ of $G$ that is created from $\sigma'$ can be obtained from $\mathcal{P}$ by at most a constant number of reconfiguration steps in $G$.
\end{claim}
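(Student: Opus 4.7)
The plan is a case analysis on the flipped edge $e=\{u,v\}$, based on the weight of $e$ and on whether its endpoints are AND or OR vertices. The key observation that keeps the number of reconfiguration steps bounded is that the ``spine'' of each path is the same for any linkage created from any NCL configuration: $P_1$ must traverse the fixed red sequence $s,r_{x_1,1},r_{x_1,2},\dots,r_{x_R,1},r_{x_R,2},t$ in this order, and $P_2$ the analogous blue sequence. Thus a flip of $e$ can influence only the short ``detours'' that $P_1$ and $P_2$ take through the white and black vertices inside the three gadgets $G_e$, $G_u$, and $G_v$; moreover, the freedom available in creating $\mathcal{P}'$ from $\sigma'$ lets me choose $\mathcal{P}'$ that agrees with $\mathcal{P}$ everywhere outside these three gadgets.

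I would then work through the cases. For a representative one, take $e$ a weight-$2$ edge flipped from $u\to v$ to $v\to u$. Inside $G_e$ the detour of $P_1$ changes from $r_{e,1}\,w_{e,u}\,r_{e,2}$ to $r_{e,1}\,w_{e,v}\,r_{e,2}$; inside $G_v$, the disappearance of $e$'s incoming weight at $v$ is permitted only because $\sigma'$ is an NCL configuration, which forces another incoming edge $e'$ at $v$ to already be present in $\sigma$. Using $e'$ as a witness, I reroute whichever of $P_1$ or $P_2$ currently uses $w_{e,v}$ away from it: if $v$ is an OR vertex, $P_2$'s detour in $G_v$ is updated to pass through $w_{e',v}$ instead; if $v$ is an AND vertex, $P_1$'s detour in $G_v$ is replaced by the one going through $k_v$. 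A symmetric rerouting handles $u$, where $e$ has just become incoming. The weight-$1$ case and the remaining vertex-type combinations are handled by the same pattern with $P_1$ and $P_2$ playing partially swapped roles.

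Once the at most constantly many detour replacements are identified, the final task is to sequence them as single-path moves whose intermediate linkages remain internally vertex-disjoint. I would perform the changes gadget by gadget, and inside each gadget first free any shared vertex by moving the currently occupying path, then swap in the new detour of the other path, then restore the first path. Because each single move differs from its predecessor only in the detour inside one gadget, checking disjointness reduces to inspecting finitely many configurations of a constant-size subgraph.

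The main obstacle is exactly this intermediate vertex-disjointness: a single flip can force changes in both $P_1$ and $P_2$ simultaneously, but a reconfiguration step can alter only one path at a time. The ingredient that makes the sequencing possible in every case is the NCL validity of $\sigma'$, which guarantees the existence of the alternative incoming edge $e'$ used above; without it, the ``freed vertex'' needed for the temporary detour might fail to exist. Verifying this alternative across all subcases, and confirming that the total number of single-path moves is bounded by a fixed constant independent of $H$, is where the case-checking effort concentrates.
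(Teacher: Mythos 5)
Your proposal follows the same route as the paper's proof: a case analysis on the weight of the flipped edge and the type of its endpoints, with the NCL validity of $\sigma'$ supplying the ``witness'' alternative incoming edge that frees the needed white vertex, and a final sequencing argument that performs the detour swaps one path at a time while preserving vertex-disjointness. The sequencing heuristic you state---inside each affected gadget first evict the path occupying a shared vertex, then move the other path, then restore the first---is exactly the mechanism the paper uses (most visibly in the weight-$2$ edge plus AND vertex case, where two preliminary moves inside $G_v$ are needed before the edge-gadget swap).

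One detail in your representative case is garbled, though. For a weight-$2$ edge $e$ flipped from $u\to v$ to $v\to u$ with $v$ an AND vertex: in $\sigma$ the path that uses $w_{e,v}$ is $P_2$ (not $P_1$), while $P_1$ uses $k_v$. So you cannot ``replace $P_1$'s detour in $G_v$ by the one going through $k_v$''---$P_1$ is already there. The correct two preliminary moves are: first move $P_1$ off $k_v$ onto $w_{f,v},w_{g,v}$ (possible because, after the flip, the NCL weight requirement forces $f$ and $g$ to be incoming to $v$ already in $\sigma$), and then move $P_2$ from $w_{e,v}$ onto the now-free $k_v$. Only then is $w_{e,v}$ vacant for the $G_e$ swap. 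Your general sequencing principle covers this, but the specific sentence misassigns both the path and the direction of the replacement, so it would not compile into a correct case-by-case argument as written.
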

\begin{proof}
Assume that we are given an NCL configuration $\sigma$ of $H$.
Let $\mathcal{P} = \{P_1, P_2\}$ be an $s$-$t$ linkage of $G$ that is created from $\sigma$.
By the construction, the path $P_1$ passes through all the red vertices; the path $P_2$ passes through all the blue vertices.

As the first case, consider a flip of a weight-$1$ edge $f$ that connects two AND vertices $u,v$ from $(u,v)$ to $(v,u)$.
Then, the blue path $P_2$ passes through $b_{f,1}, w_{f,u}, b_{f,2}$ of the weight-$1$ edge gadget in this order.
Let $P'_2$ be a blue path that is created from the NCL configuration obtained after flipping $(u,v$) to $(v,u)$.
To obtain an $s$-$t$ linkage $\{P_1, P'_2\}$, we replace the part $b_{f,1}, w_{f,u}, b_{f,2}$ of $P_2$ with $b_{f,1}, w_{f,v}, b_{f,2}$.

This is possible only if $P_1$ does not pass through $w_{f,v}$; we now observe this is indeed the case.
If $P_1$ passes through $w_{f,v}$, then by the construction of $P_1$, the path $P_1$ passes through $r_{v,1}, w_{f,v}, w_{g,v}, r_{v,2}$ in this order where $g$ is the other weight-$1$ edge incident to $v$. This happens only when a unique weight-$2$ edge incident to $v$ is not oriented toward $v$. Therefore, after flipping to $(v,u)$ the total sum of the incoming edges to $v$ will be $1$. This is a contradiction to the property of an NCL configuration.

As the second case, consider a flip of a weight-$2$ edge $e$ that connects a vertex $u$ and an AND vertex $v$ from $(u,v)$ to $(v,u)$.
Then, the red path $P_1$ passes through $r_{e,1}, w_{e,u}, r_{e,2}$ of the weight-$2$ edge gadget in this order.
Let $P'_1$ be a red path that is created from the orientation obtained after flipping to $(v,u)$.
To obtain a family of internally vertex-disjoint paths $P'_1, P_2$, we want to replace the part $r_{e,1}, w_{e,u}, r_{e,2}$ of $P_1$ with $r_{e,1}, w_{e,v}, r_{e,2}$.

This is possible only if $P_2$ does not pass through $w_{e,v}$.
However, since $e$ is oriented toward $v$, $P_2$ passes through $w_{e,v}$. 
Thus, in order to make $w_{e,v}$ vacant, we apply the following operations before the above transformation. 
We first replace the part $r_{v,1}$, $k_v$, $r_{v,2}$ of $P_1$ by $r_{v,1}$, $w_{f,v}$, $w_{g,v}$, $r_{v,2}$, where $f$ and $g$ are weight-$1$ edges incident to $v$.
This is possible since $f$ and $g$ are oriented toward $v$ in $\sigma$.
Next, we replace the part $b_{v,1}$, $w_{e,v}$, $b_{v,2}$ of $P_2$ by $b_{v,1}$, $k_v$, $b_{v,2}$.
This way, we can make $w_{e,v}$ vacant with two extra reconfiguration steps.

As the third but final case, consider a flip of a weight-$2$ edge $e$ that connects a vertex $u$ and an OR vertex $v$ from $(u,v)$ to $(v,u)$.
Then, the red path $P_1$ passes through $r_{e,1}, w_{e,u}, r_{e,2}$ of the weight-$2$ edge gadget in this order.
Let $P'_1$ be a red path that is created from the orientation obtained after flipping to $(v,u)$.
To obtain a family of internally vertex-disjoint paths $P'_1, P_2$, we want to replace the part $r_{e,1}, w_{e,u}, r_{e,2}$ of $P_1$ with $r_{e,1}, w_{e,v}, r_{e,2}$.

This is possible only if $P_2$ does not pass through $w_{e,v}$.
If this is the case, there is no problem.
However, in some cases, $P_2$ indeed passes through $w_{e,v}$.
Then, we first replace the part $b_{v,1}, w_{e,v}, b_{v,2}$ with $b_{v,1}, w_{f,v}, b_{v,2}$ or $b_{v,1}, w_{g,v}, b_{v,2}$, where $f,g$ are other weight-$2$ edges incident to $v$, to make $w_{e,v}$ vacant.
We now observe that this extra step can always be done.
Suppose not.
Then, all the three vertices $w_{e,v}, w_{f,v}, w_{g,v}$ are occupied by $P_2$.
This means that the edges $e, f, g$ are oriented as they leave $v$, and the total sum of edges incoming to $v$ is zero.
This is a contradiction to the property of an NCL configuration.

In the second and third cases above, if $u$ is an AND vertex, then 
we replace the part $b_{u,1}$, $k_u$, $b_{u,2}$ of $P_2$ by $b_{u,1}$, $w_{e,u}$, $b_{u,2}$, and 
replace the part $r_{u,1}$, $w_{f',u}$, $w_{g',u}$, $r_{u,2}$ of $P_1$ by $r_{u,1}$, $k_u$, $r_{u,2}$, where $f',g'$ are other weight-$2$ edges incident to $u$. 
Then, we obtain an $s$-$t$ linkage of $G$ that is created from $\sigma'$.  
\end{proof}

To argue the reverse direction of correspondence, we first observe that a reconfiguration step maintains the property that the paths go through the red and blue vertices.

\begin{claim}
\label{clm:pspacecompl-twopaths-onestep}
Let $\mathcal{P}=\{P_1, P_2\}$ be an $s$-$t$ linkage of $G$ such that $P_1$ goes through all the red vertices and $P_2$ goes through all the blue vertices.
If an $s$-$t$ linkage $\mathcal{P}'=\{P'_1, P'_2\}$ of $G$ is obtained by a single reconfiguration step in $G$ from $\mathcal{P}$, then $P'_1$ goes through all the red vertices and $P'_2$ goes through all the blue vertices.
\end{claim}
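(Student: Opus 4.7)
The plan is to reduce the claim to a structural fact about $G$. Since a single reconfiguration step changes exactly one of the two paths, by symmetry I treat the case $P'_2 = P_2$ and $P'_1 \neq P_1$; then $P'_2$ inherits the blue property, and it remains to show that $P'_1$ visits every red vertex. Because $P'_1$ is internally vertex-disjoint from $P_2$ while $P_2$ already contains every blue vertex, $P'_1$ avoids every blue vertex. Hence $P'_1$ is an $s$-$t$ path in the subgraph $G^{-}$ obtained from $G$ by deleting every blue vertex, and it suffices to prove that every $s$-$t$ path in $G^{-}$ passes through all red vertices.

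The key step is to show that in $G^{-}$ the ``chain'' edges
\[
\{s, r_{x_1,1}\},\quad \{r_{x_i,2}, r_{x_{i+1},1}\}\ (i = 1,\dots,R-1),\quad \{r_{x_R,2}, t\}
\]
are all bridges, for any $s$-$t$ path must then use each of them and therefore visit every red vertex $r_{x_i,1}, r_{x_i,2}$. I would verify this by inspecting each gadget after its blue vertices are removed. In a weight-$1$ edge gadget and in an OR vertex gadget, every internal edge is incident to a blue vertex, so no internal edge survives. In a weight-$2$ edge gadget, the survivors induce a $4$-cycle on $\{r_{e,1}, r_{e,2}, w_{e,u}, w_{e,v}\}$, entered and exited only at $r_{e,1}, r_{e,2}$. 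In an AND vertex gadget, the survivors on $\{r_{v,1}, r_{v,2}, k_v, w_{f,v}, w_{g,v}\}$ are connected by the two paths from $r_{v,1}$ to $r_{v,2}$ (via $k_v$, and via $w_{f,v}, w_{g,v}$), while $w_{e,v}$ becomes isolated inside $G_v$; again the gadget meets the rest of $G^{-}$ only at $r_{v,1}, r_{v,2}$.

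The delicate point, and what I expect to be the main obstacle, is that a white vertex $w_{e,v}$ lies in both $G_e$ and $G_v$ and could a priori relay two consecutive gadgets along the red chain in $G^{-}$. However, in $G^{-}$ exactly one of its two sides dies: if $e$ is weight-$2$, then the $G_v$-edges $\{w_{e,v}, b_{v,1}\}, \{w_{e,v}, b_{v,2}\}$ are removed, and if $e$ is weight-$1$ (so $v$ is an AND vertex), then the $G_e$-edges $\{w_{e,v}, b_{e,1}\}, \{w_{e,v}, b_{e,2}\}$ are removed. Hence in $G^{-}$ distinct gadgets $G_{x_i}$ communicate only through the chain edges above, so those edges are bridges and $P'_1$ visits every red vertex. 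The symmetric case $P'_1 = P_1$ and $P'_2 \neq P_2$ is handled by the same argument with red and blue swapped, applied to the blue chain edges $\{s, b_{y_1,1}\}$, $\{b_{y_i,2}, b_{y_{i+1},1}\}$, $\{b_{y_B,2}, t\}$.
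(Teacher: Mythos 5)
Your argument is correct and follows the same strategy as the paper: reduce by symmetry to one path changing, observe that it must avoid the opposite color, and then invoke the structural property of $G$ with the blue (resp.\ red) vertices deleted. Where the paper merely asserts that every red vertex separates $s$ and $t$ in this deletion, you supply the gadget-by-gadget verification (chain edges are bridges since white vertices lose all edges on one of their two sides), which is a welcome filling-in of the omitted detail rather than a different route.
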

\begin{proof}
A single reconfiguration step changes one of $P_1$ and $P_2$, but not both. We distinguish two cases.

First, assume that $P_1$ is reconfigured to another path $P'_1$, where in this case $P_2 = P'_2$.
Since all the blue vertices are occupied by $P_2$, the path $P'_1$ cannot pass through any blue vertex.
The removal of the blue vertices from $G$ results in a graph with the following property: each red vertex separates $s$ and $t$.
Therefore, $P'_1$ must go through all the red vertices.

Second, assume that $P_2$ is reconfigured to another path $P'_2$, where in this case $P_1 = P'_1$.
Since all the red vertices are occupied by $P_1$, the path $P'_2$ cannot pass through any red vertex.
The removal of the red vertices from $G$ yields a graph with the following property: each blue vertex separates $s$ and $t$.
Therefore, $P'_2$ must go through all the blue vertices.
\end{proof}

We now describe a canonical way of constructing an NCL configuration $\sigma$ of $H$ from an $s$-$t$ linkage $\mathcal{P}=\{P_1, P_2\}$ of $G$ where $P_1$ goes through all the red vertices and $P_2$ goes through all the blue vertices.

Let $e=\{u,v\}$ be a weight-$2$ edge of $H$.
The weight-$2$ edge gadget $G_e$ of $e$ has vertices $r_{e,1}$ and $r_{e,2}$ that are visited by $P_1$.
The path $P_1$ should visit one of $w_{e,u}$ and $w_{e,v}$, but not both.
If $P_1$ visits $w_{e,u}$, then the edge $e$ is oriented toward $v$ in $\sigma$.
If $P_1$ visits $w_{e,v}$, then the edge $e$ is oriented toward $u$ in $\sigma$.

Let $e=\{u,v\}$ be a weight-$1$ edge of $H$.
The weight-$1$ edge gadget $G_e$ of $e$ has vertices $b_{e,1}$ and $b_{e,2}$ that are visited by $P_2$.
The path $P_2$ should visit one of $w_{e,u}$ and $w_{e,v}$, but not both.
If $P_2$ visits $w_{e,u}$, then the edge $e$ is oriented toward $v$ in $\sigma$.
If $P_2$ visits $w_{e,v}$, then the edge $e$ is oriented toward $u$ in $\sigma$.

This finishes the description of the construction of an NCL configuration.
We call the constructed NCL configuration the \emph{canonical NCL configuration obtained from $\{P_1, P_2\}$}.
We observe that if $\mathcal{P}=\{P_1, P_2\}$ is an $s$-$t$ linkage of $G$ that is created from an NCL configuration $\sigma$, then 
$\sigma$ is the canonical NCL configuration obtained from $\mathcal{P}$.

\begin{claim}
Let $\mathcal{P}=\{P_1, P_2\}$ be an $s$-$t$ linkage of $G$ such that $P_1$ goes through all the red vertices and $P_2$ goes through all the blue vertices.
Then, the canonical NCL configuration $\sigma$ obtained from $\mathcal{P}$ is indeed an NCL configuration of $H$.
\end{claim}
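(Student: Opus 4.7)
The plan is to verify the NCL property vertex by vertex, namely that at each $v \in V(H)$ the total weight of edges oriented toward $v$ under the canonical rule is at least $2$. The argument is entirely local: it depends only on which of the white vertices in the vertex gadget $G_v$ are occupied by $P_1$ versus $P_2$, together with the canonical rule that tells us, inside a weight-$w$ edge gadget $G_h$ with $h = \{u,v\}$, that $h$ is oriented toward $v$ exactly when $P_1$ (for $w=2$) or $P_2$ (for $w=1$) visits $w_{h,u}$.

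First I would deal with an OR vertex $v$ incident to the three weight-$2$ edges $e, f, g$. Inside $G_v$ the blue vertices $b_{v,1}$ and $b_{v,2}$ have exactly the three white neighbors $w_{e,v}, w_{f,v}, w_{g,v}$, and outside $G_v$ each $w_{h,v}$ (for $h \in \{e,f,g\}$) is adjacent only to the red vertices $r_{h,1}, r_{h,2}$, which are occupied by $P_1$. Thus any subpath of $P_2$ from $b_{v,1}$ to $b_{v,2}$ inside $G_v$ must pass through exactly one $w_{h,v}$. That $w_{h,v}$ is therefore occupied by $P_2$, so in the edge gadget $G_h$ the traversal of $P_1$ must use the opposite white vertex $w_{h,u}$. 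By the canonical rule, $h$ is oriented toward $v$, contributing weight $2 \ge 2$.

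Next I would handle an AND vertex $v$ incident to a weight-$2$ edge $e$ and weight-$1$ edges $f, g$. Listing the neighbors of $r_{v,1}, r_{v,2}$ inside $G_v$ (and noting that $b_{v,1}, b_{v,2}$, as blue, are unavailable to $P_1$, and $w_{f,v}, w_{g,v}$ cannot leave $G_v$ via the $b_{f,i}, b_{g,i}$ because those are also blue), the traversal of $G_v$ by $P_1$ is either (i) $r_{v,1}, k_v, r_{v,2}$ or (ii) $r_{v,1}, w_{f,v}, w_{g,v}, r_{v,2}$. In case (i), $k_v$ is blocked for $P_2$, and since in $G_v$ the only $b_{v,1}$-to-$b_{v,2}$ path avoiding $k_v$ is $b_{v,1}, w_{e,v}, b_{v,2}$, the path $P_2$ must visit $w_{e,v}$; then the traversal of $G_e$ by $P_1$ uses $w_{e,u}$, so $e$ is canonically oriented toward $v$ and contributes weight $2$. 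In case (ii), $w_{f,v}$ and $w_{g,v}$ are occupied by $P_1$, so in the weight-$1$ edge gadgets $G_f$ and $G_g$ the path $P_2$ must use the opposite white vertices, which by the canonical rule orients both $f$ and $g$ toward $v$, contributing weight $1+1 = 2$.

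The main delicacy is not the arithmetic but justifying the short list of possible traversals of $G_v$ by $P_1$ and $P_2$; this requires carefully combining the adjacency structure of the gadget with the color-occupancy facts that $P_1$ covers all red vertices and $P_2$ covers all blue vertices (guaranteed under the hypotheses, e.g.\ via Claim~\ref{clm:pspacecompl-twopaths-onestep} when this claim is used on a reconfiguration sequence). Once the enumeration is in hand, the four sub-cases above exhaust everything, and in every sub-case the incoming weight at $v$ is at least $2$, so $\sigma$ is an NCL configuration of $H$.
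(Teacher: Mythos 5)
Your proof is correct and takes essentially the same approach as the paper: verifying the weight requirement vertex by vertex using the adjacency and degree structure of the gadgets, together with the fact that each edge gadget $G_h$ is traversed via exactly one of $w_{h,u}, w_{h,v}$. The only cosmetic difference is that you make the case split by enumerating the two possible $P_1$-traversals of an AND gadget, whereas the paper makes the split on whether $w_{e,v}$ is occupied by $P_1$ and derives the same two scenarios by degree counting at $b_{v,1}, b_{v,2}, r_{v,1}, r_{v,2}$; these are equivalent.
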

\begin{proof}
We check that $\sigma$ satisfies the weight requirement at every vertex $v$ of $H$.

Assume that $v$ is an OR vertex which is incident to three weight-$2$ edges $e$, $f$, and $g$.
We observe that in the OR vertex gadget $G_v$, one of $w_{e,v}$, $w_{f,v}$ and $w_{g,v}$ is not occupied by the red path $P_1$.
Suppose not.
Then, out of four neighbors of the blue vertex $b_{v,1}$ in $G_v$, the three vertices $w_{e,v}$, $w_{f,v}$ and $w_{g,v}$ are already occupied.
Therefore, the blue path $P_2$ cannot be vertex-disjoint with $P_1$. This is a contradiction.

Therefore, in the canonical NCL configuration $\sigma$, one of the edges $e$, $f$, and $g$ is oriented toward $v$.
This means the incoming weight to $v$ is at least two, and the weight requirement is satisfied at $v$.

Assume that $v$ is an AND vertex which is incident to one weight-$2$ edge $e$ and two weight-$1$ edges $f$ and $g$.
We observe that in the AND vertex gadget $G_v$, if $w_{e,v}$ is occupied by $P_1$, then $w_{f,v}$ and $w_{g,v}$ are occupied by $P_1$, too.
To this end, assume that $w_{e,v}$ is occupied by $P_1$.
Since $P_1$ and $P_2$ are vertex-disjoint, and $b_{v,1}$ and $b_{v,2}$ are of degree $3$ that share $w_{e,v}$ as one of the common neighbors, the blue path $P_2$ must go through the black vertex $k_v$.
Since the red vertices $r_{v,1}$ and $r_{v,2}$ are of degree $3$ that share $k_v$ as a common neighbor, the red path $P_1$ must go through $w_{f,v}$ and $w_{g,v}$.
This finishes the proof of the observation.

Therefore, if $w_{e,v}$ is not occupied by $P_1$, then in the canonical NCL configuration $\sigma$ the weight-$2$ edge $e$ is oriented toward $v$, in which case the weight requirement is satisfied at $v$.
If not, by the observation above, $w_{f,v}$ and $w_{g,v}$ are occupied by $P_1$.
Since $P_1$ and $P_2$ are vertex-disjoint, the blue path $P_2$ cannot go through any of $w_{f,v}$ and $w_{g,v}$.
This means that in the canonical NCL configuration $\sigma$ the two weight-$1$ edges $f$ and $g$ are oriented toward $v$, and in this case the weight requirement is satisfied at $v$.
\end{proof}

\begin{claim}
\label{clm:pspacecompl-twopaths-disjpaths-to-NCL}
Let $\mathcal{P}=\{P_1, P_2\}$ be an $s$-$t$ linkage of $G$ such that $P_1$ goes through all the red vertices and $P_2$ goes through all the blue vertices, and 
let $\sigma$ be the canonical NCL configuration obtained from $\mathcal{P}$. 
Let $\mathcal{P}'=\{P'_1, P'_2\}$ be an $s$-$t$ linkage of $G$ obtained by a single reconfiguration step from $\mathcal{P}$ and $\sigma'$  be the canonical NCL configuration obtained from $\mathcal{P}'$.
Then, $\sigma'$ can be obtained from $\sigma$ by at most $|V(H)|+|E(H)|$ flips in $H$.
\end{claim}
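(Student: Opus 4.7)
I plan to prove this claim by reducing to a per-vertex analysis of NCL feasibility and then showing that the differing edges can be flipped in an arbitrary order. A single reconfiguration step from $\mathcal{P}$ to $\mathcal{P}'$ replaces exactly one of the two paths, so without loss of generality assume $P'_1 \neq P_1$ and $P'_2 = P_2$. In the canonical construction, the orientation of each weight-$1$ edge $e$ is determined solely by which of $w_{e,u}, w_{e,v}$ is visited by the blue path in the edge gadget $G_e$; since $P_2$ is unchanged, every weight-$1$ edge retains its orientation in $\sigma'$. Hence the set $F$ of edges whose orientations differ between $\sigma$ and $\sigma'$ satisfies $F \subseteq E_2(H)$, and in particular $|F| \leq |E(H)| \leq |V(H)|+|E(H)|$.

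The crux is to show that the $|F|$ flips can be performed in \emph{any} order while preserving NCL at every vertex. I plan to verify this through two per-vertex observations. At each OR vertex $v$, the blue path $P_2$ visits exactly one white vertex of the OR vertex gadget $G_v$, say $w_{e^*(v), v}$. The only neighbors of $w_{e^*(v), v}$ in $G$ are the two blue vertices $b_{v,1}, b_{v,2}$ inside $G_v$ and the two red vertices $r_{e^*(v), 1}, r_{e^*(v), 2}$ inside $G_{e^*(v)}$. Since the blue neighbors are occupied by $P_2$, the vertex-disjointness conditions force that neither $P_1$ nor $P'_1$ visits $w_{e^*(v), v}$. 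Consequently, in the weight-$2$ edge gadget $G_{e^*(v)}$ both $P_1$ and $P'_1$ visit the opposite white vertex, so $e^*(v)$ is oriented toward $v$ in both $\sigma$ and $\sigma'$. Hence $e^*(v) \notin F$ and $e^*(v)$ remains incoming at $v$ throughout any reconfiguration of $F$; since every incident edge at an OR vertex has weight $2$, the NCL condition at $v$ is preserved in every intermediate state.

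At each AND vertex $v$ with incident weight-$2$ edge $e$ and weight-$1$ edges $f, g$, I distinguish two cases based on $\sigma'$. If both $f$ and $g$ are incoming at $v$ in $\sigma'$, then (since weight-$1$ edges do not flip in this case) they are both incoming in $\sigma$ and in every intermediate state, supplying incoming weight at least $2$. Otherwise, at least one of $f, g$ is outgoing in $\sigma'$, and the NCL condition at $v$ in $\sigma'$ forces $e$ incoming in $\sigma'$; by the same reasoning applied to $\sigma$ (with the same orientations of $f, g$), $e$ must also be incoming in $\sigma$, so $e \notin F$ and $e$ remains incoming at $v$ throughout. In either case NCL at $v$ is preserved at every intermediate state.

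Combining these two observations, every intermediate orientation obtained by flipping any subset of $F$ is a valid NCL configuration, so the edges of $F$ can be flipped in any order, giving a reconfiguration sequence of length $|F| \leq |E(H)| \leq |V(H)|+|E(H)|$. The symmetric case where $P_2$ is replaced instead of $P_1$ is handled by the mirror argument with the roles of weight-$1$ and weight-$2$ edges swapped. I expect the main obstacle to be pinning down the ``permanent incoming edge'' $e^*(v)$ at every OR vertex, since the argument rests on a careful reading of the gadget incidences together with vertex-disjointness; once this is in place, the AND case is a direct case analysis, and the arbitrary-order conclusion follows immediately.
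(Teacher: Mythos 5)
Your proof is correct, and it takes a genuinely different route from the paper's. The paper reasons at the level of gadgets: it decomposes the single rerouting step $P_1 \to P'_1$ into a sequence of local sub-steps, each touching exactly one edge gadget or one AND vertex gadget, observes that only edge-gadget sub-steps correspond to flips, and argues each such flip preserves NCL because the vacancy of both white vertices in that gadget (needed for the reroute) forces both endpoints to have incoming weight at least two from elsewhere. Your proof instead works directly at the orientation level: you identify the symmetric difference $F$ between $\sigma$ and $\sigma'$, note it lies in a single weight class because the unchanged path determines the other class, and then exhibit at each vertex a concrete edge (or pair of edges) outside $F$ that stays incoming throughout, so that flipping any subset of $F$ keeps NCL. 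This buys you an order-independent argument (the paper's decomposition is implicitly order-independent, but your invariant makes it explicit) and the tighter bound $|F| \le |E(H)|$, whereas the paper settles for $|V(H)|+|E(H)|$ by counting all gadgets touched. One small remark: your "only neighbors of $w_{e^*(v),v}$" reasoning at OR vertices is more elaborate than needed — since $P_2$ itself occupies $w_{e^*(v),v}$, vertex-disjointness alone already bars $P_1$ and $P'_1$ from that vertex — and your closing sentence that the $P_2$-changes-case is handled by "swapping roles of weight-1 and weight-2 edges" is a slight overstatement (OR vertices have only weight-2 incident edges, so the per-vertex analysis is not literally a mirror image), though the case does go through along the lines you sketch: no incident edge of an OR vertex can flip when $P_1$ is fixed, and the AND vertex analysis transposes directly.
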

\begin{proof}
Suppose that we now want to transform $P_1$ to $P'_1$ in such a way that $P'_1$ and $P_2$ are internally vertex-disjoint $s$-$t$ paths.
By Claim \ref{clm:pspacecompl-twopaths-onestep}, $P'_1$ goes through all the red vertices. 

Consider the red vertices $r_{e,1}$ and $r_{e,2}$ for an edge $e=\{u,v\} \in E(H)$.
Since $P_1$ passes through both of them, $P_1$ must pass through exactly one of $w_{e,u}$ and $w_{e,v}$.
The same observation applies to $P'_1$, too.
If $P_1$ passes through $w_{e,u}$ and $P_2$ does not pass through $w_{e,v}$, then we may reroute $P_1$ to pass through $w_{e,v}$ instead of $w_{e,u}$ so as to obtain $P'_1$.

Consider the red vertices $r_{v,1}$ and $r_{v,2}$ for an AND vertex $v \in V(H)$ that is incident to one weight-$2$ edge $e$ and two weight-$1$ edges $f$ and $g$.
Then, $P_1$ must pass through either $k_v$ or $w_{f,v}, w_{g,v}$.
The same observation applies to $P'_1$, too.
If $P_1$ passes through $k_v$ and $P_2$ passes through neither $w_{f,v}$ nor $w_{g,v}$, then we may reroute $P_1$ to pass through $w_{f,v}, w_{g,v}$ so as to obtain $P'_1$.
If $P_1$ passes through $w_{f,v}$ and $w_{g,v}$, and $P_2$ does not pass through $k_v$, then we may reroute $P_1$ to pass through $k_v$ so as to obtain $P'_1$.

Rerouting within an AND vertex gadget does not correspond to flips in an orientation.
On the other hand, rerouting within an edge gadget corresponds to a flip in an orientation.
Namely, rerouting $P_1$ to pass through $w_{e,v}$ instead of $w_{e,u}$ is mapped to a flip of the direction of $e$.
If one rerouting operation involves the reconnection within several edge gadgets, then the operation is mapped to a set of flips of the corresponding edges.

We now argue that such a set of flips can be performed sequentially.
To this end, imagine that one rerouting operation of $P_1$ to $P'_1$ is decomposed into a number of reconfiguration steps each of which involves only one edge gadget.
Then, the sequence of those decomposed reconfiguration steps together yields $P'_1$ from $P_1$, and each step corresponds to a single flip in $H$.
Since two or more edge gadgets do not share a vertex in $G$, such a decomposition is well-defined, and we are done.

Similarly, we may study the case where we transform $P_2$ to $P'_2$ in such a way that $P_1$ and $P'_2$ are internally vertex-disjoint $s$-$t$ paths.
Rerouting within an AND vertex gadget and an OR vertex gadget is similar, and does not correspond to a flip in an orientation.
On the other hand, rerouting within an edge gadget corresponds to a flip of the corresponding edge.

When we are able to reroute a path within an edge gadget $G_e$ for an edge $e=\{u,v\} \in E(H)$, the other path passes through neither $w_{e,u}$ nor $w_{e,v}$.
This means that, even without $e$, the total weights of incoming arcs to $u$ and $v$ are at least two, respectively.
Therefore, the flip described above maintains the property that the resulting orientation is again an NCL configuration.

There can be several gadgets that are involved in one reconfiguration step. However, the number of such gadgets is at most $|V(H)|+|E(H)|$.
Hence, that reconfiguration step corresponds to at most $|V(H)|+|E(H)|$ flips in $H$.
\end{proof}

Let $\sigma'$ be the canonical NCL configuration in Claim \ref{clm:pspacecompl-twopaths-disjpaths-to-NCL}, and $\mathcal{P}'' = (P''_1, P''_2)$ be the family of two internally vertex-disjoint $s$-$t$ paths of $G$ that is created from $\sigma'$.
Note that $\mathcal{P}''$ does not have to be identical to $\mathcal{P}'$.
However, $\mathcal{P}''$ can be obtained from $\mathcal{P}'$ by at most $|V(H)|+|V^{\mathrm{a}}(H)|$ reconfiguration steps (recall that $V^{\mathrm{a}}(H)$ is the set of AND vertices of $H$).
This is because the situations in edge gadgets are identical in $\mathcal{P}'$ and $\mathcal{P}''$, and reconfiguration steps would be needed only around vertex gadgets.

As a summary, we have proved that the reduction is sound, complete, and polynomially bounded.
See \figurename~\ref{fig:pspacecompl2}.
We now complete the proof of Theorem \ref{thm:pspacecompl-twopaths}.
Let $\sigma$ and $\tau$ be two NCL configurations.
Then, define two $s$-$t$ linkages $\mathcal{P} = (P_1, P_2)$ and $\mathcal{Q} = (Q_1, Q_2)$ of $G$ as they are created from $\sigma$ and $\tau$, respectively.
Suppose that we may transform $\sigma$ to $\tau$ by a sequence of flips in such a way that all the intermediate orientations are NCL configurations.
Then, the correspondence above gives a sequence of reconfiguration steps 
to transform $\mathcal{P}$ to $\mathcal{Q}$ so that all the intermediate sets of two $s$-$t$ paths are internally vertex-disjoint.
On the other hand, suppose that we may transform $\mathcal{P}$ to $\mathcal{Q}$ by reconfiguration steps.
Then, the correspondence above gives a sequence of flips to transform $\sigma$ to $\tau$ so that all the intermediate orientations are NCL configurations.
\end{proof}

\begin{figure}[pt]
    \centering
    \includegraphics[width=.7\textwidth]{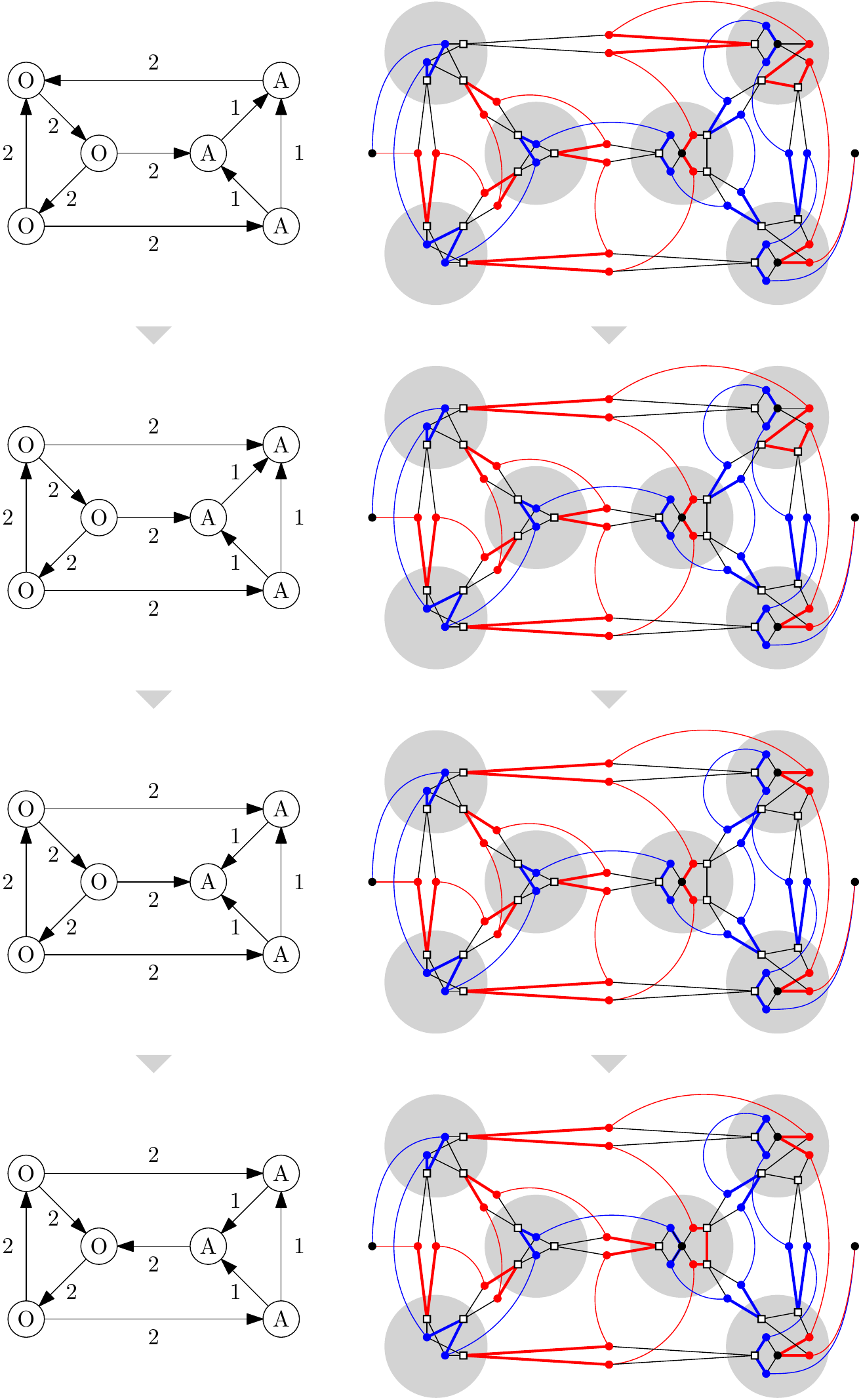}
    \caption{The correspondence between flips in NCL configurations and reconfiguration steps for vertex-disjoint paths.}
    \label{fig:pspacecompl2}
\end{figure}

It is useful to note that the reduction can be modified so that $G$ in the constructed instance of \stDPR has bounded bandwidth if the AND/OR graph $H$ in a given instance of the NCL reconfiguration has bounded bandwidth.
A brief sketch is given in Appendix \ref{sec:stDPR-bandwidth}.
This shows that Theorem \ref{thm:pspacecompl-twopaths} holds even for graphs of bounded bandwidth and maximum degree four.

\subsection{Planar Graphs of Bounded Bandwidth with Unbounded $k$}

A similar reduction proves the $\PSPACE$-completeness for planar graphs of bounded bandwidth when the number of paths is unbounded.

\planarhardness*

\begin{proof}
We have already observed that the problem is in $\PSPACE$. To show the $\PSPACE$-hardness, we now give a transformation of a given planar AND/OR graph $H=(V(H),E(H))$ of bounded bandwidth to an undirected planar graph $G=(V(G),E(G))$ of bounded bandwidth.
Each edge $e = \{u, v\} \in E(H)$ of $H$ is mapped to the following \emph{edge gadget} $G_e$ (see \figurename~\ref{fig:planar_pspacecompl_edge1}):
\begin{align*}
    V(G_e) &= \{w_{e,u}, w_{e,v}, s_{e}, t_{e}\},\\
    E(G_e) &= \{\{w_{e,u}, s_{e}\}, \{w_{e,u}, t_{e}\}, \{w_{e,v}, s_{e}\}, \{w_{e,v}, t_{e}\}\}.
\end{align*}

\begin{figure}[t]
    \centering
    \includegraphics{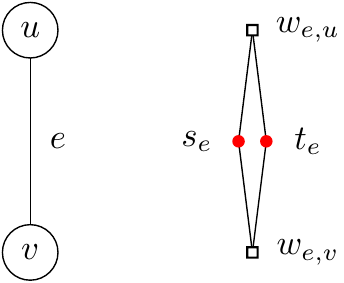}
    \caption{An edge gadget.}
    \label{fig:planar_pspacecompl_edge1}
\end{figure}

Let $v \in V(H)$ be an OR vertex, and $e,f,g \in E(H)$ be three weight-$2$ edges incident to $v$.
Then, $v$ is mapped to the following \emph{OR vertex gadget} $G_v$ (see \figurename~\ref{fig:planar_pspacecompl_overt1}):
\begin{align*}
    V(G_v) &= \{w_{e,v}, w_{f,v}, w_{g,v}, a_v, b_v, c_v, d_v, s_v, t_v, s^{\mathrm{o}}_v, t^{\mathrm{o}}_v\},\\
    E(G_v) &= \{\{w_{e,v}, a_v\}, \{w_{e,v}, b_v\}, \{w_{e,v}, t_v\}, \{w_{f,v}, a_v\}, \{w_{f,v}, s_v\},\\
        &\qquad \{w_{g,v}, b_v\}, \{w_{g,v}, s_v\}, \{a_v, c_v\}, \{a_v, d_v\}, \{b_v, c_v\}, \{b_v, d_v\},\\
        &\qquad \{c_v, s_v\}, \{c_v, s^{\mathrm{o}}_v\}, \{c_v, t^{\mathrm{o}}_v\}, \{d_v, t_v\}, \{d_v, s^{\mathrm{o}}_v\}, \{d_v, t^{\mathrm{o}}_v\}\}.
\end{align*}
Note that the vertices with the same label are identified.

\begin{figure}[t]
    \centering
    \includegraphics{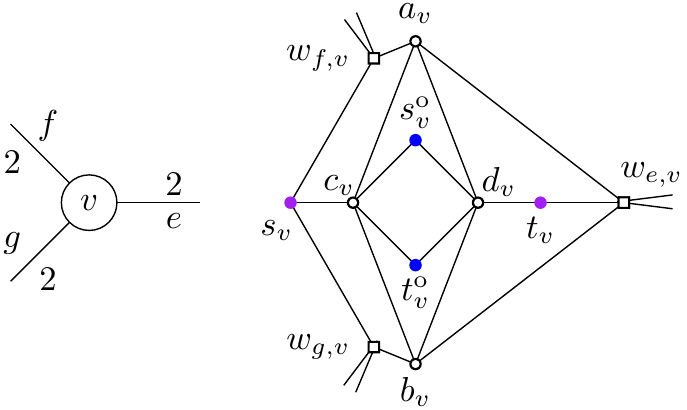}
    \caption{An OR vertex gadget.}
    \label{fig:planar_pspacecompl_overt1}
\end{figure}

An AND vertex $v \in V(H)$ that is incident to one weight-$2$ edge $e$ and two weight-$1$ edges $f, g$ is mapped to the following \emph{AND vertex gadget} $G_v$  (see \figurename~\ref{fig:planar_pspacecompl_avert1}):
\begin{align*}
    V(G_v) &= \{w_{e,v}, w_{f,v}, w_{g,v}, s_v, t_v\},\\
    E(G_v) &= \{ \{w_{e,v}, s_v\},
                 \{w_{e,v}, t_v\},
                 \{w_{f,v}, s_v\},
                 \{w_{g,v}, t_v\},
                 \{w_{f,v}, w_{g,v}\}\}.
\end{align*}

\begin{figure}[t]
    \centering
    \includegraphics{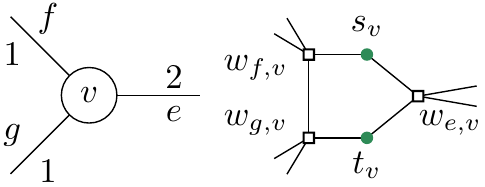}
    \caption{An AND vertex gadget.}
    \label{fig:planar_pspacecompl_avert1}
\end{figure}

Define
\begin{align*}
    V(G) &= \bigcup_{v \in V(H)} V(G_v) \cup \bigcup_{e \in E(H)} V(G_e),\\
    E(G) &= \bigcup_{v \in V(H)} E(G_v) \cup \bigcup_{e \in E(H)}E(G_e).
\end{align*}
This completes the construction of $G$.
\figurename~\ref{fig:planar_pspacecompl1} may help the reader understand.

We note that if $H$ is planar and of bounded bandwidth, then so is $G$.
To see that $G$ is planar, observe that each gadget can be drawn on the plane without edge crossing in such a way that the vertices shared with other gadgets are placed on its outer face.
Then, a plane drawing of $G$ can be obtained from a plane drawing of $H$ by replacing each vertex and edge of $H$ by the corresponding gadget.
To see that the bandwidth of $G$ is bounded, observe that each gadget is of constant size.
Therefore, by giving a slack of constant width $c$ for each vertex of $H$, we may obtain an injective mapping of the vertices of $G$ to $\Z$ whose bandwidth is at most $c$ times the bandwidth of $H$.\footnote{The constant $c$ can be chosen as $16$, but this particular choice is not relevant.}

\begin{figure}[t]
    \centering
    \includegraphics{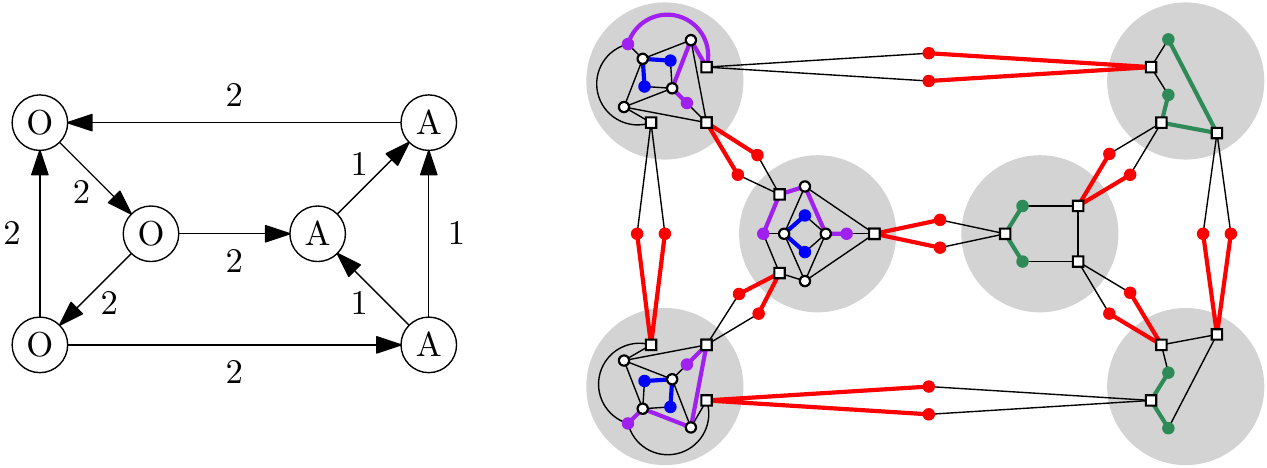}
    \caption{Mapping an NCL configuration to a linkage. In the NCL configuration, the vertex label A stands for an AND vertex, and the vertex label O stands for an OR vertex.}
    \label{fig:planar_pspacecompl1}
\end{figure}

In the constructed graph $G$, we consider the following pairs of vertices:
\[
    \{(s_e, t_e) \mid e \in E(H)\} \cup \{(s_v, t_v) \mid v \in V(H)\} \cup \{(s^{\mathrm{o}}_v, t^{\mathrm{o}}_v) \mid v \in V^{\mathrm{o}}(H) \},
\]
where $V^{\mathrm{o}}(H)$ denotes the set of OR vertices in $H$.
The number of pairs is $|E(H)|+|V(H)|+|V^{\mathrm{o}}(H)|$.

We now describe the way how an NCL configuration $\sigma$ of $H$ translates to a linkage
\[
    \mathcal{P} = \{P_e \mid e \in E(H)\} \cup \{P_v \mid v \in V(H)\} \cup \{P^{\mathrm{o}}_v \mid v \in V^{\mathrm{o}}(H)\}
\] 
of $G$, where $P_e$ joins $s_e$ and $t_e$ for every $e \in E(H)$, $P_v$ joins $s_v$ and $t_v$ for every $v \in V(H)$, and $P^{\mathrm{o}}_v$ joins $s^{\mathrm{o}}_v$ and $t^{\mathrm{o}}_v$ for every $v \in V^{\mathrm{o}}(H)$.
For convenience, $P_e$ is called a \emph{red} path for every $e \in E(H)$, and $P^{\mathrm{o}}_v$ is called a \emph{blue} path for every $v \in V^{\mathrm{o}}(H)$.
Similarly, for every $v \in V(H)$, $P_v$ is called a \emph{green} path if $v$ is an AND vertex, and a \emph{purple} path if $v$ is an OR vertex.

First, we specify the red paths $P_e$ for all $e=\{u,v\} \in E(H)$.
The path $P_e$ connects $s_e$ and $t_e$ and its length is two.
Therefore, there will be a unique middle vertex.
If the edge $e$ is oriented from $u$ to $v$ in the NCL configuration $\sigma$, then the path $P_e$ visits $w_{e,u}$ as its middle vertex.
If $e$ is oriented from $v$ to $u$ in $\sigma$, then $P_e$ visits $w_{e,v}$ as its middle vertex.
This finishes the specification of the red paths $P_e$.

Next, we specify the green paths $P_v$ that connect $s_v$ and $t_v$ for all AND vertices $v \in V(H)$. 
Recall that $v$ is incident to a weight-$2$ edge $e$ and two weight-$1$ edges $f, g$.
If $e$ is oriented toward $v$, then $P_v$ passes through $w_{e,v}$ and its length is two.
Otherwise, $f$ and $g$ must be oriented toward $v$.
Then, $P_v$ passes through $w_{f,v}$ and $w_{g,v}$ and its length is three.
This finishes the specification of the green paths $P_v$.

Finally, we specify the purple paths $P_v$ that connect $s_v$ and $t_v$ and the blue paths $P^{\mathrm{o}}_v$ that connect $s^{\mathrm{o}}_v$ and $t^{\mathrm{o}}_v$ for all OR vertices $v \in V^{\mathrm{o}}(H)$.
Recall that $v$ is incident to three weight-$2$ edges $e, f, g$.
We have three cases, which are not exclusive.
If $e$ is oriented toward $v$, then the purple path $P_v$ passes through $c_v, a_v, w_{e,v}$ (or $c_v, b_v, w_{e,v}$), and the blue path $P^{\mathrm{o}}_v$ passes through $d_v$.
If $f$ is oriented toward $v$, then $P_v$ passes through $w_{f,v}$, $a_v$, $d_v$, and $P^{\mathrm{o}}_v$ passes through $c_v$.
If $g$ is oriented toward $v$, then $P_v$ passes through $w_{g,v}$, $b_v$, $d_v$, and $P^{\mathrm{o}}_v$ passes through $c_v$.
This finishes the specification of the purple paths $P_v$ and the blue paths $P^{\mathrm{o}}_v$.
Note that the lengths of purple paths are always four and the lengths of blue paths are always two.

This finishes the construction of $\mathcal{P}$.
With this construction, we say that the linkage $\mathcal{P}$ \emph{is created from} the NCL configuration $\sigma$.

\begin{claim}
\label{clm:pspacecompl-planar-vdisj}
The constructed paths are vertex-disjoint.
\end{claim}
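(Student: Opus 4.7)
The plan is to argue vertex-disjointness by localizing the analysis to individual gadgets and then handling the few vertices shared between gadgets. The key structural observation is that distinct gadgets share vertices only at the interface vertices $w_{e,v}$ (shared between an edge gadget $G_e$ and the vertex gadget $G_v$ for an endpoint $v$ of $e$), whereas all other vertices $s_\bullet, t_\bullet, s^{\mathrm{o}}_\bullet, t^{\mathrm{o}}_\bullet, a_v, b_v, c_v, d_v, w_{e,u}$ lie in at most one gadget. Consequently, two paths from different pairs can collide only (i) inside the same gadget or (ii) at an interface vertex $w_{e,v}$.

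I would first dispose of the interface case. By construction, the red path $P_e$ visits $w_{e,v}$ as its middle vertex exactly when $e$ is oriented \emph{away} from $v$ in $\sigma$. On the other hand, in the vertex gadget $G_v$ (AND or OR), the path $P_v$ visits $w_{e,v}$ only when $e$ is oriented \emph{toward} $v$ (for AND $v$ with $e$ the weight-$2$ edge, this is one of the two cases; for OR $v$, it corresponds to the clause ``if $e$ is oriented toward $v$''), and $P^{\mathrm{o}}_v$ never visits $w_{e,v}$. These two conditions on the orientation of $e$ are mutually exclusive, so $P_e$ and $P_v$ cannot simultaneously use $w_{e,v}$.

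Next, I would handle the case where both endpoints of a conflict lie inside one gadget. In edge gadgets and AND vertex gadgets at most one path from $\mathcal{P}$ traverses the gadget, so there is nothing to check beyond verifying that the constructed path is indeed a legitimate path in the gadget: for an AND vertex $v$, either $e$ is oriented toward $v$ (giving the length-$2$ path through $w_{e,v}$), or both $f$ and $g$ are oriented toward $v$ by the NCL condition at $v$ (giving the length-$3$ path through $w_{f,v}, w_{g,v}$), so $P_v$ is well defined. The real work is the OR vertex gadget, which contains both a purple path $P_v$ and a blue path $P^{\mathrm{o}}_v$; here I would perform a short case analysis on which of $e,f,g$ is the chosen incoming edge (at least one exists by the NCL condition at $v$). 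In each of the three cases the two paths use disjoint internal vertex sets, e.g.\ when $e$ is the chosen incoming edge, $P_v$ traverses $\{c_v, a_v, w_{e,v}\}$ (or the symmetric $\{c_v, b_v, w_{e,v}\}$) while $P^{\mathrm{o}}_v$ traverses $\{d_v\}$, and similarly for $f$ and $g$ by symmetry.

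The main obstacle I anticipate is purely bookkeeping in the OR gadget: the purple path has length four and there are genuine choices (via the vertices $a_v$ and $b_v$) when several edges point into $v$, so one must be a little careful to verify that \emph{some} consistent choice yields disjointness in every NCL configuration, rather than asserting uniqueness. Once the three subcases of the OR gadget are dispatched and combined with the interface argument, no other pair of paths can share a vertex, which completes the proof.
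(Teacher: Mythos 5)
Your proof is correct and takes essentially the same route as the paper: localize potential conflicts to individual gadgets, handle the OR-gadget internals by a short case analysis, and rule out conflicts at the interface vertices $w_{e,v}$ via the mutual exclusivity of the orientation conditions under which $P_e$ and the vertex-gadget path visit them. Two small slips are worth noting, though neither breaks the argument. First, in your structural observation you list $w_{e,u}$ among the vertices lying in at most one gadget, but it is itself an interface vertex (shared between $G_e$ and $G_u$); the private vertices are $s_\bullet, t_\bullet, s^{\mathrm{o}}_\bullet, t^{\mathrm{o}}_\bullet, a_v, b_v, c_v, d_v$. Second, your parenthetical explaining that $P_v$ visits $w_{e,v}$ only when $e$ points into $v$ covers the weight-2 edge at an AND vertex and all edges at an OR vertex, but omits the case of a weight-1 edge $e$ at an AND vertex $v$: there the green path visits $w_{e,v}$ exactly when the weight-2 edge at $v$ points away from $v$, which by the NCL constraint at $v$ forces $e$ to point into $v$. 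The paper spells out this sub-case explicitly.
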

\begin{proof}
Each path visits vertices in the corresponding gadget only, and within each OR gadget the two constructed paths do not share vertices by the construction rule.
Therefore, it suffices to examine the situation at identified vertices between an edge gadget and an AND/OR vertex gadget.

Let $v$ be an OR vertex that are incident to three weight-$2$ edges $e, f, g$.
Assume that the corresponding OR vertex gadget is constructed as above.

Suppose that the vertex $w_{e,v}$ is shared by two of the constructed paths.
One of those paths is the red path $P_e$ that connects $s_e$ and $t_e$; the other path is the purple path $P_v$ that connects $s_v$ and $t_v$.
Since $P_e$ passes through $w_{e,v}$, the edge $e$ is not oriented toward $v$.
On the other hand, since $P_v$ passes through $w_{e,v}$, the edge $e$ is oriented toward $v$.
This is a contradiction, and thus $w_{e,v}$ is not shared by two paths.

Suppose that the vertex $w_{f,v}$ is shared by two of the constructed paths.
One of those paths is the red path $P_f$, which means that the edge $f$ is not oriented toward $v$.
Then, the other path is the purple path $P_v$.
By the construction, it holds that $f$ is oriented toward $v$.
This is a contradiction.
The same conclusion holds when $w_{g,v}$ is shared by two paths.

Let $v$ be an AND vertex that are incident to one weight-$2$ edge $e$ and two weight-$1$ edges $f, g$.
Assume that the corresponding AND vertex gadget is constructed as above.

Suppose that the vertex $w_{e,v}$ is shared by two of the constructed paths.
One of those paths is the red path $P_e$ that connects $s_e$ and $t_e$; the other path is the green path $P_v$ that connects $s_v$ and $t_v$.
Since $P_e$ passes through $w_{e,v}$, the edge $e$ is not oriented toward $v$.
On the other hand, since $P_v$ passes through $w_{e,v}$, the edge $e$ is oriented toward $v$.
This is a contradiction, and thus $w_{e,v}$ is not shared by two paths.

Suppose that the vertex $w_{f,v}$ is shared by two of the constructed paths.
One of those paths is the red path $P_f$ that connects $s_f$ and $t_f$, which means that the edge $f$ is not oriented toward $v$.
The other path is the green path $P_v$ that connects $s_v$ and $t_v$, which means that $f$ is oriented toward $v$.
This is a contradiction.
The same conclusion holds when $w_{g,v}$ is shared by two paths.

In summary, there is no vertex in the instance $G$ that is shared by two or more constructed paths.
\end{proof}

We now study how a flip in an orientation maps to a sequence of reconfiguration steps of paths.
\begin{claim}
\label{clm:pspacecompl-planar-NCL-to-disjpaths}
Let $\mathcal{P}$ be a linkage of $G$ that is created from an NCL configuration $\sigma$.
If an NCL configuration $\sigma'$ is obtained from $\sigma$ by an flip of a single edge, then a linkage $\mathcal{P}'$ of $G$ that is created from $\sigma'$ can be obtained from $\mathcal{P}$ by at most a constant number of reconfiguration steps in $G$.
\end{claim}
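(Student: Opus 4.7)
The plan is a case analysis on the weight of the flipped edge $e = \{u, v\}$ and on the types (AND or OR) of its endpoints, in each case decomposing the transformation into three local phases performed in order: (i) rearrange the vertex paths $P_v$ (and $P^{\mathrm{o}}_v$ if $v$ is an OR vertex) inside the gadget $G_v$ so as to vacate the vertex $w_{e, v}$; (ii) inside the edge gadget $G_e$, swap the middle vertex of $P_e$ from $w_{e, u}$ to $w_{e, v}$, which simultaneously vacates $w_{e, u}$; (iii) symmetrically rearrange $P_u$ (and $P^{\mathrm{o}}_u$ if $u$ is OR) inside $G_u$ to use $w_{e, u}$ in accordance with the canonical rule for $\sigma'$. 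Since only the three gadgets $G_e, G_u, G_v$ are touched and each has constant size, an absolute constant number of single-path swaps will suffice.

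Phase (ii) is immediate once $w_{e, v}$ is vacant, since the edge gadget has only four vertices and no further constraint. Phases (i) and (iii) are symmetric, so it is enough to analyse (i). The crucial point is that the NCL condition at $v$ holds in $\sigma'$, so the edges incident to $v$ other than $e$ already supply incoming weight at least two. For an AND vertex $v$, this means switching $P_v$ between its short form through $w_{e, v}$ and its long form through $w_{f, v}, w_{g, v}$; the target vertices of the new routing are free because the red paths $P_f, P_g$ lie on their opposite endpoints in the canonical form. For an OR vertex $v$, NCL in $\sigma'$ guarantees that some other incident edge, say $f$ or $g$, is directed toward $v$, so an alternative canonical route for $P_v$ through $w_{f, v}$ (or $w_{g, v}$), $a_v$ (or $b_v$), $d_v$ is available.

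The main obstacle is the OR-vertex case, where $P_v$ and $P^{\mathrm{o}}_v$ must exchange the roles of $c_v$ and $d_v$ during the reconfiguration: in the ``$e$ toward $v$'' canonical form $P_v$ passes through $c_v$ while $P^{\mathrm{o}}_v$ uses $d_v$, whereas in the ``$f$ toward $v$'' canonical form the assignment is reversed. Because each target is occupied by the other path, a direct one-step swap is impossible; the plan is to interleave moves with phase (ii), exploiting the fact that $w_{e, v}$ becomes available after the edge-gadget swap and thereby offers $P_v$ a temporary home avoiding both $c_v$ and $d_v$. Using the internal bypasses through $a_v$ and $b_v$, together with the freedom guaranteed by NCL in both $\sigma$ and $\sigma'$, one can write down an explicit ordered sequence of at most a bounded number of single-path reconfigurations (roughly two or three moves per gadget, at most eight in total) that performs the whole transformation while preserving the linkage property at every intermediate step. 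The verification that each intermediate configuration is vertex-disjoint is a direct check from the gadget adjacencies, and this completes the proof.
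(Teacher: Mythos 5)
Your proposal is correct and takes essentially the same approach as the paper: a case analysis on the weight of the flipped edge and the types of its endpoints, vacating $w_{e,v}$ by local rearrangement inside $G_v$, swapping the middle vertex of $P_e$ in the edge gadget, then adjusting $G_u$, with the NCL condition in both $\sigma$ and $\sigma'$ guaranteeing that the required target vertices are free. Your treatment of the OR-gadget subtlety (the exchange of $c_v$ and $d_v$ between $P_v$ and $P^{\mathrm{o}}_v$) is in fact more explicit than the paper's, which disposes of it with the phrase ``possibly with two more reconfiguration steps involving the blue path $P^{\mathrm{o}}_v$.'' One thing to fix, though: you write that $w_{e,v}$ ``becomes available after the edge-gadget swap'' and thus offers $P_v$ a temporary home. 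The timing is backwards. Before phase (ii), $w_{e,v}$ is already held by $P_v$ itself (since $e$ points toward $v$ in $\sigma$), so $P_v$ can reroute from $s_v, c_v, a_v, w_{e,v}, t_v$ to $s_v, w_{f,v}, a_v, w_{e,v}, t_v$ (using that $w_{f,v}$ is free because $f$ is toward $v$ in $\sigma$), vacating $c_v$; then $P^{\mathrm{o}}_v$ moves to $c_v$; then $P_v$ moves to $s_v, w_{f,v}, a_v, d_v, t_v$, vacating $w_{e,v}$; and only now can $P_e$ perform the edge-gadget swap. After that swap, $w_{e,v}$ is occupied by $P_e$ and is not available, and the swap itself cannot be done earlier precisely because $w_{e,v}$ must be vacant first. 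With that reordering your argument goes through.
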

\begin{proof}
Assume that we are given an NCL configuration $\sigma$ of $H$.
Let
\[
\mathcal{P} = \{ P_e \mid e \in E(H)\} \cup \{P_v \mid v \in V(H)\} \cup \{ P^{\mathrm{o}}_v \mid v \in V^{\mathrm{o}}(H)\}
\]
be a linkage of $G$ that is created from $\sigma$.
By the construction, each of the colored paths passes through all the vertices of the corresponding color.

As the first case, consider a flip of a weight-$1$ edge $f$ that connects two AND vertices $u,v$ from $(u,v)$ to $(v,u)$.
Then, the red path $P_f$ passes through $s_f, w_{f,u}, t_f$ of the weight-$1$ edge gadget in this order.
The path $P'_f$ is obtained by replacing it with $s_f, w_{f,v}, t_f$.

This is possible only if $w_{f,v}$ is not passes through by any other paths; we now observe this is indeed the case.
Suppose that $w_{f,v}$ is passed through by another path.
Then, such a path should be a green path $P_v$ in the AND vertex gadget $G_v$.
When $P_v$ passes through $w_{f,v}$, a unique weight-$2$ edge $e$ incident to $v$ is not oriented toward $v$ in $\sigma$ by the construction.
Therefore, after flipping to $(v,u)$ the total sum of the incoming edges to $v$ will be $1$. This is a contradiction to the property of an NCL configuration.

As the second case, consider a flip of a weight-$2$ edge $e$ that connects a vertex $u$ and an AND vertex $v$ from $(u,v)$ to $(v,u)$.
Then, the red path $P_e$ passes through $s_e, w_{e,u}, t_e$ of the weight-$2$ edge gadget $G_e$ in this order.
The path $P'_e$ is obtained by replacing it with $s_e, w_{e,v}, t_e$.

This is possible only if $w_{e,v}$ is not passes through by an other paths.
However, since $e$ is oriented toward $v$, a green path $P_v$ in the AND vertex gadget $G_v$ passes through $w_{e,v}$. 
Thus, in order to make $w_{e,v}$ vacant, we apply the following operations before the above transformation. 
Since $e$ can be flipped, after the flip the total sum of the incoming edges to $v$ should be at least two, which means that the other two weight-$1$ edges $f$, $g$ that are incident to $v$ should be directed toward $v$.
Therefore, $w_{v,f}$ and $w_{v,g}$ are not occupied by any other paths in $\mathcal{P}$.
This implies that by rerouting the green path $P_v$ to pass through $w_{v,f}, w_{v,g}$ before rerouting $P_e$ to $P'_e$ we obtain the linkage that corresponds to $\sigma'$.

As the third but final case, consider a flip of a weight-$2$ edge $e$ that connects a vertex $u$ and an OR vertex $v$ from $(u,v)$ to $(v,u)$.
Then, the red path $P_e$ passes through $s_e, w_{e,u}, t_e$ of the edge gadget $G_e$ in this order.
The $P'_e$ is obtained by replacing it with $s_e, w_{e,v}, t_e$.

This is possible only if $w_{e,v}$ is not passed through by any other paths.
If this is the case, there is no problem.
However, in some cases, $w_{e,v}$ is indeed passed through by another path.
Then, such a path should be a purple path $P_v$ in the OR vertex gadget $G_v$.
By construction, the edge $e$ is oriented toward $v$ in $\sigma$.
Since $e$ can be flipped, after the flip the total sum of the incoming edges to $v$ should be at least two, which means that one of the other two weight-$2$ edges $f$, $g$ that are incident to $v$ should be directed toward $v$.
Therefore, $w_{v,f}$ or $w_{v,g}$ is not occupied by any other paths in $\mathcal{P}$.
This implies that by rerouting the purple path $P_v$ to pass through $w_{v,f}$ or $w_{v,g}$ before rerouting $P_e$ to $P'_e$, possibly with two more reconfiguration steps involving the blue path $P^{\mathrm{o}}_v$, we obtain the linkage that corresponds to $\sigma'$.

In the second and third cases above, if $u$ is an AND vertex, then 
we reroute a green path $P_u$ in the AND vertex gadget $G_u$ to pass through $w_{e,u}$. 
Then, we obtain a linkage $\mathcal{P}'$ of $G$ that is created from $\sigma'$. 
\end{proof}

To argue the reverse direction of correspondence, we describe a canonical way of constructing an NCL configuration $\sigma$ of $H$ from a linkage $\mathcal{P}$ of $G$.

Let $e=\{u,v\}$ be an edge of $H$.
We note that a path connecting $s_e$ and $t_e$ cannot go through both of $w_{e,u}$ and $w_{e,v}$ due to the vertex-disjointness of the paths in $\mathcal{P}$.
If $P_e$ goes through $w_{e,u}$, then the edge $e$ is oriented toward $v$ in $\sigma$.
If $P_e$ goes through $w_{e,v}$, then the edge $e$ is oriented toward $u$ in $\sigma$.
We call the obtained orientation $\sigma$ the \emph{canonical NCL configuration obtained from $\mathcal{P}$}.

\begin{claim}
Let $\mathcal{P}$ be a linkage of $G$.
Then, the canonical NCL configuration $\sigma$ obtained from $\mathcal{P}$ is indeed an NCL configuration of $H$.
\end{claim}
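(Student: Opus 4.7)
The plan is to verify the NCL weight condition at every vertex of $H$ by analyzing the forced routing of the paths of $\mathcal{P}$ inside each vertex gadget. The argument splits into an AND case and an OR case; in both, the endpoints of the red paths $P_e$ block the interior of every edge gadget, which confines the other paths to their home gadget and lets us read off the orientation from which $w$-vertex is (or is not) used.

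As a preliminary step, I would argue uniformly that any path in $\mathcal{P}$ whose two endpoints lie in a vertex gadget $G_v$ must stay inside $G_v$. The only way out of $G_v$ is through a shared vertex $w_{e,v}$, but inside the edge gadget $G_e$ the only neighbors of $w_{e,v}$ are $s_e$ and $t_e$, both of which are endpoints of $P_e$ and hence blocked by vertex-disjointness. Thus each such $w$-vertex acts as a dead end for a path anchored in $G_v$, and the routing stays confined to $G_v$.

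For an AND vertex $v$ with weight-$2$ edge $e$ and weight-$1$ edges $f, g$, direct inspection of $G_v$ shows that the only simple $s_v$-$t_v$ paths in $G_v$ are $s_v, w_{e,v}, t_v$ and $s_v, w_{f,v}, w_{g,v}, t_v$, so $P_v$ is one of these two. In the first case, $P_v$ occupies $w_{e,v}$, so $P_e$ cannot; since $s_e$ and $t_e$ have degree $2$ with neighbors $\{w_{e,u}, w_{e,v}\}$ (writing $u$ for the other endpoint of $e$), this forces $P_e = s_e, w_{e,u}, t_e$, and by the canonical rule $e$ is oriented toward $v$, contributing incoming weight $2$. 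In the second case, $w_{f,v}$ and $w_{g,v}$ are both occupied by $P_v$, so the same argument applied to $f$ and $g$ shows that they are both oriented toward $v$, contributing total incoming weight $1 + 1 = 2$.

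For an OR vertex $v$ with weight-$2$ edges $e, f, g$, I would first note that $s^{\mathrm{o}}_v$ and $t^{\mathrm{o}}_v$ are degree-$2$ vertices whose neighbor sets both equal $\{c_v, d_v\}$, so $P^{\mathrm{o}}_v$ must traverse both $c_v$ and $d_v$. Consequently $P_v$ can use neither; combined with the fact that $s_v$'s neighbors in $G$ are $\{w_{f,v}, w_{g,v}, c_v\}$ and $t_v$'s neighbors are $\{w_{e,v}, d_v\}$, this forces $P_v$ to enter $t_v$ through $w_{e,v}$, and in particular $P_v$ uses $w_{e,v}$. Vertex-disjointness then forces $P_e = s_e, w_{e,u}, t_e$, so canonically $e$ is oriented toward $v$ and $v$ receives incoming weight $2$. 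The main technical obstacle I expect is the tight degree and blocking bookkeeping inside the OR gadget, where one must carefully exploit that $s_e, t_e$ are occupied by red paths to prevent $P_v$ or $P^{\mathrm{o}}_v$ from escaping through a $w$-vertex and thereby invalidating the routing analysis.
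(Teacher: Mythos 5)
Your preliminary observation and the AND-vertex analysis are correct and essentially match the paper (the paper states the same dichotomy for the green path; your argument that paths anchored in a vertex gadget cannot escape through a $w$-vertex because $s_e,t_e$ are endpoints of $P_e$ and hence blocked is a useful explicit justification).

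The OR-vertex case, however, has a genuine gap. You assert that $P^{\mathrm{o}}_v$ must traverse \emph{both} $c_v$ and $d_v$ because $s^{\mathrm{o}}_v$ and $t^{\mathrm{o}}_v$ are degree-$2$ vertices with neighbor set $\{c_v,d_v\}$. That inference is false: since \emph{both} $c_v$ and $d_v$ are adjacent to \emph{both} $s^{\mathrm{o}}_v$ and $t^{\mathrm{o}}_v$, the path $s^{\mathrm{o}}_v, c_v, t^{\mathrm{o}}_v$ (or $s^{\mathrm{o}}_v, d_v, t^{\mathrm{o}}_v$) is a perfectly valid length-two blue path that uses only one of $c_v, d_v$. (Indeed, the linkages created from an NCL configuration in the paper always have blue paths of length two.) Consequently, your next step --- that $P_v$ ``can use neither'' $c_v$ nor $d_v$ and is therefore forced to enter $t_v$ through $w_{e,v}$ --- is unsupported, and the conclusion that $e$ specifically is always oriented toward $v$ is too strong: for example, if $P^{\mathrm{o}}_v = s^{\mathrm{o}}_v, c_v, t^{\mathrm{o}}_v$, then $P_v = s_v, w_{f,v}, a_v, d_v, t_v$ is legal, uses $w_{f,v}$ but not $w_{e,v}$, and corresponds to $f$ (not $e$) being oriented toward $v$.

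The repairable form of the argument (which is what the paper does) is weaker but suffices: $P^{\mathrm{o}}_v$ uses at least one of $c_v,d_v$, so $P_v$ cannot use \emph{both}; since $s_v$'s neighbors are $\{w_{f,v},w_{g,v},c_v\}$ and $t_v$'s neighbors are $\{w_{e,v},d_v\}$, this forces $P_v$ to pass through at least one of $w_{e,v}, w_{f,v}, w_{g,v}$. Whichever one it uses, the corresponding weight-$2$ edge is oriented toward $v$ in $\sigma$, giving incoming weight at least two, which is all the OR constraint requires.
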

\begin{proof}
We check that $\sigma$ satisfies the weight requirement at every vertex $v$ of $H$.

Let $v$ be an OR vertex of $H$ that is incident to three weight-$2$ edges $e$, $f$ and $g$.
The corresponding OR vertex gadget is given as in \figurename~\ref{fig:planar_pspacecompl_overt1}.
Since the blue path in the gadget $G_v$ contains a vertex $c_v$ or $d_v$, 
the purple path in $G_v$ contains at least one of $w_{e,v}$, $w_{f,v}$ and $w_{g,v}$.
This shows that one of $w_{e,v}$, $w_{f,v}$ and $w_{g,v}$ is not occupied by any red paths.
Therefore, in the canonical NCL configuration $\sigma$, one of the edges $e$, $f$ and $g$ is oriented toward $v$.
This means that the weight requirement is satisfied at $v$.

Next, let $v$ be an AND vertex of $H$ that is incident to one weight-$2$ edge $e$ and two weight-$1$ edges $f$ and $g$.
The green path between $s_v$ to $t_v$ either goes through $w_{e,v}$ or $w_{f,v}, w_{g,v}$.
If it goes through $w_{e,v}$, then the red path $P_e$ does not go through $w_{e,v}$ due to vertex-disjointness.
In this case, the weight-$2$ edge $e$ is oriented toward $v$ in $\sigma$.
If the green path goes through $w_{f,v}$ and $w_{g,v}$, then the red paths $P_f$ and $P_g$ do not go through $w_{f,v}$ and $w_{g,v}$, respectively.
In this case, the two weight-$1$ edges $f$ and $g$ are oriented toward $v$ in $\sigma$.
Hence, in both cases, the weight requirement is satisfied at $v$.
\end{proof}

\begin{claim}
\label{clm:pspacecompl-planar-disjpaths-to-NCL}
Let $\mathcal{P}$ be a linkage of $G$ and $\sigma$ be the canonical NCL configuration obtained from $\mathcal{P}$.
Let a linkage $\mathcal{P}'$ be obtained by a single reconfiguration step from $\mathcal{P}$ and $\sigma'$ be the canonical NCL configuration obtained from $\mathcal{P}'$.
Then, $\sigma'$ can be obtained from $\sigma$ by at most one flip in $H$.
\end{claim}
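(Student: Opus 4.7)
The plan is to observe that the canonical NCL configuration depends only on the red paths in the linkage, not on the green, purple, or blue ones. Indeed, the orientation assigned to an edge $e = \{u,v\} \in E(H)$ is determined solely by which of $w_{e,u}, w_{e,v}$ the red path $P_e$ visits as its middle vertex. A single reconfiguration step from $\mathcal{P}$ to $\mathcal{P}'$ replaces exactly one path and keeps all others fixed, so the analysis splits neatly into two cases.

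First I would handle the case in which the reconfigured path is not a red path, that is, it is some $P_v$ for $v \in V(H)$ or some $P_v^{\mathrm{o}}$ for $v \in V^{\mathrm{o}}(H)$. Then every red path in $\mathcal{P}'$ coincides with the corresponding red path in $\mathcal{P}$, so by the definition of the canonical NCL configuration we have $\sigma' = \sigma$, and zero flips suffice.

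Next I would handle the case in which the reconfigured path is a red path $P_e$ for some edge $e = \{u,v\} \in E(H)$, replaced by some $P'_e$. For every other edge $e' \in E(H) \setminus \{e\}$, the red path in $\mathcal{P}'$ on $G_{e'}$ is unchanged, hence the orientation of $e'$ in $\sigma'$ equals its orientation in $\sigma$. It remains to inspect $e$ itself. By construction the edge gadget $G_e$ has vertex set $\{s_e, t_e, w_{e,u}, w_{e,v}\}$ and edges forming a $4$-cycle, so any $s_e$-$t_e$ path in $G_e$ has length exactly two and visits precisely one of $w_{e,u}, w_{e,v}$. Since $P'_e$ must avoid any vertex occupied by another path in $\mathcal{P}'$, this middle vertex is well-defined. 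If $P_e$ and $P'_e$ share the same middle vertex, then $\sigma' = \sigma$; otherwise the orientation of $e$ is reversed and $\sigma'$ is obtained from $\sigma$ by a single flip of $e$. The fact that $\sigma'$ remains a valid NCL configuration is immediate from the preceding claim, which asserts that the canonical NCL configuration associated with any linkage is valid.

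There is no substantial obstacle here: the argument is essentially a direct inspection of how the two combinatorial objects are linked, and the key point is simply that the mapping from linkages to orientations factors through the red paths alone, so a change confined to a single path can influence at most one edge's orientation.
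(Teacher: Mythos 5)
Your proposal is correct and takes essentially the same route as the paper: you split on whether the reconfigured path is red, observe that non-red reconfigurations leave the canonical configuration unchanged, and that reconfiguring $P_e$ affects only the orientation of $e$, yielding at most one flip. The only cosmetic difference is that you allow for $P_e$ and $P'_e$ to share a middle vertex, a case that in fact cannot arise (since the only $s_e$-$t_e$ paths available to a valid linkage are the two length-two paths through $w_{e,u}$ or $w_{e,v}$, so $P_e \neq P'_e$ forces distinct middle vertices), but this extra care does not affect the conclusion.
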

\begin{proof}
We distinguish cases by colors of paths that are reconfigured by one step.

First, consider a case where $\mathcal{P}$ and $\mathcal{P}'$ differ by one red path associated with an edge gadget $G_e$.
Assume that $P_e$ is reconfigured to $P'_e$.
Without loss of generality, assume that $P_e$ passes through $s_e, w_{e,u}, t_e$, and $P'_e$ passes through $s_e, w_{e,v}, t_e$ in this order.
Then, to obtain $\sigma'$ from $\sigma$ we just need to flip the direction of the edge $e$. 

Next, consider a case where $\mathcal{P}$ and $\mathcal{P}'$ differ by one path that is not a red path.
Then, the canonical NCL configuration $\sigma'$ obtained from $\mathcal{P}'$ is identical to $\sigma$.
Therefore, we need no flip to construct $\sigma'$.
\end{proof}

Let $\sigma'$ be the canonical NCL configuration in Claim \ref{clm:pspacecompl-planar-disjpaths-to-NCL}, and $\mathcal{P}''$ be the linkage created from $\sigma'$.
Note that $\mathcal{P}''$ does not have to be identical to $\mathcal{P}'$.
However, $\mathcal{P}''$ can be obtained from $\mathcal{P}'$ by 
$O(|V(H)|)$ reconfiguration steps. 
This is because the situations in edge gadgets are identical in $\mathcal{P}'$ and $\mathcal{P}''$, and reconfiguration steps would be needed only around vertex gadgets.

As a summary, we have proved that the reduction is sound, complete, and polynomially bounded.
See \figurename~\ref{fig:planar_pspacecompl2}.
We now complete the proof of Theorem \ref{thm:pspacecompl-planar}.
Let $\sigma$ and $\tau$ be two NCL configurations.
Then, define two linkages $\mathcal{P}$ and $\mathcal{Q}$ as they correspond to $\sigma$ and $\tau$, respectively.
Suppose that we may transform $\sigma$ to $\tau$ by a sequence of flips in such a way that all the intermediate orientations are NCL configurations.
Then, the correspondence above gives a sequence of reconfiguration steps 
to transform $\mathcal{P}$ to $\mathcal{Q}$ so that all the intermediate families of paths are vertex-disjoint.
On the other hand, suppose that we may transform $\mathcal{P}$ to $\mathcal{Q}$ by reconfiguration steps.
Then, the correspondence above gives a sequence of flips to transform $\sigma$ to $\tau$ so that all the intermediate orientations are NCL configurations.
\end{proof}

\begin{figure}
    \centering
    \includegraphics[width=.7\textwidth]{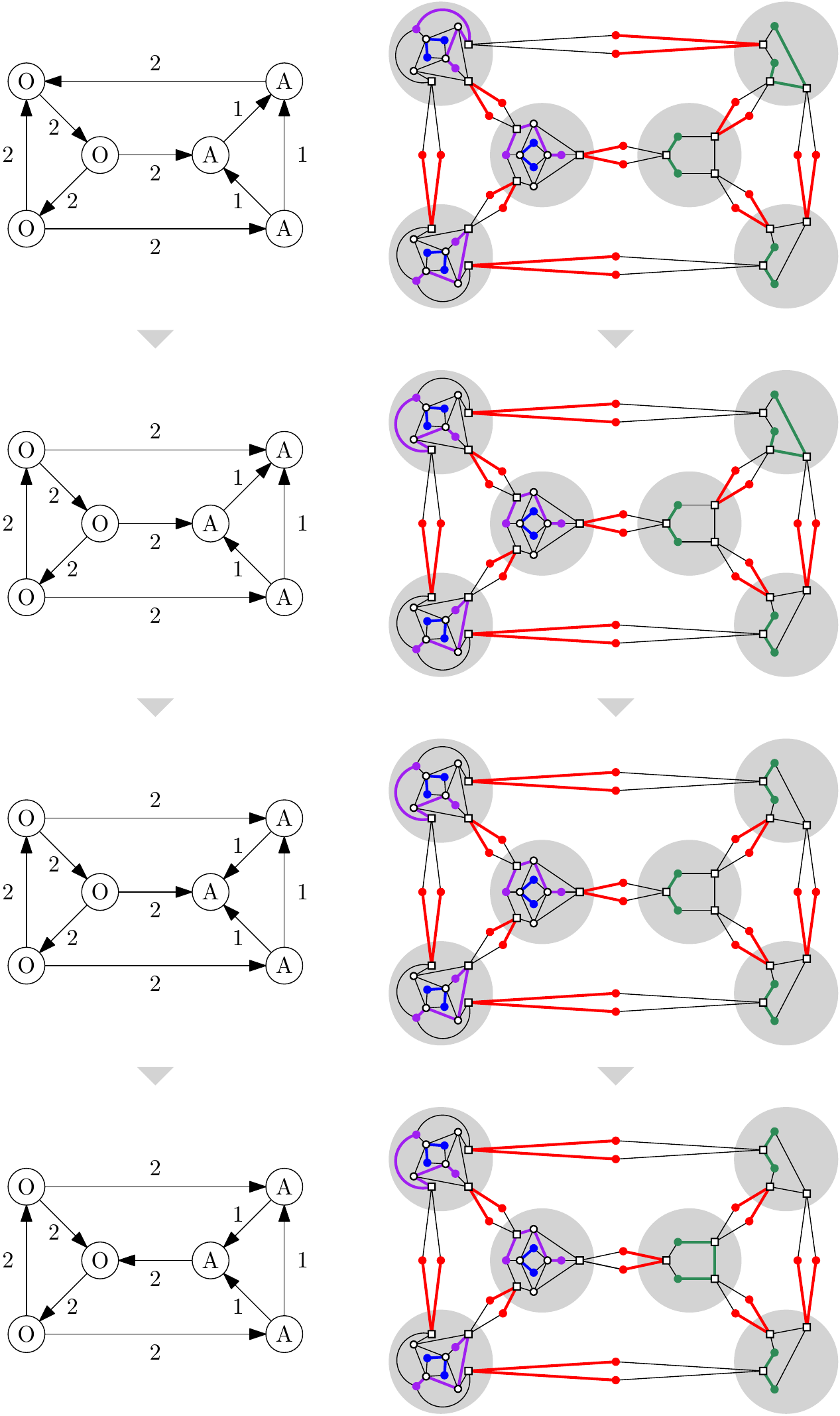}
    \caption{The correspondence between flips in NCL configurations and reconfiguration steps for vertex-disjoint paths.}
    \label{fig:planar_pspacecompl2}
\end{figure}

\section{Concluding Remarks}

We leave several open problems for future research.
We proved that \DPR 
can be solved in polynomial time when the problem is restricted to the two-face instances.
On the other hand, we do not know whether \DPR in planar graphs can be solved in polynomial time for fixed $k$, and even when $k=2$, if we drop the requirement that inputs are two-face instances.

We did not try to minimize the number of reconfiguration steps when a reconfiguration sequence exists.
It is an open problem whether a shortest reconfiguration sequence can be found in polynomial time for \DPR restricted to planar two-face instances.

A natural extension of our studies is to consider a higher-genus surface.
As a preliminary result, in Appendix \ref{sec:surfaceappendix}, we give a proof (sketch) to show that when the number $k$ of curves is two, the reconfiguration is \emph{always} possible for any connected orientable closed surface $\Sigma_g$ of genus $g \geq 1$.
Note that this result does not refer to graphs embedded on $\Sigma_g$, but only refers to the case when curves can pass through any points on the surface.
It is not clear what we can say for \DPR for graphs embedded on $\Sigma_g$, $g \geq 1$.


\appendix

\section{Curves on Closed Surfaces}
\label{sec:surfaceappendix}

Let $\Sigma_g$ denote a connected orientable closed surface of genus $g$.
For instance, $\Sigma_0$ is a 2-sphere $S^2$ and $\Sigma_1$ is a torus.
In contrast to the 2-sphere case, when $k=2$, the reconfiguration is always possible on $\Sigma_g$ for $g\geq 1$. 

\begin{theorem}
\label{thm:genus>0}
Let $\mathcal{P}=(P_1,P_2)$ and $\mathcal{Q}=(Q_1,Q_2)$ be linkages on $\Sigma_g$ for $g\geq 1$.
Then, $\mathcal{P}$ is always reconfigurable to $\mathcal{Q}$.
\end{theorem}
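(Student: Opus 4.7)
The plan is to reduce the theorem to the existence of a single auxiliary linkage $\mathcal{R} = (R_1, R_2)$ on $\Sigma_g$ satisfying $R_1 \cap P_2 = \emptyset$, $R_1 \cap R_2 = \emptyset$, and $R_2 \cap Q_1 = \emptyset$. Given such $\mathcal{R}$, the chain
\[
(P_1, P_2) \leftrightarrow (R_1, P_2) \leftrightarrow (R_1, R_2) \leftrightarrow (Q_1, R_2) \leftrightarrow (Q_1, Q_2)
\]
is a valid reconfiguration: each step changes exactly one curve, and every consecutive pair forms a linkage by these three disjointness conditions together with $P_1 \cap P_2 = \emptyset$ and $Q_1 \cap Q_2 = \emptyset$. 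Thus it suffices to prove that such $\mathcal{R}$ exists.

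To build $R_1$, I would first choose a simple arc $R_1^0$ from $s_1$ to $t_1$ whose interior avoids $P_2 \cup Q_1$. Cutting $\Sigma_g$ along $Q_1$ yields a connected genus-$g$ surface with a single boundary circle through $s_1$ and $t_1$; removing the interior arc $P_2$ from this cut surface preserves connectedness, and so a proper simple arc from $s_1$ to $t_1$ avoiding $P_2$ exists. Then $\gamma := R_1^0 \cup Q_1$ is a simple closed curve on $\Sigma_g$ passing through $s_1, t_1$ and avoiding $s_2, t_2$. If $\gamma$ is non-separating on $\Sigma_g$, the complement $\Sigma_g \setminus \gamma$ is connected and contains both $s_2$ and $t_2$, so any simple arc between them in this complement provides $R_2$.

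The essential use of the hypothesis $g \geq 1$ is in arranging that $\gamma$ be non-separating. Assuming $P_2 \cap Q_1 = \emptyset$, cutting $\Sigma_g$ along these two disjoint interior arcs leaves a connected surface of genus $g \geq 1$ with two boundary circles, which therefore contains a non-separating simple closed curve $\delta$ of $\Sigma_g$ disjoint from $P_2 \cup Q_1$. If $\gamma$ is already non-separating, take $R_1 = R_1^0$; otherwise perform a finger-move (band-sum) of $\delta$ into $R_1^0$ along a short connecting arc in $\Sigma_g \setminus (P_2 \cup Q_1)$, producing a simple arc $R_1$ still disjoint from $P_2$ and interior-disjoint from $Q_1$, but whose $\Z/2\Z$-homology class is shifted by $[\delta] \neq 0$. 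The resulting $\gamma = R_1 \cup Q_1$ is then non-separating, $R_2$ is obtained as above, and $\mathcal{R}$ is constructed. For the sphere $(g=0)$ no such $\delta$ exists, matching the obstruction provided by Theorem~\ref{thm:genus0}.

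The main obstacle will be handling the case $P_2 \cap Q_1 \neq \emptyset$: then $P_2 \cup Q_1$ is a graph rather than two disjoint arcs, and both the existence of $R_1^0$ and the Euler-characteristic accounting for the cut surface require a finer analysis tracking the combinatorics of the intersection points. Verifying that enough genus remains in $\Sigma_g$ minus this graph to contain a non-separating $\delta$, and that the band-sum path can be chosen disjoint from $R_1^0$ as well as from $P_2 \cup Q_1$, is the sole non-routine technical point; once this is settled, the four-step sequence above yields the desired reconfiguration and the theorem follows.
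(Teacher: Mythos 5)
Your overall strategy is genuinely different from the paper's. The paper's proof sketch is an induction on the number of transverse intersection points of $\mathcal{P}$ and $\mathcal{Q}$: cut $\Sigma_g$ along $\mathcal{P}\cup\mathcal{Q}$, eliminate bigons or reroute through a face with $\ge 6$ boundary edges, and in the remaining case (all faces quadrilaterals) use an Euler-characteristic count to find a face of positive genus, which forces $g=0$ if none exists. Your approach is instead a one-shot construction: build an auxiliary linkage $\mathcal{R}=(R_1,R_2)$ with $R_1\cap P_2=\emptyset$, $R_1\cap R_2=\emptyset$, $R_2\cap Q_1=\emptyset$, and chain through it in four moves. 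This is appealing because, if it worked, it would give a constant-length reconfiguration sequence rather than one whose length depends on the number of intersections. The homological surgery idea (band-sum with a non-separating $\delta$ disjoint from $P_2\cup Q_1$ to kill the $\Z/2$-class of $R_1^0\cup Q_1$) is also a clean use of the hypothesis $g\ge 1$, and it is correct for the special case $P_2\cap Q_1=\emptyset$ that you analyze in detail.

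However, the gap you flag at the end is not a routine loose end but the crux of the theorem, and as stated your argument does not cover the general case. When $P_2\cap Q_1\neq\emptyset$ two things can fail, and both failures are structural rather than bookkeeping issues. First, the arc $R_1^0$ may not exist: $\Sigma_g\setminus(P_2\cup Q_1)$ can be disconnected, and $s_1$ and $t_1$, being degree-one vertices of the graph $P_2\cup Q_1$, are each accessible from only one complementary region; nothing guarantees it is the same region. Second, even if $R_1^0$ exists, the non-separating curve $\delta$ disjoint from $P_2\cup Q_1$ may not exist: your Mayer-Vietoris surjectivity argument for $H_1(\Sigma_g\setminus(P_2\cup Q_1);\Z/2)\to H_1(\Sigma_g;\Z/2)$ relies on $P_2\cup Q_1$ being contractible (two disjoint arcs, $H_1=0$); once the two arcs cross $m\ge 2$ times the graph carries $H_1$ of rank $m-1$, the long exact sequence no longer forces surjectivity, and the genus of $\Sigma_g$ can be "absorbed" by the complement of a neighbourhood of the graph in a way that leaves no non-separating $\delta$ avoiding it. Since the theorem's entire content is that $g\ge 1$ rescues reconfigurability regardless of how $\mathcal{P}$ and $\mathcal{Q}$ intersect, an argument restricted to the disjoint case is missing the main difficulty. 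The paper avoids this precisely by inducting on intersections and never needing a global auxiliary curve disjoint from an arbitrary $P_2\cup Q_1$; to salvage your approach you would need either to first reduce to the disjoint case by bigon moves (at which point you are essentially running the paper's induction) or to prove connectivity of the relevant region and existence of $\delta$ directly from the topology of a genus-$g$ surface cut along a graph, which does not appear to follow from the Euler-characteristic considerations you invoke.
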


\begin{proof}[Proof Sketch]
A small perturbation allows us to assume that the curves intersect transversally.
The proof is by induction on the number of intersection points of $\mathcal{P}$ and $\mathcal{Q}$, say $n$.
The case $n=4$ is obvious since they intersect only at the endpoints.
Suppose that the statement holds up to $n-1$.
We first cut the surface $\Sigma_g$ along $\mathcal{P}$ and $\mathcal{Q}$.
Let $S$ be a connected component.
Then, $\partial S$ is a disjoint union of alternating cycles of edges derived from $\mathcal{P}$ and $\mathcal{Q}$.
If $\partial S$ has a $2$-cycle, then one can eliminate this cycle by a single step.
On the other hand, if $\partial S$ has at least six edges, then there are two edges derived from $P_i$ for some $i=1,2$, and thus one can find another $\mathcal{P}'$ such that the number of intersections between the curves in $\mathcal{P}'$ and $\mathcal{Q}$ is less than $n$.

Therefore, we shall consider the case where $\partial S$ is a $4$-cycle and show that there is a connected component $S$ such that the genus of the closed surface obtained by capping its boundary with disks is at least $1$.
Then, one finds a desired $\mathcal{P}'$ after some steps.
Assume that there is no such an $S$, that is, all connected components are disks, and hence we have a cell decomposition of $\Sigma_g$.
Let $m$ be the number of $2$-cells, namely disks.
Then, the numbers of $1$- and $0$-cells are $2m$ and $m+2$, respectively.
It follows that the Euler characteristic of $\Sigma_g$ is $(m+2)-2m+m=2$, and thus $g=0$ (see \cite[Section~1.1.1]{FaMa12} for example).
This is a contradiction.
\end{proof}

\section{$\mu(P_i, Q_j)$ Takes the Same Value}
\label{sec:2facesamevalue}

In this section, we only consider piecewise smooth curves and paths on a plane. 
For an oriented curve $C$, let $\overline{C}$ denote the curve with the opposite orientation. 
We first summarize basic properties of $\mu$ that follow from definition. 
\begin{enumerate}
\item[(A1)] $\mu(C_1,C_2)=-\mu(C_2,C_1)=-\mu(C_1,\overline{C}_2)$ for any oriented curves $C_1$ and $C_2$. 
\item[(A2)] Suppose that $C_1 \cup C_2$ is an oriented curve obtained by concatenating two oriented curves $C_1$ and $C_2$, and let $C_3$ be another oriented curve. 
Then, $\mu(C_1 \cup C_2, C_3)= \mu(C_1,C_3) + \mu(C_2,C_3)$ and $\mu(C_3, C_1 \cup C_2)= \mu(C_3,C_1) + \mu(C_3,C_2)$. 
\item[(A3)] $\mu(C_1,C_2)=0$ for any oriented closed curves $C_1$ and $C_2$ (by the Jordan curve theorem). 
\end{enumerate}

By using these properties, we show that $\mu(P_i, Q_j)$ takes the same value in a two-face instance.

\begin{lemma}
\label{lem:2facesamevalue}
In a two-face instance $(G, \mathcal{P}, \mathcal{Q})$ of \DPR, 
$\mu(P_i, Q_j)$ takes the same value for any distinct $i, j \in [k]$. 
\end{lemma}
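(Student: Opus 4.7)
My plan is to deduce the lemma from the three basic properties (A1), (A2), (A3) via two moves, using the closed curves $\Gamma_i := P_i \cdot \overline{Q_i}$ obtained by concatenating $P_i$ with the reverse of $Q_i$ (which is a genuine loop in the plane since the two share the endpoints $s_i, t_i$).

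The first move establishes a symmetry $\mu(P_i, Q_j) = \mu(P_j, Q_i)$ for all distinct $i, j$. Applying (A3) to the two closed curves $\Gamma_i$ and $\Gamma_j$ gives $\mu(\Gamma_i, \Gamma_j) = 0$. Expanding this by bilinearity (A2) into four terms, using (A1) to peel off the overlines, and using $P_i \cap P_j = \emptyset = Q_i \cap Q_j$ (so $\mu(P_i, P_j) = 0 = \mu(Q_i, Q_j)$), one collects
\[
0 = \mu(\Gamma_i, \Gamma_j) = \mu(P_j, Q_i) - \mu(P_i, Q_j),
\]
which is the desired symmetry. This already settles the case $k = 2$, where the only two values $\mu(P_1, Q_2)$ and $\mu(P_2, Q_1)$ must therefore agree.

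The second move, needed for $k \geq 3$, shows that for a fixed $i$ the value $\mu(P_i, Q_j)$ is independent of $j \in [k] \setminus \{i\}$. By (A2) and (A1), $\mu(\Gamma_i, Q_j) = \mu(P_i, Q_j) - \mu(Q_i, Q_j) = \mu(P_i, Q_j)$. A standard consequence of (A2) and (A3)—close up any two arcs with the same pair of endpoint-components into a loop and apply (A3)—is that for a closed curve $\Gamma$ in the plane and any arc $\eta$ with $\partial \eta$ off $\Gamma$, the value $\mu(\Gamma, \eta)$ depends only on the two components of $\R^2 \setminus \Gamma$ containing $\partial \eta$. It therefore suffices to place all the $s_j$ (with $j \neq i$) in a single component of $\R^2 \setminus \Gamma_i$, and likewise the $t_j$. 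Because the graph $G$ is disjoint from the open faces $S$ and $T$, so is $\Gamma_i$; hence $S \subseteq \R^2 \setminus \Gamma_i$, and every $s_j$ lies outside $\Gamma_i$ (the $P$'s and $Q$'s within each linkage are vertex-disjoint). Any two $s_j, s_{j'}$ can be joined by a short curve that steps from $s_j$ into $S$, traverses $S$, and emerges at $s_{j'}$; this curve is contained in $S \cup \{s_j, s_{j'}\} \subseteq \R^2 \setminus \Gamma_i$. The identical argument through $T$ handles the $t_j$.

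Combining the two moves, set $v_i := \mu(P_i, Q_j)$ for any $j \neq i$, which is well-defined by the second move; the symmetry then gives $v_i = \mu(P_i, Q_j) = \mu(P_j, Q_i) = v_j$ for any distinct $i, j$, so the $v_i$ collapse to a common value, proving the lemma. The hard part will be the second move: the tempting argument that "$\partial S \setminus \{s_i\}$ is a single arc disjoint from $\Gamma_i$" is wrong, because $P_i$ or $Q_i$ may traverse non-terminal vertices of $\partial S$, so $\Gamma_i$ can touch $\partial S$ at many points. The clean fix is to route the connecting curve through the open face $S$ (and analogously through the outer face $T$), which is guaranteed to be disjoint from $\Gamma_i \subseteq G$.
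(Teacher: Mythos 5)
Your proof is correct, and it genuinely differs from the paper's in its organization. The paper introduces two auxiliary terminal pairs $(s_{k+1}, t_{k+1})$ and $(s_{k+2}, t_{k+2})$ together with extra disjoint curves $P_{k+1}, P_{k+2}, Q_{k+1}, Q_{k+2}$, and shows $\mu(P_i, Q_j) = \mu(P_{k+1}, Q_j) = \mu(P_{k+1}, Q_{k+2})$, each equality coming from applying (A2)/(A3) to a single explicit closed curve built from two of the $P$'s and short connector arcs $C_s \subset S$, $C_t \subset T$. You instead split the lemma into a symmetry $\mu(P_i, Q_j) = \mu(P_j, Q_i)$ (obtained by intersecting the two closed curves $\Gamma_i = P_i\cdot\overline{Q_i}$ and $\Gamma_j = P_j\cdot\overline{Q_j}$, the cross-terms $\mu(P_i,P_j)$ and $\mu(Q_i,Q_j)$ vanishing by disjointness) and an independence-of-$j$ statement (obtained by noting $\mu(\Gamma_i, Q_j)=\mu(P_i, Q_j)$ and that $\mu(\Gamma_i,\cdot)$ only depends on which components of $\R^2 \setminus \Gamma_i$ the endpoints sit in, all $s_j$ ($j\neq i$) being reachable from one another within the open face $S$, all $t_j$ within $T$). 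Both arguments ultimately hinge on the same two facts --- curves through the open faces $S,T$ are disjoint from the linkages, and the algebraic intersection number of closed plane curves vanishes --- but your version avoids introducing auxiliary curves and isolates a cleaner invariance principle. The terminal-disjointness checks needed to make each $\mu$ well-defined ($s_j, t_j \notin P_i \cup Q_i$ for $j \neq i$, etc.) all go through, and your cautionary remark about not using $\partial S \setminus \{s_i\}$ (since $\Gamma_i$ can touch $\partial S$ at non-terminal vertices) is both correct and exactly the point the paper sidesteps by routing $C_s$ through the open face. One small bookkeeping observation: your ``second move'' applied to both $\mathcal{P}$ and $\mathcal{Q}$ (with roles swapped) already finishes the case $k\geq 3$ without the symmetry step; the first move is really only needed to handle $k=2$, which you correctly flag.
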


\begin{proof}
Let $s_{k+1}$ and $s_{k+2}$ be points on $\bd{S}$ such that $s_k, s_{k+1}, s_{k+2}$, and $s_1$ lie on $\partial S$ clockwise in this order. 
Similarly, let $t_{k+1}$ and $t_{k+2}$ be points on $\bd{T}$ such that $t_k, t_{k+1}, t_{k+2}$, and $t_1$ lie on $\partial T$ clockwise in this order. 
Then, there exist simple curves $P_{k+1}$ from $s_{k+1}$ to $t_{k+1}$ and $P_{k+2}$ from $s_{k+2}$ to $t_{k+2}$ 
such that $\{P_1, \dots , P_k, P_{k+1}, P_{k+2}\}$ forms a linkage in $\R^2 \setminus (S \cup T)$. 
Similarly, let $Q_{k+1}$ and $Q_{k+2}$ be simple curves such that $\{Q_1, \dots , Q_k, Q_{k+1}, Q_{k+2}\}$ forms a linkage in $\R^2 \setminus (S \cup T)$. 
In what follows, we show that $\mu(P_i, Q_j) = \mu(P_{k+1}, Q_{k+2})$ for any distinct $i,j \in [k]$. 

Let $i, j \in [k]$ with $i \neq j$. 
Let $C_s$ be a curve in $S$ from $s_{k+1}$ to $s_i$, and 
let $C_t$ be a curve in $T$ from $t_i$ to $t_{k+1}$. 
Since $P_i \cup C_t \cup \overline{P_{k+1}} \cup C_s$ and $\overline{P_j} \cup Q_j$ are oriented closed curves, by (A3), 
we obtain $\mu(P_i \cup C_t \cup \overline{P_{k+1}} \cup C_s, \overline{P_j} \cup Q_j) = 0$.  
Since $\overline{P_j} \cap (P_i \cup C_t \cup \overline{P_{k+1}} \cup C_s) = \emptyset$ and $Q_j \cap (C_s \cup C_t) = \emptyset$, 
this together with (A2) shows that $\mu(P_i, Q_j)  + \mu(\overline{P_{k+1}}, Q_j) = 0$. 
By (A1), we obtain $\mu(P_i, Q_j)  = \mu(P_{k+1}, Q_j)$. 

By the same argument, we can show that $\mu(P_{k+1}, Q_j) = \mu(P_{k+1}, Q_{k+2})$. 
Therefore, we obtain $\mu(P_i, Q_j)  =\mu(P_{k+1}, Q_{k+2})$ for any distinct $i, j \in [k]$, 
which completes the proof. 
\end{proof}

\section{Variation of the Reduction for Theorem \ref{thm:pspacecompl-twopaths}.}
\label{sec:stDPR-bandwidth}

In this section, we briefly sketch the modification of the proof for Theorem \ref{thm:pspacecompl-twopaths} to show that the $\PSPACE$-completeness of \stDPR also holds for graphs of bounded bandwidth.

The basic line of reduction is identical to the proof for Theorem \ref{thm:pspacecompl-twopaths}.
We require the constructed graph $G$ to have bounded bandwidth if the AND/OR graph $H$ in a given instance of the NCL reconfiguration has bounded bandwidth.

To ensure that $G$ has bounded bandwidth, we only change the construction of two paths, the red path $P_1$ and the blue path $P_2$.
In the original proof of Theorem \ref{thm:pspacecompl-twopaths}, we have a choice of the ordering along which $P_1$ goes through all the AND vertex gadgets and weight-$2$ edge gadgets, and $P_2$ goes through all the AND vertex gadgets, OR vertex gadgets and weight-$1$ edge gadgets.
We will exploit this freedom.

Since $H$ has bounded bandwidth, there exists an injective map $\pi\colon V(H) \to \mathbb{Z}$ such that $|\pi(u) - \pi(v)| \leq c$ for some constant $c$.
Let $H^{\circ}$ be the graph that is obtained from $H$ by subdividing each edge of $H$, i.e., replacing each edge of $H$ by a path of length two.
We construct an injective map $\pi^\circ\colon V(H^\circ) \to \mathbb{Z}$ as follows.
For each vertex $v \in V(H)$, set $\pi^\circ(v) = 4\pi(v)$.
For a vertex $w \in V(H^\circ)$ that is used for subdividing an edge $\{u,v\} \in E(H)$, where $\pi(u) < \pi(v)$, we set $\pi^\circ(w)$ as $\pi^\circ(w) \in \{\pi(u)+1, \pi(u)+2, \pi(u)+3\}$.
This is possible since the AND/OR graph $H$ is $3$-regular.
Observe that the bandwidth of $\pi^\circ$ is at most $4c+3$.
For more information on the bandwidth of graphs obtained by graph operations, see \cite{DBLP:journals/dm/ChvatalovaO86}.

Now, we replace each vertex of $H^{\circ}$ by the corresponding gadget given in the proof of Theorem \ref{thm:pspacecompl-twopaths}.
To ease the presentation, we add one isolated blue vertex to each weight-$2$ edge gadget, one isolated red vertex to each weight-$1$ edge gadget, and one isolated red vertex to each OR vertex gadget so that each gadget contains at least one red and at least one blue vertices.
Note that each gadget is of constant order (at most eight), even after adding isolated vertices.
Therefore, this replacement may increase the bandwidth at most by the factor of eight.
Denote by $V$ be the constructed vertex set and by $\pi'$ the implied injective map of bandwidth at most $8\cdot (4c+3) + 7$, 
where $\pi'$ satisfies that $8 \pi^\circ (w) \le \pi'(v) \le 8 \pi^\circ (w) + 7$ if $v \in V$ is contained in the gadget corresponding to $w \in V(H^\circ)$.

We insert $s$ and $t$ to $V$ and set $\pi'(s) = \min\{\pi'(v)\}-1$, $\pi'(t) = \max\{\pi'(v)\}+1$.
To wire the red path $P_1$, we follow the increasing order of the values of $\pi'$.
Namely, the path $P_1$ starts at $s$, visits the red vertices in the increasing order of $\pi'$ (with the routing within each gadget as the proof of Theorem \ref{thm:pspacecompl-twopaths}), and terminates at $t$.
Since each gadget contains a red vertex, each edge $\{x,y\}$ of $P_1$ satisfies 
$|\pi'(x)-\pi'(y)| \leq 15$. 
Wiring the blue path $P_2$ is similarly done.
This completes the whole reduction.

The constructed instance has bounded bandwidth, and the correctness is immediate from the proof of Theorem \ref{thm:pspacecompl-twopaths} since we only makes use of the freedom of ordering for the construction of $P_1$ and $P_2$.

Thus, \stDPR is $\PSPACE$-complete even when $k=2$ and $G$ has bounded bandwidth and maximum degree four.

\end{document}